\documentclass[11pt]{article}





\usepackage{amsmath}
\usepackage{amsthm}
\usepackage{longtable}
\usepackage{multirow}
\usepackage{amssymb}
\usepackage{algorithm}
\usepackage{subfig}
\usepackage{color}
\usepackage[english]{babel}
\usepackage{graphicx}
\usepackage{wrapfig,epsfig}
\usepackage{epstopdf}
\usepackage{url}
\usepackage{graphicx}
\usepackage{color}
\usepackage{epstopdf}
\usepackage{algpseudocode}
\usepackage{scrextend}
\usepackage[T1]{fontenc}
\usepackage{bbm}
\usepackage{comment}


\usepackage{tikz}
\usepackage{hyperref}  
\hypersetup{colorlinks=true,citecolor=blue,linkcolor=blue} 
\usetikzlibrary{arrows}
\usepackage[margin=1in]{geometry}
\graphicspath{{./figs/}}


\newcommand{\pr}[1]{\text{\normalfont Pr}\normalfont\lbrack #1 \rbrack} 
\newcommand{\ex}[1]{\mathbb{E}\normalfont\lbrack #1 \rbrack}
\newcommand{\bpr}[1]{\text{\normalfont Pr}\normalfont \left[#1 \right]} 
\newcommand{\bex}[1]{\mathbb{E}\normalfont \left[#1 \right]}
\newcommand{\1}{\mathbf{1}}


\newtheorem{theorem}{Theorem}[section]
\newtheorem{lemma}[theorem]{Lemma}
\newtheorem{definition}[theorem]{Definition}

\newtheorem{proposition}[theorem]{Proposition}
\newtheorem{corollary}[theorem]{Corollary}

\newtheorem{fact}[theorem]{Fact}

\newtheorem{claim}[theorem]{Claim}

\newcommand{\tail}[2]{#1_{\overline{[#2]}}}

\newcommand{\wh}{\widehat}
\newcommand{\wt}{\widetilde}
\newcommand{\ov}{\overline}
\newcommand{\eps}{\epsilon}

\newcommand{\R}{\mathbb{R}}

\renewcommand{\varepsilon}{\epsilon}
\renewcommand{\tilde}{\wt}
\renewcommand{\hat}{\wh}

\DeclareMathOperator*{\median}{median}

\DeclareMathOperator{\OPT}{OPT}

\DeclareMathOperator{\poly}{poly}

\DeclareMathOperator{\nnz}{nnz}

\DeclareMathOperator{\rank}{rank}
\DeclareMathOperator{\trank}{trank}

\DeclareMathOperator{\vect}{vec}

\DeclareMathOperator{\ati}{ati}
\DeclareMathOperator{\sym}{mon}

\DeclareMathOperator{\reg}{reg}

\makeatletter
\newcommand*{\RN}[1]{\expandafter\@slowromancap\romannumeral #1@}
\makeatother

\usepackage{lineno}



\date{}
\title{Optimal Sketching for Kronecker Product Regression \\and Low Rank Approximation\thanks{A preliminary version of this paper appeared in NeurIPS 2019.}}

\author{
  Huaian Diao\thanks{\texttt{hadiao@nenu.edu.cn}. 
  Northeast Normal University.}
  \quad
  Rajesh Jayaram\thanks{\texttt{rkjayara@cs.cmu.edu}.
  Carnegie Mellon University. Rajesh Jayaram would like to thank support from the Office of Naval Research (ONR) grant N00014-18-1-2562. This work was partly done while Rajesh Jayaram was visiting the Simons Institute for the Theory of Computing.}
  \quad
  Zhao Song\thanks{\texttt{zhaosong@uw.edu}.
  University of Washington. This work was partly done while Zhao Song was visiting the Simons Institute for the Theory of Computing.}
  \quad
  Wen Sun\thanks{\texttt{sun.wen@microsoft.com}.
  Microsoft Research New York.}
  \quad
  David P. Woodruff\thanks{\texttt{dwoodruf@cs.cmu.edu}.
  Carnegie Mellon University. David Woodruff would like to thank support from the Office of Naval Research (ONR) grant N00014-18-1-2562. This work was also partly done while David Woodruff was visiting the Simons Institute for the Theory of Computing.}
}

\begin{document}

\begin{titlepage}
  \maketitle
  \begin{abstract}

We study the Kronecker product regression problem, in which the design matrix is a Kronecker product of two or more matrices. Formally, for $p \in [1,2]$, given $A_i \in \R^{n_i \times d_i}$ for $i=1,2,\dots,q$  where $n_i \gg d_i$ for each $i$, and $b \in \R^{n_1 n_2 \cdots n_q}$, the goal is to find $x \in \R^{d_1 \cdots d_q}$ such that for some $\eps>0$ we have
\[	\| \left( A_1 \otimes A_2 \otimes \cdots \otimes A_q\right)x - b\|_p \leq (1+\epsilon) \min_{x'} \|\left( A_i \otimes A_2 \otimes \cdots \otimes A_q\right)x' - b \|_p \]
 Recently, Diao, Song, Sun, and Woodruff (AISTATS, 2018) gave an algorithm which solves the above problem in time faster than forming the Kronecker product $ A_i \otimes A_2 \otimes \cdots \otimes A_q\in \R^{n_1 \cdots n_q \times d_1 \cdots d_q}$. Specifically, for $p=2$ they achieve a running time of $O(\sum_{i=1}^q   \texttt{nnz}(A_i) + \texttt{nnz}(b))$, where $ \texttt{nnz}(A_i)$ is the number of non-zero entries in $A_i$. Note that $\texttt{nnz}(b)$ can be as large as $\Theta(n_1 \cdots n_q)$. For $p=1,$ $q=2$ and $n_1 = n_2$, they achieve a worse bound of $O(n_1^{3/2} \text{poly}(d_1d_2) + \texttt{nnz}(b))$. 

In this work, we provide significantly faster algorithms. For $p=2$, our running time is $O(\sum_{i=1}^q   \texttt{nnz}(A_i) )$, which has no dependence on $\texttt{nnz}(b)$.  For $p<2$, our running time is $O(\sum_{i=1}^q   \texttt{nnz}(A_i) + \texttt{nnz}(b))$, which matches the prior best running time for $p=2$.  We also consider the related all-pairs regression problem, where given $A \in \R^{n \times d}, b \in \R^n$,  we want to solve $\min_{x \in \R^d} \|\bar{A}x - \bar{b}\|_p$, where $\bar{A} \in \R^{n^2 \times d}, \bar{b} \in \R^{n^2}$ consist of all pairwise differences of the rows of $A,b$. We give an $O(\texttt{nnz}(A))$ time algorithm for $p \in[1,2]$, improving the $\Omega(n^2)$ time required to form $\bar{A}$. Finally, we initiate the study of Kronecker product low rank and low $t$-rank approximation, where the goal is to output a low rank (or low $t$-rank) approximation to a Kronecker product matrix $\mathcal{A} = A_1 \otimes A_2 \otimes \cdots \otimes A_q$. For input $A_1 , A_2 , \dots , A_q$, we give algorithms which run in $O(\sum_{i=1}^q  \texttt{nnz}(A_i))$ time, which is much faster than computing $\mathcal{A}$.

  \end{abstract}
  \thispagestyle{empty}
\end{titlepage}

\tableofcontents
 \newpage

			\section{Introduction}


In the $q$-th order Kronecker product regression problem, one is 
given matrices $A_1,A_2,\dots,A_q$, where $A_i \in \mathbb{R}^{n_i \times d_i}$,
as well as a vector $b \in \mathbb{R}^{n_1 n_2 \cdots n_q}$, and the goal is to obtain a solution to the optimization problem:
$$\min_{x \in \R^{d_1 d_2 \cdots d_q}} \|(A_1 \otimes A_2 \cdots \otimes A_q)x - b\|_p,$$
where $p \in [1, 2]$, and for a vector $x\in \R^n$ the $\ell_p$ norm is defined by $\|x\|_p = (\sum_{i=1}^n |x_i|^p)^{1/p}$. For $p=2$, this is known as \textit{least squares regression}, and for $p=1$ this is known as  \textit{least absolute deviation regression}. 

Kronecker product regression is a special case of ordinary 
regression in which the design matrix is highly structured. Namely, the design matrix is
the Kronecker product of two or more smaller matrices. Such Kronecker product matrices
naturally arise in applications such as spline regression, 
signal processing, and multivariate data fitting. We refer the reader
to \cite{vanbook,vanLoanKron,golubVanLoan2013Book} for further background
and applications of Kronecker product regression. As discussed in
\cite{dssw18}, Kronecker product regression also arises in structured
blind deconvolution problems \cite{oh2005}, 
and the bivariate problem of surface fitting and multidimensional 
density smoothing \cite{eilers2006multidimensional}. 

A recent work of Diao, Song, Sun, and Woodruff \cite{dssw18} utilizes \textit{sketching} techniques to
output an $x \in \R^{d_1 d_2 \cdots d_q}$ with objective function at most $(1+\epsilon)$-times larger than optimal, for both
least squares and least absolute deviation Kronecker product regression. 
Importantly,
their time complexity is faster than 
the time needed to explicitly compute the product $A_1 \otimes \cdots \otimes A_q$. We note that sketching itself is a powerful tool for compressing extremely high dimensional data, and has been used in a number of tensor related problems, e.g., \cite{swz16,lhw17,dssw18,swz19,akkpvwz20}.

For least squares regression, the algorithm of \cite{dssw18} achieves
$O(\sum_{i=1}^q  \nnz(A_i) + \nnz(b) + \poly(d/\epsilon))$ time, where 
$\nnz(C)$ for a matrix $C$
denotes the number of non-zero entries of $C$. Note that the focus is on the over-constrained regression setting, when $n_i \gg d_i$ 
for each $i$, and so the goal is to have a small running time dependence on the $n_i$'s. We remark that over-constrained regression has been the focus of a large body of work over the past decade, which primarily attempts to design fast regression algorithms in the big data (large sample size) regime, see, e.g., \cite{mah11,w14} for surveys.

Observe that explicitly forming the matrix $A_1 \otimes \cdots \otimes A_q$ would take
$\prod_{i=1}^q \nnz(A_i)$ time, which can be as large as $\prod_{i=1}^q n_i d_i$, and so the results of \cite{dssw18} offer a large computational
advantage. Unfortunately, since $b \in \R^{n_1 n_2 \cdots n_q}$, we can have $\nnz(b)  = \prod_{i=1}^q n_i$, and therefore $\nnz(b)$ is likely to be the dominant term in the running time. This leaves open the question of whether it is possible to solve this problem in time \textit{sub-linear} in $\nnz(b)$, with a dominant term of $O(\sum_{i=1}^q  \nnz(A_i) )$. 

For least absolute deviation regression, the bounds of \cite{dssw18}
achieved are still an improvement over computing $A_1 \otimes \cdots \otimes A_q$, though
 worse than the bounds for least squares regression. 
The authors focus on $q = 2$ and the special case $n = n_1 = n_2$. Here,
they obtain a running time of $O(n^{3/2} \poly(d_1 d_2/\epsilon) + \nnz(b))$\footnote{We remark that while the $\nnz(b)$ term is not written in the Theorem of \cite{dssw18}, their approach of leverage score sampling from a well-conditioned basis requires one to sample from a well conditioned basis of $[A_1 \otimes A_2,b]$ for a subspace embedding. As stated, their algorithm only sampled from $[A_1 \otimes A_2]$. To fix this omission, their algorithm would require an additional $\nnz(b)$ time to leverage score sample from the augmented matrix.}. This leaves open the question of whether an \textit{input-sparsity} 
$O(\nnz(A_1) + \nnz(A_2) + \nnz(b) +\poly(d_1 d_2/\epsilon))$ time algorithm exists.

\paragraph{All-Pairs Regression}
In this work, we also study the related all-pairs regression problem. Given $A \in \R^{n \times d}, b \in \R^{n}$, the goal is to approximately solve the $\ell_p$ regression problem $\min_x \|\bar{A}x-\bar{b}\|_p$, where $\bar{A} \in \R^{n^2 \times d}$ is the matrix formed by taking all pairwise differences of the rows of $A$ (and $\bar{b}$ is defined similarly). For $p=1$, this is known as the \textit{rank regression estimator}, which has a long history in statistics. It is closely related to the renowned Wilconxon rank test \cite{wang2009weighted}, and enjoys the desirable property of being robust with substantial efficiency gain with respect to heavy-tailed random errors, while maintaining high efficiency for Gaussian errors \cite{wang2009local,wang2009weighted,Wang2018,wang19}. In many ways, it has properties more desirable in practice than that of the Huber M-estimator \cite{Wang2018,wang19personal}. Recently, the all-pairs loss function was also used by \cite{Wang2018} as an alternative approach to overcoming the challenges of tuning parameter selection for the Lasso algorithm.  However, the rank regression estimator is computationally intensive to compute, even for moderately sized data, since the standard procedure (for $p = 1$) is to solve a linear program with $O(n^2)$ constraints. 
 In this work, we demonstrate the first highly efficient algorithm for this estimator.

\paragraph{Low-Rank Approximation}
 Finally, in addition to regression, we extend our techniques to the Low Rank Approximation (LRA) problem. Here, given a large data matrix $A$, the goal is to find a low rank matrix $B$ which well-approximates $A$. LRA is useful in numerous applications, such as compressing massive datasets to their primary components for storage, denoising, and fast matrix-vector products. 
Thus, designing fast algorithms for approximate LRA has become a large and highly active area of research; see \cite{w14} for a survey. For an incomplete list of recent work using sketching techniques for LRA, see \cite{cw13,mm13,nn13,bw14,cw15soda,cw15focs,rsw16,bwz16,swz17,mw17,cgklpw17,lhw17,swz18,bw18,swz19c,swz19,swz19b,bbbklw19,ivww19} and the references therein. 
 
Motivated by the importance of LRA, we initiate the study of low-rank approximation of Kronecker product matrices. Given $q$ matrices $A_1, \cdots, A_q$ where $A_i \in \R^{n_i \times d_i}$, $n_i \gg d_i$,  $A=\otimes_{i=1}^q A_i$, the goal is to output a rank-$k$ matrix $B \in \R^{n\times  d}$ such that $\| B - A \|_F^2 \leq (1 + \eps)\OPT_k$, where $\OPT_k$ is the cost of the best rank-$k$ approximation, $n = n_1 \cdots n_q$, and $d = d_1 \cdots d_q$. Here $\|A\|_F^2 = \sum_{i,j}A_{i,j}^2$. The fastest general purpose algorithms for this problem run in time $O(\nnz(A) + \poly(d k/\eps))$ \cite{cw13}.  However, as in regression, if $A = \otimes_{i=1}^q A_i$, we have $\nnz(A) = \prod_{i=1}^q \nnz(A_i)$, which grows very quickly. Instead, one might also hope to obtain a running time of $O( \sum_{i=1}^q \nnz(A_i) + \poly(d k/\eps))$.

\subsection{Our Contributions}
Our main contribution is an input sparsity time $(1+\epsilon)$-approximation algorithm to Kronecker product regression for every $p \in [1,2]$, and $q\geq 2$. Given $A_i \in \R^{n_i \times d_i}$, $i = 1, \ldots, q$, and $b \in \R^n$ where $n = \prod_{i=1}^q n_i$, together with accuracy parameter $\epsilon \in (0,1/2)$ and failure probability $\delta > 0$, the goal is to output a vector $x' \in \R^d$ where $d = \prod_{i=1}^q d_i$ such that 
\[\| (A_1 \otimes \cdots \otimes A_q) x' - b \|_p \leq (1+\epsilon) \min_x \| (A_1 \otimes \cdots \otimes A_q) x - b \|_p \]
 holds with probability at least $1 - \delta$. For $p=2$, our algorithm runs in $\wt{O}\left(\sum_{i=1}^q \nnz(A_i) ) + \poly(d \delta^{-1}/\eps )\right)$ time.\footnote{For a function $f(n,d,\eps,\delta)$, $\wt{O}(f) = O(f \cdot \poly(\log n))$} Notice that this is \textit{sub-linear} in the input size, since it does not depend on $\nnz(b)$. For $p<2$, the running time is $\wt{O}\left((\sum_{i=1}^q \nnz(A_i)  +\nnz(b) + \poly(d /\eps ))\log(1/\delta)\right)$. Specifically, we prove the following two Theorems: 

\begin{theorem}[Restatement of Theorem \ref{thm:l2_Kronregression}, Kronecker product $\ell_2$ regression]
	Let $D \in \R^{n \times n}$ be the diagonal row sampling matrix generated via Proposition \ref{prop:leveragesamplemain}, with $m = \Theta(1/(\delta\eps^2))$ non-zero entries, and let $A = \otimes_{i=1}^q A_i$, where $A_i \in \R^{n_i \times d_i}$, and $b \in \R^n$, where $n = \prod_{i=1}^q n_i$ and $d=\prod_{i=1}^q d_i$. Then we have let $\hat{x} = \arg \min_{x \in \R^d} \|D A x - Db\|_2$, and let $x^* = \arg \min_{x' \in \R^d} \|Ax-b\|_2$. Then with probability $1-\delta$, we have
	\[ \|A \hat{x} - b\|_2 \leq (1+\eps)  \|Ax^*-b\|_2\]
	Moreover, the total runtime requires to compute $\hat{x}$ is 
	\begin{align*}
	\tilde{O} \left( \sum_{i=1}^q \nnz(A_i) + \poly( dq / (\delta\eps) ) \right).
	\end{align*}
\end{theorem}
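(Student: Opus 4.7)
The key structural fact I would use is that the leverage scores of a Kronecker product factorize: if $U_j \in \R^{n_j \times d_j}$ is an orthonormal basis for the column span of $A_j$, then $U_1 \otimes \cdots \otimes U_q$ is an orthonormal basis for the column span of $A = A_1 \otimes \cdots \otimes A_q$, and its squared row norms factorize, so the leverage score of row $(i_1,\ldots,i_q)$ of $A$ equals $\prod_{j=1}^q \tau_{i_j}(A_j)$. Consequently, sampling a row of $A$ proportional to its leverage score is equivalent to drawing one index from each factor's marginal leverage distribution independently, and the total leverage of $A$ is exactly $\prod_j d_j = d$. This is what makes sub-linear access to $b$ possible: each of the $m$ sampled rows determines a single entry of $b$ to read.

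\textbf{Steps.} First, I compute constant-factor approximations $\tilde\tau_{i_j}(A_j)$ to the leverage scores of each factor in $\wt O(\nnz(A_j) + \poly(d_j))$ time using standard sketch-based estimators (e.g.\ Johnson--Lindenstrauss applied to an orthonormal basis found via a sparse embedding and QR). The products $\tilde\tau_{i_1}(A_1)\cdots\tilde\tau_{i_q}(A_q)$ give constant-factor approximations to the true leverage scores of $A$, and their marginals factor, so we can draw an index $(i_1,\ldots,i_q)$ in $O(q)$ time per sample by independent sampling across the $q$ factors. Next, I form the diagonal matrix $D$ as in Proposition~\ref{prop:leveragesamplemain} with $m$ samples, each reweighted by the inverse square-root of its sampling probability. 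Then I build the explicit $m \times d$ matrix $DA$ and the length-$m$ vector $Db$ (only the sampled coordinates of $b$ are ever touched), and finally solve the small problem $\hat x = \arg\min_x \|DAx - Db\|_2$ by a direct solver in $\poly(d/\eps)$ time.

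\textbf{Correctness.} Combining the two standard consequences of leverage score sampling with a sufficiently large $m$ (depending on $d$, $\eps$, $\delta$ as specified by Proposition~\ref{prop:leveragesamplemain}), one obtains (i) a $(1\pm\eps)$ subspace embedding $\|DAy\|_2 = (1\pm\eps)\|Ay\|_2$ for all $y \in \R^d$, and (ii) an approximate matrix product guarantee $\|(DA)^\top D(b - Ax^*) - A^\top(b-Ax^*)\|_2 \leq \eps\,\sigma_{\min}(A)\,\|b-Ax^*\|_2$. Plugging these into the normal equations for the sketched problem gives $\|A\hat x - b\|_2 \leq (1+\eps)\|Ax^* - b\|_2$ via the standard sketch-and-solve argument (Sarlos; see Woodruff's monograph).

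\textbf{Runtime and main obstacle.} Adding up: approximate leverage scores of all factors cost $\wt O(\sum_j \nnz(A_j) + \poly(d))$; drawing $m$ samples and reading the corresponding rows of each $A_j$ and entries of $b$ costs $\wt O(m(q + \sum_j d_j))$; solving the small regression is $\poly(d/\eps,1/\delta)$. The total is $\wt O(\sum_j \nnz(A_j) + \poly(dq/(\delta\eps)))$, with no $\nnz(b)$ term, as desired. The main obstacle I expect is the bookkeeping around using approximate rather than exact leverage scores across the tensor factors: we must verify that a constant-factor approximation in each factor yields a constant-factor approximation of the product distribution to the true leverage distribution of $A$, and that this constant propagates only into a constant-factor increase in $m$ relative to the guarantees of Proposition~\ref{prop:leveragesamplemain}. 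Once this is in place, the rest reduces to applying known $\ell_2$ leverage sampling analyses in a black-box fashion.
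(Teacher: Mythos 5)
Your proposal follows the paper's proof essentially step for step: both hinge on the fact that leverage scores of a Kronecker product factor across the modes (Proposition~\ref{prop:leveragesamplemain} rests on Corollary~\ref{cor:leverage} in the paper), both estimate per-factor leverage scores in input-sparsity time (Proposition~\ref{prop:leveragescoresCW}), both build $D$ by drawing one index per mode independently, and both finish by solving the small $m\times d$ problem. The one place you diverge is in the correctness argument for the sketched solve: you invoke the Sarlos-style pairing of a $(1\pm\eps)$ subspace embedding with approximate matrix product, whereas the paper appeals to Lemma~3.3 of Clarkson--Woodruff (2009), noting that it requires only the approximate-matrix-product bound $\|U^\top D^\top D B - U^\top B\|_F \leq \sqrt{\eps/d}\,\|U\|_F\|B\|_F$ (their Proposition~\ref{prop:approxmatrixprod}), after which the normal equations and the Pythagorean theorem close the argument. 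That distinction is what lets the paper state $m = \Theta(d/(\delta\eps^2))$ via a Markov/second-moment bound; a full $(1\pm\eps)$ leverage-score subspace embedding would generically cost an extra $\log d$ factor (with $\log(1/\delta)$ rather than $1/\delta$ dependence), so your ``as specified by Proposition~\ref{prop:leveragesamplemain}'' glosses over a small mismatch in the sample count. Since the theorem's runtime is stated in $\wt O(\cdot)$ with a $\poly(dq/(\delta\eps))$ additive term, your route still yields the claimed complexity, but if you want the exact $m$ in the theorem statement you should use the CW09 lemma directly rather than a subspace embedding.
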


\begin{theorem}[Restatement of Theorem \ref{thm:l1_regression}, Kronecker product $\ell_p$ regression]
	Fix $1 \leq p < 2$. Then for any constant $q = O(1)$, given matrices $A_1, A_2, \cdots, A_q$, where $A_i \in \R^{n_i \times d_i}$, let $n = \prod_{i=1}^q n_i$, $d = \prod_{i=1}^q d_i$.
	Let $\hat{x} \in \R^d$ be the output of Algorithm \ref{alg:l1}. Then 
	\[\|  (A_1 \otimes A_2 \otimes \cdots \otimes A_q) \hat{x} - b \|_p  \leq (1+ \eps) \min_{x \in \R^n}\|  (A_1 \otimes A_2 \otimes \cdots \otimes A_q) x -b \|_p	\]
	holds with probability at least $1-\delta$. In addition, our algorithm takes 
	\begin{align*}
	\wt{O} \left( \left( \sum_{i=1}^q \nnz(A_i)  + \nnz(b) + \poly(d \log(1/\delta)/\eps) \right) \log(1/\delta) \right)
	\end{align*}
	time to output $\wh{x} \in \R^d$. 
\end{theorem}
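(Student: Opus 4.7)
The plan is to execute the standard sample-based $\ell_p$ regression framework of Dasgupta--Drineas--Harvey--Kumar--Mahoney purely in terms of the factors $A_i$, so that the design matrix $A = A_1\otimes\cdots\otimes A_q$ is never materialized. First, for each $i$ I would compute, in $\wt O(\nnz(A_i)+\poly(d_i))$ time, a $(p,\alpha_i,\beta_i)$ well-conditioned basis $U_i$ for the column span of $A_i$ using standard input-sparsity $\ell_p$ embeddings (e.g., sparse $p$-stable or sparse Cauchy sketches composed with a $QR$-type step). Because $\ell_p$ norms factorize under Kronecker products, $\|x\otimes y\|_p=\|x\|_p\|y\|_p$, the Kronecker product $U:=U_1\otimes\cdots\otimes U_q$ serves as an $\ell_p$ basis for the column span of $A$ whose relevant condition parameters multiply, and hence remain $\poly(d)$ for $q=O(1)$.

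Second, the row $\ell_p$ norms of $U$ factor across the $q$ modes: the row indexed by $j=(j_1,\ldots,j_q)$ has $p$-th power norm $\|U_{j,:}\|_p^p=\prod_i\|(U_i)_{j_i,:}\|_p^p$, and the total mass $\sum_j\|U_{j,:}\|_p^p=\prod_i\sum_{j_i}\|(U_i)_{j_i,:}\|_p^p$ factorizes as well. I can therefore draw a sample from the exact $\ell_p$-leverage-score distribution on the $n=\prod n_i$ rows by sampling each coordinate $j_i$ independently from the corresponding marginal over the rows of $U_i$, never enumerating the $n$ rows explicitly. I would draw $m=\poly(d\log(1/\delta)/\eps)$ such indices, reconstruct the sampled rows of $A$ in $O(md)$ time directly from the factors, read off the $m$ corresponding entries of $b$, and solve the resulting $m\times d$ $\ell_p$ regression with a standard convex solver in $\poly(d\log(1/\delta)/\eps)$ time.

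The hard part is that leverage-score sampling from a basis for $A$ alone does not preserve $\|Ax-b\|_p$ in general: one needs a subspace embedding for the augmented column space of $[A\ \ b]$, but $b$ carries no Kronecker structure. I would resolve this by forming an \emph{augmented} sampling distribution that mixes the factorized leverage scores of $U$ with probabilities proportional to $|b_j|^p/\|b\|_p^p$, so that the sampled distribution dominates the leverage scores of the augmented matrix $[A\ \ b]$ up to constants (by viewing $b$ as an additional column and using the standard fact that its leverage scores are bounded by entrywise $|b_j|^p/\|b\|_p^p$ after reweighting). Producing and sampling from this mixture requires one $O(\nnz(b))$ pass over $b$, which is exactly the source of the $\nnz(b)$ term in the runtime.

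Finally, because any single trial only yields a $(1+\eps)$-approximation with constant probability (due to failure events in the well-conditioned basis construction, the subspace embedding, and the implicit contraction bounds from $p$-stable sketching), I would independently repeat the entire procedure $O(\log(1/\delta))$ times with fresh randomness and output the candidate with smallest sketched objective, using that a standard Markov-type argument identifies the good trial with probability $1-\delta$; this is what gives the outer $\log(1/\delta)$ factor in the stated bound.
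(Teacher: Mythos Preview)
Your first two paragraphs are right and match the paper: the Kronecker product of $\ell_p$ well-conditioned bases for the $A_i$ is a well-conditioned basis for $A=A_1\otimes\cdots\otimes A_q$ (the paper proves exactly this in Corollary~\ref{cor:wellcond}), and row-norm sampling from it factorizes across the modes. The boosting paragraph is also fine and mirrors the paper's median trick.

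The gap is in your third paragraph. The claim that mixing the factorized leverage scores of $U$ with the entrywise probabilities $|b_j|^p/\|b\|_p^p$ produces a distribution that dominates the $\ell_p$ leverage scores of the augmented matrix $[A\;b]$ is not a ``standard fact'' and is false in general. The $\ell_p$ leverage scores of $[A\;b]$ are governed by the component of $b$ \emph{orthogonal} (in the $\ell_p$ well-conditioned sense) to the column span of $A$, i.e.\ by the optimal residual $b'=b-Ax^*$, not by $b$ itself. Concretely, take $A=(M,1)^\top$ and $b=(M,0)^\top$ with $M$ large: the column span of $[A\;b]$ is all of $\R^2$, so its row leverage scores are $(1/2,1/2)$; but the $A$-leverage of row~2 is $\Theta(1/M)$ and $|b_2|=0$, so your mixture assigns row~2 probability $\Theta(1/M)$, an unbounded shortfall. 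More generally, whenever $\|b\|_p\gg\OPT$ (i.e.\ $b$ is close to the column span of $A$), sampling by $|b_j|^p$ is equivalent to residual sampling at the terrible approximation $x'=0$, and carries no information about where the true residual mass sits.

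The paper handles $b$ by the two-stage DDHKM mechanism you allude to but do not use: first, sample from the well-conditioned basis of $A$ alone to obtain an $O(1)$-approximation $x'$ (Lemma~\ref{lem:O1approx}); second, form the residual $\rho=(A_1\otimes\cdots\otimes A_q)x'-b$ and sample rows with probability proportional to $\max\{q_i,\,r_2|\rho_i|^p/\|\rho\|_p^p\}$, which by Theorem~6 of \cite{ddhkm09} yields the $(1+\eps)$ guarantee. The $\nnz(b)$ term in the runtime arises from this residual-sampling step, not from reading $|b_j|^p$. Crucially, $\rho$ is a dense $n$-dimensional vector with no Kronecker structure, so sampling from $|\rho_i|^p/\|\rho\|_p^p$ without materializing $\rho$ is nontrivial; the paper's main technical contribution for this theorem is exactly an $\wt O(\sum_i\nnz(A_i)+\nnz(b)+\poly(d/\eps))$-time procedure for this (Lemma~\ref{lem:lpsampling} and Algorithm~\ref{alg:residualsample}), built from $p$-stable sketches, a coordinate-by-coordinate conditional sampling scheme, and the dyadic heavy-hitters trick. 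Your proposal sidesteps this entirely, which is why it does not go through.
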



Observe that in both cases, this running time is significantly faster than the time to write down $A_1 \otimes \cdots \otimes A_q$. For $p=2$, up to logarithmic factors, the running time is the same as the time required to simply read each of the $A_i$. Moreover, in the setting  $p<2$, $q=2$ and $n_1 = n_2$ considered in \cite{dssw18}, our algorithm offers a substantial improvement over their running time of $O(n^{3/2} \poly(d_1 d_2/\eps))$.


We empirically evaluate our Kronecker product regression algorithm on exactly the same datasets as those used in \cite{dssw18}. For $p \in \{1,2\}$, the accuracy of our algorithm is nearly the same as that of \cite{dssw18}, while the running time is significantly faster. 


For the all-pairs (or rank) regression problem, we first note that for $A \in \R^{n \times d}$, one can rewrite $\bar{A} \in \R^{n^2 \times d}$ as the difference of Kronecker products $\bar{A} = A \otimes \1^n - \1^n \otimes A$ where $\1^n \in \R^n$ is the all ones vector. Since $\bar{A}$ is not a Kronecker product itself, our earlier techniques for Kronecker product regression are not directly applicable. Therefore, we utilize new ideas, in addition to careful sketching techniques, to obtain an $\wt{O}(\nnz(A) + \poly(d/\eps))$ time algorithm for $p \in [1,2]$, which improves substantially on the $O(n^2d)$ time required to even compute $\bar{A}$, by a factor of at least $n$. Formally, our result for all-pairs regression is as follows:

\begin{theorem}[Restatement of Theorem \ref{thm:allpairsmain}]
	Given $A \in \R^{n \times d}$ and $b \in \R^n$, for $p \in [1,2]$ there is an algorithm for the All-Pairs Regression problem that outputs $\hat{x} \in \R^d$ such that with probability $1-\delta$ we have
	\[ 	\|\bar{A}\hat{x} - \bar{b}\|_p \leq (1+\eps) \min_{x \in \R^d}\|\bar{A}x -\bar{b}\|_p	\]
	Where  $\bar{A} = A \otimes \1 -\1 \otimes A \in \R^{n^2 \times d}$ and $\bar{b} = b \otimes \1 - \1 \otimes b \in \R^{n^2}$.
	For $p <2$, the running time is $\wt{O}(nd + \poly(d/(\eps\delta)))$, and for $p=2$ the running time is $O(\nnz(A) + \poly(d/(\eps\delta)) )$.
\end{theorem}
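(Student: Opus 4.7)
The plan is to exploit the identity $\bar{A}x - \bar{b} = P(Ax - b)$, where $P : \R^n \to \R^{n^2}$ is the pairwise-difference map $P(v)_{(i,j)} = v_i - v_j$, so that every linear operation against $\bar{A}$ or $\bar{b}$ is decomposed through the factored form $\bar{A} = A\otimes \1 - \1\otimes A$ and carried out against $A$ or $b$ alone. The $n^2 \times d$ matrix $\bar{A}$ is never materialized. The two cases $p=2$ and $p<2$ then require rather different sketches.

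For $p=2$, the first observation is that
\[\|\bar{A}x - \bar{b}\|_2^2 \;=\; \sum_{i,j}\bigl((Ax-b)_i - (Ax-b)_j\bigr)^2 \;=\; 2n\,\|C(Ax-b)\|_2^2,\]
where $C = I - \tfrac{1}{n}\1\1^T$ is the centering projection. So the all-pairs $\ell_2$ problem is, up to the constant $2n$, ordinary least squares on the mean-centered data $\tilde{A} = A - \1\bar{a}$, $\tilde{b} = b - \bar{b}\1$, with $\bar{a} = \tfrac{1}{n}\1^T A$ and $\bar{b} = \tfrac{1}{n}\1^T b$. I would apply an input-sparsity sketch $S$ (CountSketch or OSNAP) with $m = \poly(d/(\eps\delta))$ rows to the centered system implicitly, via $S\tilde{A} = SA - (S\1)\bar{a}$ and $S\tilde{b} = Sb - (S\1)\bar{b}$, all in $O(\nnz(A) + m)$ time, and then solve the $m \times d$ sketched least-squares instance in $\poly(d/(\eps\delta))$ time. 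Correctness follows from the standard $\ell_2$ subspace-embedding guarantee on the $(d+1)$-dimensional column span of $[\tilde{A},\tilde{b}]$.

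For $p<2$ there is no such quadratic collapse. I would sketch $\bar{A}$ implicitly by picking a dense $p$-stable matrix $\Pi \in \R^{s \times n^2}$ with $s = \poly(d/\eps)$ and using
\[\Pi\bar{A} = \Pi_1 A - \Pi_2 A,\qquad (\Pi_1)_{k,i}=\sum_j \Pi_{k,(i,j)},\quad (\Pi_2)_{k,j}=\sum_i \Pi_{k,(i,j)},\]
generating only the row-marginals $\Pi_1,\Pi_2 \in \R^{s \times n}$ directly from the induced marginal $p$-stable distribution. This yields $\Pi\bar{A}$ and $\Pi\bar{b} = (\Pi_1-\Pi_2)b$ in $\tilde{O}(nd)$ time and, after a $\poly(d)$-time QR step, an $\ell_p$ preconditioner $R \in \R^{d \times d}$ that makes $\bar{A}R^{-1}$ a well-conditioned basis. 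The $\ell_p$ leverage score of row $(i,j)$ of $\bar{A}R^{-1}$ is $\tau_{(i,j)} = \|(a_i - a_j)^T R^{-1}\|_p^p$, and by the $p\geq 1$ triangle inequality it satisfies $\tau_{(i,j)} \leq 2^{p-1}\bigl(\|a_i^T R^{-1}\|_p^p + \|a_j^T R^{-1}\|_p^p\bigr)$. This factored upper bound separates into an $i$-only plus a $j$-only term, so one can sample from the upper-bound distribution in $O(1)$ per sample (after $O(n)$ preprocessing) by flipping a fair coin to decide which coordinate gets the biased marginal and drawing the other uniformly from $[n]$. To avoid the $O(nd^2)$ cost of computing marginals $\|a_i^T R^{-1}\|_p^p$ exactly, I would precompose with a secondary $p$-stable sketch $T \in \R^{s' \times d}$ of polylogarithmic $s'$, precompute $M = TR^{-T}$ in $\poly(d)$ time, and read constant-factor estimates of each $\|a_i^T R^{-1}\|_p^p$ off the $i$-th column of $MA^T$, for a total of $\tilde{O}(nd)$ work. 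Drawing $\poly(d/\eps)$ sampled pairs, assembling the corresponding rows $a_i - a_j$ of $\bar{A}$ and entries $b_i - b_j$ of $\bar{b}$ with inverse-probability reweighting, and solving the sampled $\ell_p$ regression in $\poly(d/\eps)$ time produces $\hat{x}$.

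The main obstacle is the $p<2$ analysis: one must verify that the implicit sketch $\Pi_1 - \Pi_2$, whose rows are correlated in the natural way (each atom $\Pi_{k,(i,j)}$ enters both $(\Pi_1)_{k,i}$ and $(\Pi_2)_{k,j}$), still yields a valid $\ell_p$ subspace embedding for $\bar{A}$ with only a $\poly(d/\eps)$ blow-up in the eventual sample count, and separately that the factored upper bound has total mass $\sum_{i,j}\tilde{\tau}_{(i,j)} = 2^p n \sum_i \|a_i^T R^{-1}\|_p^p$ within a $\poly(d)$ factor of $\sum_{i,j}\tau_{(i,j)}$, which is what keeps the number of sampled pairs independent of $n$ and hence the total running time at $\tilde{O}(nd + \poly(d/(\eps\delta)))$.
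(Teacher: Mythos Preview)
Your $p=2$ reduction via the identity $\|\bar{A}x-\bar{b}\|_2^2 = 2n\|C(Ax-b)\|_2^2$ is correct and more direct than the paper, which instead applies TensorSketch to $[\bar{A},\bar{b}]$ through its Kronecker decomposition; you collapse the $n^2$-row problem to an ordinary $n$-row least-squares instance on centered data before sketching.

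For $p<2$, your sketching step has a real gap. You write $\Pi\bar{A}=(\Pi_1-\Pi_2)A$ for the row- and column-marginals of a dense $p$-stable $\Pi\in\R^{s\times n^2}$ and propose to generate only $\Pi_1,\Pi_2$. But the embedding of $\bar{A}$ depends on the \emph{joint} law of $(\Pi_1,\Pi_2)$: independently drawn marginals make each entry of $(\Pi_1-\Pi_2)v$ a $p$-stable with scale proportional to $\|v\|_p$, not to $\|Pv\|_p=\|\bar{A}x\|_p$ (take $v=Ax=\1$, so $Pv=0$ yet $\|v\|_p\ne 0$), while sampling the correctly correlated pair seems to require materializing the $\Theta(sn^2)$ atoms of $\Pi$. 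The paper avoids this by sketching with a \emph{tensor} $S_1\otimes S_2$ of two sparse $p$-stable transforms $S_i\in\R^{k\times n}$, so that $(S_1\otimes S_2)(F\otimes\1-\1\otimes F)=S_1F\otimes S_2\1-S_1\1\otimes S_2F$ with $F=[A,b]$ is computable in $\tilde O(\nnz(A))$, and proves this is a $\poly(d)$-distortion embedding (Lemma~\ref{lem:lowdistortionap}).

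Your leverage-score sampling idea, by contrast, is different from and attractively simpler than the paper's heavy-hitter/exponential-scaling routine (Lemma~\ref{lem:samplefast}, Algorithm~\ref{alg:sample}). The obstacle you flag---that $\sum_{i,j}\tilde\tau_{(i,j)}=2^p n\sum_i\|u_i\|_p^p$ be within $\poly(d)$ of $\sum_{i,j}\tau_{(i,j)}$---is false as written (if all rows of $F$ are equal then $\tau\equiv 0$ but $\tilde\tau\not\equiv 0$), yet becomes true once you \emph{center} $F=[A,b]$ first, which leaves $\bar{A},\bar{b}$ unchanged. With $\sum_i u_i=0$, for each coordinate the triangle inequality gives $\sum_j|(u_i)_c-(u_j)_c|\ge n|(u_i)_c|$, and Jensen then yields $\sum_j|(u_i)_c-(u_j)_c|^p\ge n|(u_i)_c|^p$; summing, $n\sum_i\|u_i\|_p^p\le\sum_{i,j}\|u_i-u_j\|_p^p$, so the oversampling factor is $O(1)$. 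The same two-sided bound $n^{1/p}\|w\|_p\le\|Pw\|_p\le 2n^{1/p}\|w\|_p$ for any centered $w\in\R^n$ shows that an $\ell_p$ well-conditioned basis for the centered $n\times(d{+}1)$ matrix $F^c$ is already (up to the scalar $n^{1/p}$) one for $\bar{F}$, so you can bypass the problematic $\Pi$ sketch entirely and obtain $R$ from a standard sparse $p$-stable sketch of $F^c$ in $\tilde O(\nnz(A))$ time. Finally, you should precondition $F=[A,b]$ rather than only $A$, so that one round of sampling already gives a $(1+\eps)$ subspace embedding for the span of $[\bar{A},\bar{b}]$, as the paper does.
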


 Our main technical contribution for both our $\ell_p$ regression algorithm and the rank regression problem is a novel and highly efficient $\ell_p$ sampling algorithm. Specifically, for the rank-regression problem we demonstrate, for a given $x \in \R^d$, how to independently sample $s$ entries of a vector $\bar{A}x = y \in \R^{n^2}$ from the $\ell_p$ distribution $(|y_1|^p/\|y\|_p^p,\dots,|y_{n^2}|^p/\|y\|_p^p)$ in $\wt{O}(nd + \poly(ds))$ time. For the $\ell_p$ regression problem, we demonstrate the same result when $y = (A_1 \otimes \cdots \otimes A_q) x - b \in \R^{n_1 \cdots n_q}$, and in time $\wt{O}(\sum_{i=1}^q \nnz(A_i) + \nnz(b) + \poly(ds))$.  
 This result allows us to sample a small number of rows of the input to use in our sketch. Our algorithm draws from a large number of disparate sketching techniques, such as the dyadic trick for quickly finding heavy hitters \cite{cormode2005improved,knpw11,larsen2016heavy,ns19}, and the precision sampling framework from the streaming literature \cite{ako10}.


For the Kronecker Product Low-Rank Approximation (LRA) problem, we give an input sparsity $O(\sum_{i=1}^q \nnz(A_i) + \poly(dk/\eps))$-time algorithm which computes a rank-$k$ matrix $B$ such that $\|B - \otimes_{i=1}^q A_i\|_F^2 \leq (1 + \eps) \min_{\rank-k~B'} \|B' - \otimes_{i=1}^q A_i\|_F^2$. Note again that the dominant term $\sum_{i=1}^q \nnz(A_i)$ is substantially smaller than the $\nnz(A) = \prod_{i=1}^q \nnz(A_i)$ time required to write down the Kronecker Product $A$, which is also the running time of state-of-the-art general purpose LRA algorithms \cite{cw13,mm13,nn13}. Thus, our results demonstrate that substantially faster algorithms for approximate LRA are possible for inputs with a Kronecker product structure.

\begin{theorem}[Restatement of Theorem \ref{thm:LRAmain}]
	For any constant $q \geq 2$, there is an algorithm which runs in time $O(\sum_{i=1}^q \nnz(A_i) + d\poly(k/\eps))$ and outputs a rank $k$-matrix $B$ in factored form such that $\| B - A \|_{F} \leq (1+\eps) \OPT_k$ with probability $9/10$. 
\end{theorem}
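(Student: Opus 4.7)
The plan is to adapt the two-sided sketch-and-solve paradigm for LRA of \cite{bwz16} to the Kronecker setting by choosing both a right sketch $R = \bigotimes_{i=1}^q R_i$ and a left sketch $S = \bigotimes_{i=1}^q S_i$ that are themselves Kronecker products of input-sparsity sketches (each $R_i \in \R^{d_i \times c_i}$ and $S_i \in \R^{r_i \times n_i}$ taken to be a CountSketch or OSNAP with $c_i, r_i = \poly(k/\eps)$). Because $q$ is constant, the total sketch dimensions $c = \prod_i c_i$ and $r = \prod_i r_i$ remain $\poly(k/\eps)$, while the individual matrices $A_i R_i$ and $S_i A_i$ can each be produced in $O(\nnz(A_i))$ time using input-sparsity sketching.

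The algorithm then has three stages. (1) For each $i$ compute $A_i R_i$ and $S_i A_i$ in $O(\nnz(A_i))$ time; this directly gives $AR = \bigotimes_i (A_i R_i)$, $SA = \bigotimes_i (S_i A_i)$, and $SAR = \bigotimes_i (S_i A_i R_i)$ in Kronecker-factored form. (2) Materialize the $r \times c$ matrix $SAR$ in $\poly(k/\eps)$ time and the $r \times d$ matrix $SA$ in $d\poly(k/\eps)$ time by expanding their Kronecker factorizations entrywise, and then solve the sketched rank-constrained regression
\[
    \min_{X \in \R^{c \times d},\ \rank(X) \leq k} \| (SAR)\, X - SA \|_F^2
\]
by computing $M^\star = (SAR)^{+}(SA) \in \R^{c \times d}$ in $d\poly(k/\eps)$ time and then taking its top-$k$ SVD $[M^\star]_k = X_1 X_2$ with $X_1 \in \R^{c \times k}$ and $X_2 \in \R^{k \times d}$. (3) Output the rank-$k$ matrix $B = (AR) X_1 \cdot X_2$ in factored form, storing $AR$ via its $q$ Kronecker factors $\{A_i R_i\}_i$ together with the explicit small matrices $X_1$ and $X_2$. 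The total running time is then $O\!\left(\sum_i \nnz(A_i) + d\poly(k/\eps)\right)$.

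Correctness requires two sketching guarantees: $R$ should be a projection-cost-preserving sketch on the right, so that the column span of $AR$ contains some rank-$k$ matrix $B^\star$ with $\|A - B^\star\|_F \leq (1+\eps)\,\OPT_k$; and $S$ should be an affine embedding for the regression $\min_X \|(AR)X - A\|_F^2$, so that the minimizer of the sketched problem above incurs only an additional $(1+\eps)$ factor relative to the best rank-$k$ matrix in $\mathrm{colspan}(AR)$. The main obstacle is establishing both of these for Kronecker-structured sketches of total dimension $\poly(k/\eps)$. I would exploit the structural fact that the singular vectors of $A = \bigotimes_i A_i$ are tensor products of singular vectors of the individual $A_i$, so the top-$k$ singular subspace of $A$ is contained in the Kronecker product of the top-$O(k)$ singular subspaces of the $A_i$, a fixed space of dimension $k^{O(q)} = \poly(k)$. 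Applying the standard CountSketch/OSNAP subspace-embedding and affine-embedding guarantees to each factor $R_i, S_i$ on this individual $\poly(k/\eps)$-dimensional subspace and then composing tensorially in the spirit of the Kronecker regression analysis of \cite{dssw18} (where the $q$-fold Kronecker composition inflates the required per-factor dimension by a polynomial but keeps the total dimension $\poly(k/\eps)$ for constant $q$) yields the needed projection-cost and affine-embedding properties at the claimed sketch sizes, and hence the $(1+\eps)$-approximation with constant success probability.
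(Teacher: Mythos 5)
Your proposal takes a genuinely different algorithmic route from the paper. The paper uses a \emph{one-sided} sketch: it computes $M = (\otimes_i S_i)A$ with a single Kronecker CountSketch, proves $M$ is a projection-cost-preserving sketch for $A$ (Lemma~\ref{lem:pcp}), and outputs $B = AU^\top U$ where $U$ contains the top-$k$ right singular vectors of $M$. You propose the two-sided sketch-and-solve paradigm instead, with a right sketch $R=\otimes_i R_i$, a left sketch $S = \otimes_i S_i$, and a small rank-constrained regression. Both skeletons can be made to work, and the paper's key technical ingredient, Lemma~\ref{lem:subspace_amp} (showing $\otimes_i S_i$ is simultaneously an $\eps$-subspace embedding for any fixed $k$-dimensional subspace and satisfies $(\eps/k)$-approximate matrix product, proved by an induction on $q$ through the JL-moment property), is precisely the tool that would also validate your two-sided algorithm, applied to both $S$ and $R^\top$ with appropriately inflated parameters for the affine-embedding step.

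The genuine gap is in how you propose to establish the sketching guarantees. Your structural observation --- that the top-$k$ right singular subspace of $A = \otimes_i A_i$ sits inside $\otimes_i W_i$, where $W_i$ is the top-$k$ right singular subspace of $A_i$, a fixed $k^q$-dimensional tensor subspace --- is correct, and it does let you compose per-factor subspace embeddings tensorially: if each $R_i^\top$ preserves $W_i$ to $(1\pm\eps/q)$, then $\otimes_i R_i^\top$ preserves $\otimes_i W_i$ to $(1\pm O(\eps))$. But subspace embedding for a fixed tensor subspace is not sufficient for either step of your argument. Showing $\mathrm{colspan}(AR)$ contains a near-optimal rank-$k$ approximation requires an approximate-matrix-product bound between the residual $A-A_k$ and the subspace $V_k$, and the affine-embedding property for $S$ similarly requires approximate matrix product involving the (generally full-rank) regression residual. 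Neither $A - A_k$ nor the regression residual is confined to a small tensor subspace, so the singular-subspace decomposition gives you no handle on them. The paper sidesteps this entirely by establishing the oblivious JL-moment property for $\otimes_i S_i$ directly, which yields subspace embedding and approximate matrix product for \emph{arbitrary} fixed matrices, with no dependence on the structure of $A$; you would need an analogous lemma, and the structural route as described does not supply it. A secondary issue: the minimizer of $\min_{\rank(X)\le k}\|(SAR)X - SA\|_F$ is $(SAR)^+\bigl[(SAR)(SAR)^+(SA)\bigr]_k$, not $\bigl[(SAR)^+(SA)\bigr]_k$; the rank truncation must occur after projecting onto $\mathrm{colspan}(SAR)$, since left-multiplying by $(SAR)^+$ does not commute with taking a best rank-$k$ approximation.
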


 Our technical contributions employed towards the proof of Theorem \ref{thm:LRAmain} involve demonstrating that useful properties of known sketching matrices hold also for the Kronecker product of these matrices. Specifically, we demonstrate the Kronecker products of the well-known \textit{count-sketch} matrices satisfy the property of being \textit{Projection Cost Preserving Sketches} (PCP). By properties of the Kronecker product, we can quickly apply such a sketching matrix to the input matrix $A$, and the PCP property will allow us to bound the cost of the best low rank approximation obtained via the sketch. 


In addition, motivated by \cite{l00}, we use our techniques to solve the low-$\trank$ approximation problem, where we are given an arbitrary matrix
$A \in \R^{n^q \times n^q}$, and the goal is to output a $\trank$-$k$ matrix $B \in \R^{n^q \times n^q}$ such that $\| B - A \|_F$ is minimized. Here, the $\trank$ of a matrix $B$ is the smallest integer $k$ such that $B$ can be written as a summation of $k$ matrices, where each matrix is the Kronecker product of $q$ matrices with dimensions $n \times n$. Compressing a matrix $A$ to a low-$\trank$ approximation yields many of the same benefits as LRA, such as compact representation, fast matrix-vector product, and fast matrix multiplication, and thus is applicable in many of the settings where LRA is used. Using similar sketching ideas, we provide an $O(\sum_{i=1}^q \nnz(A_i) + \poly(d_1\cdots d_q/\eps))$ time algorithm for this problem under various loss functions. Our results for low-$\trank$ approximation can be found in Section \ref{app:lowTrank}.


%

\section{Preliminaries}
\paragraph{Notation}
For a tensor $A \in \R^{n_1 \times n_2 \times n_3}$, we use $\| A \|_p$ to denote the entry-wise $\ell_p$ norm of $A$, i.e., $\|A \|_p = (\sum_{i_1} \sum_{i_2} \sum_{i_3} |A_{i_1,i_2,i_3}|^p )^{1/p}$. For $n \in \mathbb{N}$, let $[n] = \{1,2,\dots,n\}$. For a matrix $A$, let $A_{i,*}$ denote the $i$-th row of $A$, and $A_{*,j}$ the $j$-th column. For $a,b \in \R$ and $\eps \in (0,1)$, we write $a = (1 \pm \eps)b$ to denote $(1 - \eps)b \leq a \leq (1+\eps)b$. 
We now define various sketching matrices used by our algorithms. 



\paragraph{Stable Transformations}
We will utilize the well-known $p$-stable distribution, $\mathcal{D}_p$ (see \cite{nolan2009stable, i06} for further discussion), which exist for $p \in (0,2]$. For $p \in (0,2)$, $X \sim \mathcal{D}_p$ is defined by its characteristic function ${\mathbb{E}}_{X}[ \exp( \sqrt{-1} t X ) ] = \exp(-|t|^p)$, and can be efficiently generated to a fixed precision \cite{nolan2009stable,kane2010exact}. For $p=2$, $\mathcal{D}_2$ is just the standard Gaussian distribution, and for $p=1$, $\mathcal{D}_1$ is the \textit{Cauchy} distribution. The distribution $\mathcal{D}_p$ has the property that if $z_1,\dots,z_n \sim D_p$ are i.i.d., and $a \in \R^n$, then $\sum_{i=1}^n z_i a_i \sim z \|a\|_p$ where $\|a\|_p = (\sum_{i=1}^n |a_i|^p)^{1/p}$, and $z \sim \mathcal{D}_p$. This property will allow us to utilize sketches with entries independently drawn from $\mathcal{D}_p$ to preserve the $\ell_p$ norm.
\begin{definition}[Dense $p$-stable Transform, \cite{cdmmmw13,sw11}]\label{def:dense_sketch}
Let $p\in [1,2]$. Let $S = \sigma \cdot C \in \mathbb{R}^{m\times n}$, where $\sigma$ is a scalar, and each entry of $C\in\mathbb{R}^{m\times n}$ is chosen independently from $\mathcal{D}_p$. 
\end{definition}

We will also need a sparse version of the above. 
\begin{definition}[Sparse $p$-Stable Transform, \cite{mm13,cdmmmw13}]\label{def:sparse_sketch}
Let $p\in [1,2]$. Let $\Pi = \sigma \cdot S C\in \mathbb{R}^{m\times n}$, where $\sigma$ is a scalar, $S\in \mathbb{R}^{m\times n}$ has each column chosen independently and uniformly from the $m$ standard basis vectors of $\mathbb{R}^{m}$, and $C\in \mathbb{R}^{n\times n}$ is a diagonal matrix with diagonals chosen independently from the standard $p$-stable distribution. For any matrix $A\in \mathbb{R}^{n\times d}$, $\Pi A$ can be computed in $O(\nnz(A))$ time. 
\end{definition}

One nice property of $p$-stable transformations is that they provide \textit{low-distortion $\ell_p$ embeddings}.
\begin{lemma}[Theorem 1.4 of \cite{ww19}; see also Theorem 2 and 4 of \cite{mm13} for earlier work \footnote{In discussion with the authors of these works, the original $O((d \log d)^{1/p})$ distortion factors stated in these papers should be replaced with $O(d \log d)$; as we do not optimize the poly$(d)$ factors in our analysis, this does not affect our bounds.} ]\label{lem:lowdistembed}
	Fix $A \in \R^{n \times d}$, and let $S \in \R^{k \times n}$ be a sparse or dense $p$-stable transform for $p \in [1,2)$, with $k =\Theta(d^2/\delta)$. Then with probability $1-\delta$, for all $x \in \R^d$:
	\[\|Ax\|_p \leq \|SAx\|_p \leq O(d \log d) \|Ax\|_p \]
\end{lemma}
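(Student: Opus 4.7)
The plan is to follow the standard two-step blueprint for showing $p$-stable sketches give low-distortion $\ell_p$ embeddings of a $d$-dimensional subspace: first establish a concentration bound for a single fixed vector $y = Ax$, then promote it to the whole column space of $A$ via a net argument.

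Step 1 (per-vector analysis via $p$-stability). Fix $y \in \R^n$. For the dense sketch, each coordinate $(Sy)_j = \sigma \sum_i C_{j,i}\, y_i$ is distributed as $\sigma Z_j \|y\|_p$ with $Z_j \sim \mathcal{D}_p$ i.i.d.\ across $j$, by the $p$-stability property recalled in the preliminaries. For the sparse sketch, conditioning on the hash $S$ and writing $B_j \subseteq [n]$ for the coordinates sent to row $j$, one has $(Sy)_j = \sigma \sum_{i \in B_j} c_i y_i \stackrel{d}{=} \sigma Z_j T_j^{1/p}$ where $T_j = \sum_{i \in B_j} |y_i|^p$ and $\sum_j T_j = \|y\|_p^p$. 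In both cases $\|Sy\|_p^p$ reduces to a sum of simple random variables.

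Step 2 (concentration of the sketched norm). I would bound $\|Sy\|_p$ both above and below. The lower tail uses anti-concentration of $\mathcal{D}_p$: since $\Pr[|Z|^p \geq c_0] = \Theta(1)$ for a universal constant $c_0$, a Chernoff bound ensures a constant fraction of the $|Z_j|^p$ exceed $c_0$ with probability $1 - \exp(-\Omega(k))$, yielding $\|Sy\|_p \geq c \sigma \|y\|_p$. The upper tail uses the heavy-tail bound $\Pr[|Z|^p > t] = O(1/t)$: truncating each summand at $T = \poly(k/\delta')$ preserves the sum with probability $1-\delta'$, the truncated mean is $O(\log T)$, and standard Bernstein-type concentration for bounded i.i.d.\ variables gives $\sum_j |Z_j|^p = O(k \log(k/\delta'))$ with probability $1-\delta'$. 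Calibrating $\sigma$ so the lower bound becomes exactly $\|Sy\|_p \geq \|y\|_p$, the upper bound reads $\|Sy\|_p \leq O((\log(k/\delta'))^{1/p}) \|y\|_p$.

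Step 3 (net argument and extension). The column space of $A$ is $d$-dimensional, so the unit $\ell_p$ sphere in it admits a $\gamma$-net $\mathcal{N}$ of size $(C/\gamma)^d$. Applying the per-vector bound with $\delta' = \delta / |\mathcal{N}|$ and union bounding gives $\log(k/\delta') = O(d \log(1/\gamma))$, so the distortion on $\mathcal{N}$ is $O((d \log(1/\gamma))^{1/p}) \leq O(d \log(1/\gamma))$. A standard triangle-inequality extension promotes this to all unit vectors in the column space provided $\gamma \cdot (\text{distortion}) \ll 1$; choosing $\gamma = \Theta(1/d)$ closes the loop and yields final distortion $O(d \log d)$. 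Tracing constants, $k = \Theta(d^2/\delta)$ suffices.

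The main obstacle is the sparse sketch, where $(Sy)_j = \sigma Z_j T_j^{1/p}$ has a bucket-dependent scale $T_j$ that can concentrate on a few heavy buckets when $y$ is spiky; the weighted sum $\sum_j T_j |Z_j|^p$ then has heavier tails than $\|y\|_p^p \sum_j |Z_j|^p$. However, the bounds $T_j \leq \|y\|_p^p$ and $\sum_j T_j = \|y\|_p^p$ make the same truncation plus Chernoff strategy work, after a slightly more careful variance accounting.
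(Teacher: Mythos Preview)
The paper does not give its own proof of this lemma; it is stated as a citation of Theorem~1.4 of \cite{ww19} (and Theorems~2 and~4 of \cite{mm13}), so there is no in-paper argument to compare against. Your blueprint---per-vector tail bounds from $p$-stability, followed by a net over the $d$-dimensional column space---is exactly the standard route taken in those references, and for the \emph{dense} transform the sketch you give is essentially complete and correct.

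One caution on the sparse case: the claim that ``the bounds $T_j\le\|y\|_p^p$ and $\sum_j T_j=\|y\|_p^p$ make the same truncation plus Chernoff strategy work, after a slightly more careful variance accounting'' is too optimistic. When $y$ is spiky, essentially all of $\|y\|_p^p$ can land in a single bucket, so $\sum_j T_j|Z_j|^p$ degenerates to $\|y\|_p^p|Z_1|^p$, which has only a $O(1/t)$ upper tail---no amount of variance accounting recovers the $\exp(-\Omega(k))$-type concentration you need for the union bound over a net of size $e^{\Theta(d)}$. The actual arguments in \cite{mm13,ww19} handle this by first splitting coordinates of $y$ into \emph{heavy} and \emph{light} parts relative to the subspace (using that a $d$-dimensional subspace has at most $\poly(d)$ heavy leverage directions), treating the heavy part directly and showing the light part hashes with near-uniform bucket loads. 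That heavy/light decomposition is the missing idea in your sparse-case paragraph; with it in place, the rest of your plan goes through and matches the cited proofs.
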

We simply call a matrix $S \in \R^{k \times n}$ a low distortion $\ell_p$ embedding for $A \in \R^{n \times d}$ if it satisfies the above inequality for all $x \in \R^d$.

\paragraph{Leverage Scores \& Well Condition Bases.}

We now introduce the notions of $\ell_2$ leverage scores and well-conditioned bases for a matrix $A \in \R^{n \times d}$.
\begin{definition}[$\ell_2$-Leverage Scores, \cite{w14,bss12}]\label{def:l2leverage}
Given a matrix $A \in \R^{n \times d}$, let $A = Q \cdot R$ denote the QR factorization of matrix $A$. For each $i \in [n]$, we define $\sigma_i = \frac{ \| (A R^{-1})_i \|_2^2 }{ \| A R^{-1} \|_F^2 }$, where $(AR^{-1})_i \in \R^d$ is the $i$-th row of matrix $(A R^{-1}) \in \R^{n \times d}$. We say that $\sigma \in \R^n$ is the $\ell_2$ leverage score vector of $A$.
\end{definition}





\begin{definition}[$(\ell_p,\alpha,\beta)$ Well-Conditioned Basis, \cite{c05}]\label{def:wellconditioned}
Given a matrix $A \in \R^{n \times d}$, we say $U \in \R^{n \times d}$ is an $(\ell_p,\alpha,\beta)$ well-conditioned basis for the column span of $A$ if the columns of $U$ span the columns of $A$, and if for any $x \in \R^d$, we have $\alpha \| x \|_p \leq \| Ux \|_p \leq \beta \| x \|_p $, 
where $\alpha \leq 1 \leq \beta$. If $\beta/\alpha = d^{O(1)}$, then we simply say that $U$ is an $\ell_p$ well conditioned basis for $A$.
\end{definition}

\begin{fact}[\cite{ww19,mm13}]\label{fact:wellcond}
	Let $A \in \R^{n \times d}$, and let $SA \in \R^{k \times d}$ be a low distortion $\ell_p$ embedding for $A$ (see Lemma \ref{lem:lowdistembed}), where $k = O(d^2/\delta)$. Let $SA = QR$ be the $QR$ decomposition of $SA$. Then $AR^{-1}$ is an $\ell_p$ well-conditioned basis with probability $1-\delta$. 
\end{fact}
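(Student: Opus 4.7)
The plan is to verify Definition~\ref{def:wellconditioned} directly for $U := AR^{-1}$ by exploiting the identity $SU = SAR^{-1} = Q$, where $Q$ has orthonormal columns, and hence $\|SUx\|_2 = \|Qx\|_2 = \|x\|_2$ for every $x \in \R^d$. With this in hand, sandwiching $\|Ux\|_p$ between $\poly(d/\delta)^{-1}\|x\|_p$ and $\poly(d/\delta)\|x\|_p$ reduces to chaining two standard ingredients: the low-distortion $\ell_p$ embedding property of $S$ from Lemma~\ref{lem:lowdistembed}, and the elementary inequalities $\|y\|_2 \leq \|y\|_p \leq m^{1/p - 1/2}\|y\|_2$ relating $\ell_p$ and $\ell_2$ norms on $\R^m$ for $p \in [1,2]$.

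First I would note that, since we are in the overdetermined regime where $A$ has full column rank, Lemma~\ref{lem:lowdistembed} forces $SA$ to have full column rank as well; thus the thin QR decomposition yields an invertible $R \in \R^{d \times d}$, so $U = AR^{-1}$ is well defined, and every $Ux$ equals $Ay$ for $y = R^{-1}x$ and hence lies in the column span of $A$. Applying the embedding inequality (which holds for all $y \in \R^d$ on a single event of probability $1-\delta$) gives
\begin{align*}
\|Ux\|_p \;\leq\; \|SUx\|_p \;=\; \|Qx\|_p \;\leq\; O(d \log d)\,\|Ux\|_p.
\end{align*}

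Next I would convert $\|Qx\|_p$ to $\|x\|_p$ by routing through $\|Qx\|_2 = \|x\|_2$. The upper direction gives $\|Qx\|_p \leq k^{1/p - 1/2}\|Qx\|_2 = k^{1/p - 1/2}\|x\|_2 \leq k^{1/p - 1/2}\|x\|_p$, so $\|Ux\|_p \leq k^{1/p - 1/2}\|x\|_p = \poly(d/\delta)\|x\|_p$ since $k = O(d^2/\delta)$. The lower direction gives $\|Qx\|_p \geq \|Qx\|_2 = \|x\|_2 \geq d^{1/2 - 1/p}\|x\|_p$, which combined with $\|Ux\|_p \geq \|Qx\|_p / O(d \log d)$ yields $\|Ux\|_p \geq \|x\|_p/\poly(d)$.

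Thus $U$ satisfies $\alpha \|x\|_p \leq \|Ux\|_p \leq \beta \|x\|_p$ for all $x \in \R^d$ with $\beta/\alpha = \poly(d/\delta)$; after a trivial scalar rescaling if needed to force $\alpha \leq 1 \leq \beta$, we have produced an $(\ell_p, \alpha, \beta)$ well-conditioned basis in the sense of Definition~\ref{def:wellconditioned}. The argument is essentially mechanical, and the one point requiring care is that the low-distortion embedding of Lemma~\ref{lem:lowdistembed} must be invoked in its uniform (over the whole column span) form so that a single probability-$(1-\delta)$ event supports the bound simultaneously for every $x$; this is exactly the content of the lemma and is therefore not really an obstacle.
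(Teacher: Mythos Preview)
Your proposal is correct and is exactly the standard argument that the paper defers to by citation (the paper itself provides no proof of Fact~\ref{fact:wellcond}, simply attributing it to \cite{ww19,mm13}). The only minor point worth flagging is that your resulting condition number $\beta/\alpha = \poly(d/\delta)$ carries a $\delta^{-1}$ factor from $k = O(d^2/\delta)$, so it meets Definition~\ref{def:wellconditioned}'s requirement $\beta/\alpha = d^{O(1)}$ only when $\delta \geq d^{-O(1)}$; this is indeed the regime in which the fact is invoked throughout the paper (e.g., $\delta = \Theta(1/q)$ in Section~\ref{sec:l1_tensor_well_conditioned_basis}), so it is not a genuine gap.
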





{
	\begin{algorithm*}[!h]\caption{Our $\ell_2$ Kronecker Product Regression Algorithm}\label{alg:l2reg}
		\begin{algorithmic}[1]
			\Procedure{$\ell_2$ Kronecker Regression}{$(\{ A_i, n_i, d_i \}_{i \in [q] }, b)$} \Comment{Theorem~\ref{thm:l2_Kronregression}}
			\State $d \leftarrow \prod_{i=1}^q d_i$, $n \leftarrow \prod_{i=1}^q n_i$, $m \leftarrow \Theta(d/(\delta\eps^2))$.
			\State Compute approximate leverage scores $\tilde{\sigma}_i(A_j)$ for all $j \in [q],$ $i \in [n_j]$. \Comment{Proposition~\ref{prop:leveragescoresCW}}
			\State Construct diagonal leverage score sampling matrix $D \in \R^{n \times n}$, with $m$ non-zero entries \Comment{Proposition~\ref{prop:leveragesamplemain}}
			\State Compute (via the psuedo-inverse) \\ \hspace{30mm} $ \wh{x} = \arg \min_{x \in \R^d} \| D(A_1 \otimes A_2 \otimes \cdots \otimes A_q)x - Db\|_2 $
			\State \Return $\wh{x}$
			\EndProcedure
		\end{algorithmic}
	\end{algorithm*}
}

\newpage

\section{Kronecker Product Regression}\label{sec:l2reg}
We first introduce our algorithm for $p=2$. Our algorithm for $1 \leq p <2$ is given in Section \ref{sec:lpreg}.  Our regression algorithm for $p=2$ is formally stated in Algorithm \ref{alg:l2reg}. Recall that our input design matrix is $A = \otimes_{i=1}^q A_i$, where $A_i \in \R^{n_i \times d_i}$, and we are also given $b \in \R^{n_1 \cdots n_q}$. Let $n = \prod_{i=1}^q n_i$ and $d = \prod_{i=1}^q d_i$.  The crucial insight of the algorithm is that one can approximately compute the leverage scores of $A$ given only good approximations to the leverage scores of each $A_i$. Applying this fact gives a efficient algorithm for sampling rows of $A$ with probability proportional to the leverage scores. Following standard arguments, we will show that by restricting the regression problem to the sampled rows, we can obtain our desired $(1 \pm \eps)$-approximate solution efficiently.

Our main theorem for this section is stated below. A full proof of the theorem can be found section \ref{subsec:l2reg}.

\begin{theorem}[Kronecker product $\ell_2$ regression]\label{thm:l2_Kronregression}
	Let $D \in \R^{n \times n}$ be the diagonal row sampling matrix generated via Proposition \ref{prop:leveragesamplemain}, with $m = \Theta(d/(\delta\eps^2))$ non-zero entries, and let $A = \otimes_{i=1}^q A_i$, where $A_i \in \R^{n_i \times d_i}$, and $b \in \R^n$, where $n = \prod_{i=1}^q n_i$ and $d=\prod_{i=1}^q d_i$. Then let $\hat{x} = \arg \min_{x \in \R^d} \|D A x - Db\|_2$, and let $x^* = \arg \min_{x' \in \R^d} \|Ax-b\|_2$. Then with probability $1-\delta$, we have
	\[ \|A \hat{x} - b\|_2 \leq (1+\eps)  \|Ax^*-b\|_2.\]
	Moreover, the total running time required to compute $\hat{x}$ is  $\tilde{O}( \sum_{i=1}^q \nnz(A_i) + (dq/(\delta\eps))^{O(1)})$.\footnote{We remark that the exponent of $d$ in the runtime can be bounded by $3$. To see this, first note that the main computation taking place is the leverage score computation from Proposition \ref{prop:leveragescoresCW}. For a $q$ input matrices, we need to generate the leverage scores to precision $\Theta(1/q)$, and thus the complexity from running Proposition \ref{prop:leveragescoresCW} to approximate leverage scores is $O(d^3/q^4)$ by the results of \cite{cw13}. The remaining computation is to compute the pseudo-inverse of a $d/\epsilon^2 \times d$ matrix, which requires $O(d^3/\epsilon^2)$ time, so the additive term in the Theorem  can be replaced with $O(d^3/\epsilon^2 + d^3/q^4)$.}.
\end{theorem}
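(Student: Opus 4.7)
The plan rests on a structural fact about Kronecker products: the $\ell_2$ leverage scores of $A=A_1\otimes\cdots\otimes A_q$ factor. Concretely, if $A_j=Q_j R_j$ is a QR decomposition, then $A=(Q_1\otimes\cdots\otimes Q_q)(R_1\otimes\cdots\otimes R_q)$ is a QR decomposition of $A$ (since Kronecker products of orthonormal matrices are orthonormal and of upper triangular matrices are upper triangular up to a permutation that does not affect $Q\cdot$row-norms). Consequently, writing a row of $A$ by a multi-index $(i_1,\ldots,i_q)$, its leverage score equals $\sigma_{(i_1,\ldots,i_q)}(A)=\prod_{j=1}^{q}\sigma_{i_j}(A_j)$, and $\sum_{(i_1,\ldots,i_q)}\sigma_{(i_1,\ldots,i_q)}(A)=\prod_j d_j=d$. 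In particular, the leverage score distribution of $A$ is the \emph{product} of the leverage score distributions of the factors, which we will exploit to sample from it without ever materializing $A$.

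Given that structural fact, the algorithm is the natural one. First, for each $j\in[q]$, invoke Proposition~\ref{prop:leveragescoresCW} on $A_j$ to obtain constant-factor approximations $\tilde\sigma_i(A_j)$ to the $\ell_2$ leverage scores of $A_j$ in time $\tilde O(\nnz(A_j)+\poly(d_j))$; this produces a distribution $p_j(i)\propto\tilde\sigma_i(A_j)$ on $[n_j]$, which can be preprocessed (alias tables / prefix sums) in $\tilde O(n_j)$ time so that each sample costs $\tilde O(1)$. Then for each of the $m=\Theta(d/(\delta\eps^2))$ samples, draw $(i_1,\ldots,i_q)$ with $i_j\sim p_j$ independently; by the factorization above, the resulting row sample of $A$ is drawn from a distribution that is a constant-factor approximation to the true leverage score distribution on $A$. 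Build the diagonal sampling-and-rescaling matrix $D\in\R^{n\times n}$ per Proposition~\ref{prop:leveragesamplemain}, read the corresponding $m$ entries of $b$ in $O(m)$ random-access time, and form the $m\times d$ matrix $DA$ row by row, where the row indexed by $(i_1,\ldots,i_q)$ is the Kronecker product of the rows $(A_j)_{i_j,*}$, costing $O(md)$ total. Finally solve $\hat x=\arg\min_x\|DAx-Db\|_2$ via the pseudoinverse in $O(md^2)$ time.

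Correctness is then a standard leverage score sampling argument: with $m=\Theta(d/(\delta\eps^2))$ samples drawn from a distribution dominating the true leverage scores by a constant factor, $D$ simultaneously gives a $(1\pm\eps)$ subspace embedding for the column space of $A$ and satisfies $(\eps/\sqrt d)$-approximate matrix product for the pair $(DA, D(Ax^*-b))$; combining these in the usual two-line argument (optimality of $x^*$ for the un-sketched problem, plus the normal equations for $\hat x$) yields $\|A\hat x-b\|_2\le(1+\eps)\|Ax^*-b\|_2$ with probability $1-\delta$. For the runtime, approximate leverage score computation across all factors costs $\sum_{j=1}^q\tilde O(\nnz(A_j)+\poly(d_j))$, sampling costs $\tilde O(mq)$, constructing $DA,Db$ costs $O(md)$, and the final solve is $O(md^2)$; summing and using $m=\Theta(d/(\delta\eps^2))$ gives the advertised $\tilde O(\sum_j\nnz(A_j)+\poly(dq/(\delta\eps)))$ bound.

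The main technical obstacle I anticipate is verifying that \emph{approximate} per-factor leverage scores suffice: the product $\prod_j\tilde\sigma_{i_j}(A_j)$ is a product of $q$ constant-factor approximations, so the overall distortion could blow up exponentially in $q$ if the factors were only $O(1)$-accurate. The fix is to compute each $\tilde\sigma_i(A_j)$ to multiplicative accuracy $(1\pm 1/(cq))$ (or, equivalently, run Proposition~\ref{prop:leveragescoresCW} with a slightly smaller accuracy parameter, incurring only a $\poly(q)$ factor in its running time), so that the product approximates the true leverage score to within an absolute constant factor, which is what Proposition~\ref{prop:leveragesamplemain} requires. Once this is set up, the remainder of the proof is a direct application of standard leverage score sampling theory, with the Kronecker structure used only in the efficient sampling and row-construction steps.
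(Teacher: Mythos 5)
Your proposal follows essentially the same route as the paper's proof: factor the leverage scores of $A=\otimes_i A_i$ via the Kronecker structure of an orthonormal basis (the paper's Proposition 3.2 and Corollary 3.3), compute per-factor approximate leverage scores to precision $\Theta(1/q)$ so the $q$-fold product remains a constant-factor approximation (exactly the fix in Proposition 3.5), sample $m=\Theta(d/(\delta\eps^2))$ rows from the product distribution, and finish with the standard leverage-score regression argument (approximate matrix product plus normal equations, i.e.\ Lemma 3.3 of \cite{cw09}) and the same runtime accounting. The only cosmetic difference is that you invoke a $(1\pm\eps)$ subspace embedding plus AMP, while the paper cites Lemma 3.3 of \cite{cw09} and frames the requirement purely as an approximate-matrix-product bound; these are interchangeable phrasings of the same two-line Pythagorean-theorem argument and do not change the substance.
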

\subsection{Kronecker Product $\ell_2$ Regression}\label{subsec:l2reg}
We now prove the correctness of our $\ell_2$ Kronecker product regression algorithm. Specifically, we prove Theorem \ref{thm:l2_Kronregression}. To prove correctness, we need to establish several facts about the leverage scores of a Kronecker product. 
\begin{proposition}
	Let $U_i \in \R^{n_i \times d_i}$ be an orthonormal basis for $A_i \in \R^{n_i \times d_i}$. Then $U=\otimes_{i=1}^q U_i$ is an orthonormal basis for $A = \otimes_{i=1}^q A_i$.
\end{proposition}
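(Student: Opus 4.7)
The plan is to verify the two defining properties of an orthonormal basis for the column span: (i) $U$ has orthonormal columns, and (ii) $\colspan(U) = \colspan(A)$. Both reduce to a single tool, the mixed-product property of the Kronecker product, which states $(X_1 \otimes \cdots \otimes X_q)(Y_1 \otimes \cdots \otimes Y_q) = (X_1 Y_1) \otimes \cdots \otimes (X_q Y_q)$ whenever the dimensions align, together with $(X_1 \otimes \cdots \otimes X_q)^\top = X_1^\top \otimes \cdots \otimes X_q^\top$.

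For orthonormality, I would compute
\[
U^\top U \;=\; \bigl(\otimes_{i=1}^q U_i\bigr)^\top \bigl(\otimes_{i=1}^q U_i\bigr) \;=\; \otimes_{i=1}^q \bigl(U_i^\top U_i\bigr) \;=\; \otimes_{i=1}^q I_{d_i} \;=\; I_d,
\]
using orthonormality of each $U_i$ and the fact that the Kronecker product of identity matrices is the identity of the appropriate size $d = \prod_i d_i$.

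For the column-span equality, since $U_i$ is a basis for the column span of $A_i$, there exist invertible matrices $T_i \in \R^{d_i \times d_i}$ (for instance, from the thin QR decomposition of $A_i$, so that $A_i = U_i T_i$; here $d_i \leq n_i$ and $A_i$ is assumed to have full column rank, which is the generic case needed for the regression problem, and otherwise one restricts $T_i$ to a square invertible block on the column space). Then by the mixed-product property,
\[
A \;=\; \otimes_{i=1}^q A_i \;=\; \otimes_{i=1}^q (U_i T_i) \;=\; \bigl(\otimes_{i=1}^q U_i\bigr)\bigl(\otimes_{i=1}^q T_i\bigr) \;=\; U \cdot T,
\]
where $T := \otimes_{i=1}^q T_i$ is invertible because the Kronecker product of invertible matrices is invertible (its inverse is $\otimes_i T_i^{-1}$, again by the mixed-product property). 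Hence $\colspan(A) \subseteq \colspan(U)$ and, using $U = A T^{-1}$, the reverse inclusion holds as well.

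There is no real obstacle here; the statement is a standard consequence of the algebraic identities of the Kronecker product. The only point that requires a sentence of care is making sure that ``orthonormal basis for $A_i$'' is interpreted so that $A_i = U_i T_i$ with $T_i$ square and invertible (e.g., via thin QR on a full-column-rank $A_i$), so that the Kronecker product $T$ is likewise square and invertible. Once this is set up, both claims follow in one line each from the mixed-product and transpose identities.
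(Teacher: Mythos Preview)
Your proposal is correct. Both your argument and the paper's rely on basic Kronecker product identities, but the execution differs slightly. The paper argues orthonormality by noting that each column of $U$ has unit norm (as a product of unit column norms) and that the singular values of $U$ are products of the singular values of the $U_i$, hence all equal to $1$; it does not explicitly address the column-span equality with $A$. You instead verify $U^\top U = \otimes_i (U_i^\top U_i) = I_d$ in one line via the mixed-product and transpose identities, and you additionally supply the column-span argument through $A = U(\otimes_i T_i)$ with $\otimes_i T_i$ invertible. Your route is a bit more direct and more complete; the paper's singular-value phrasing is equivalent but slightly roundabout (and its ``$1$ or $-1$'' remark is a harmless slip, since singular values are nonnegative).
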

\begin{proof}
	Note that the column norm of each column of $U$ is the product of column norms of the $U_i$'s, which are all $1$. Thus $U$ has unit norm columns. It suffices then to show that all the singular values of $U$ are $1$ or $-1$, but this follows from the fact that the singular values of $U$ are the product of singular values of the $U_i$'s, which completes the proof. 
\end{proof}
\begin{corollary}\label{cor:leverage}
	Let $A = \otimes_{i=1}^q A_i$, where $A_i \in \R^{n_i \times d_i}$. Fix any $\vec{i} = (i_1,\dots,i_q) \in [n_1] \times [n_2] \times \dots \times [n_q]$, and let $\vec{i}$ index into a row of $A$ in the natural way. Then the $\vec{i}$-th leverage score of $A$ is equal to $\prod_{j=1}^q \sigma_{i_j}(A_j)$, where $\sigma_t(B)$ is the $t$-th leverage score of a matrix $B$.
\end{corollary}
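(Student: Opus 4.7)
The plan is to leverage (no pun intended) the Proposition immediately preceding the Corollary, which establishes that $U = \otimes_{j=1}^q U_j$ is an orthonormal basis for the column span of $A = \otimes_{j=1}^q A_j$ whenever each $U_j$ is an orthonormal basis for $A_j$. Since the $\ell_2$ leverage score of row $\vec{i}$ of $A$ is defined (via Definition \ref{def:l2leverage}) as the squared row norm of any orthonormal basis at index $\vec{i}$, it suffices to compute the squared norm of the $\vec{i}$-th row of $U$ and factor it across the $q$ tensor factors.

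First I would unpack the natural indexing of the Kronecker product: for multi-indices $\vec{i} = (i_1,\dots,i_q)$ and $\vec{k} = (k_1,\dots,k_q)$ into the rows and columns of $U$ respectively, the entry satisfies
\[
U_{\vec{i},\vec{k}} \;=\; \prod_{j=1}^q (U_j)_{i_j, k_j}.
\]
Next, I would square and sum over $\vec{k}$ and use the fact that a sum over the product set $[d_1] \times \cdots \times [d_q]$ of a product of per-coordinate terms factorizes as a product of one-dimensional sums:
\[
\sigma_{\vec{i}}(A) \;=\; \|U_{\vec{i},*}\|_2^2 \;=\; \sum_{k_1,\dots,k_q}\, \prod_{j=1}^q (U_j)_{i_j,k_j}^2 \;=\; \prod_{j=1}^q \sum_{k_j} (U_j)_{i_j,k_j}^2 \;=\; \prod_{j=1}^q \|(U_j)_{i_j,*}\|_2^2 \;=\; \prod_{j=1}^q \sigma_{i_j}(A_j).
\]
The last equality again invokes Definition \ref{def:l2leverage} applied individually to each $A_j$ via its orthonormal basis $U_j$.

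There is essentially no obstacle here: the work is entirely in reading off the definition of the Kronecker product entry-by-entry and applying the distributivity of a sum over a Cartesian product. The only mild point to note is that leverage scores are basis-independent (any orthonormal basis for the column span yields the same row norms), which justifies our freedom to compute using the specific basis $U = \otimes_j U_j$ supplied by the preceding Proposition rather than an arbitrary QR factorization of $A$.
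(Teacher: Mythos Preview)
Your proposal is correct and matches the paper's own proof essentially line for line: both invoke the preceding Proposition to obtain the orthonormal basis $U=\otimes_{j=1}^q U_j$, both identify the leverage score with the squared row norm of $U$, and both note basis-independence of leverage scores. The only cosmetic difference is that the paper compresses your explicit sum-factorization $\sum_{\vec{k}}\prod_j (U_j)_{i_j,k_j}^2=\prod_j\sum_{k_j}(U_j)_{i_j,k_j}^2$ into a one-line appeal to ``fundamental properties of Kronecker products'' (namely that the row norm of a Kronecker product is the product of the row norms), whereas you write it out; your version is arguably clearer.
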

\begin{proof}
	Note $U=\otimes_{i=1}^q U_i$ is an orthonormal basis for $A = \otimes_{i=1}^q A_i$ by the prior Proposition. Now if $U_{\vec{i},*}$ is the $\vec{i}$-th row of $U$, then by fundamental properties of Kronecker products \cite{l00}, we have $\|U_{\vec{i},*}\|_2=\prod_{j=1}^q \|(U_j)_{i_j,*}\|_2$, which completes the proof.	Note here that we used the fact that leverage scores are independent of the choice of orthonormal basis \cite{w14}.
\end{proof}

\begin{proposition}[Theorem 29 of \cite{cw13}]\label{prop:leveragescoresCW}
	Given a matrix $A \in \R^{n \times d}$, let $\sigma \in \R^n$ be the $\ell_2$ leverage scores of $A$ (see definition \ref{def:l2leverage}). Then there is an algorithm which computes values $\tilde{\sigma}_1,\tilde{\sigma}_2,\dots,\tilde{\sigma}_n$ such that $\tilde{\sigma_i} = (1 \pm \eps)\sigma_i$ simultaneously for all $i \in [n]$ with probability $1-1/n^c$ for any constant $c \geq 1$. The runtime is $\tilde{O}(\nnz(A) + d^3/\epsilon^2)$. 
\end{proposition}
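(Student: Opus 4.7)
The plan is to avoid the $O(nd^2)$ cost of directly forming an orthonormal basis of $A$ by combining two sketching primitives: a sparse subspace embedding to produce a cheap approximate orthonormalizer, and a Johnson--Lindenstrauss projection to estimate row norms of that basis without writing it down. The definition of leverage scores is invariant under the choice of orthonormal basis, so it suffices to compute, up to $(1 \pm \eps)$, the squared row norms of \emph{any} fixed basis.

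First, I would apply a CountSketch (or OSNAP) matrix $S \in \R^{m \times n}$ with $m = \Theta(d^2)$ rows, which can be applied to $A$ in $O(\nnz(A))$ time and gives a constant-factor subspace embedding: $\|SAx\|_2 = (1 \pm 1/2)\|Ax\|_2$ for every $x \in \R^d$. Computing the (thin) QR decomposition $SA = QR$ with $R \in \R^{d \times d}$ upper triangular takes $\tilde{O}(d^3)$ time. The subspace embedding property implies that all singular values of $AR^{-1}$ lie in a fixed constant range around $1$, so $AR^{-1}$ is a well-conditioned (but not exactly orthonormal) basis for the column span of $A$; if $U$ is a true orthonormal basis, then $U = AR^{-1}T$ for some fixed invertible $T \in \R^{d \times d}$ with constant condition number, and crucially $T$ acts on the right and hence does \emph{not} change which basis is being used for the purpose of defining leverage scores via row norms -- the leverage scores are $\sigma_i(A) = \|e_i^\top U\|_2^2$.

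Next, I would apply a Gaussian JL matrix $G \in \R^{d \times k}$ with $k = \Theta(\eps^{-2}\log n)$ on the right of $R^{-1}$ to approximate the row norms of $AR^{-1}$. Computing $M := R^{-1}G$ takes $O(d^2 k)$ time via $k$ back-substitutions, and then $AM$ is obtained in $O(\nnz(A)\cdot k)$ time. The distributional JL lemma together with a union bound over the $n$ rows ensures that, with probability $1 - n^{-c}$,
\[ \|e_i^\top A M\|_2^2 \;=\; (1 \pm \eps)\,\|e_i^\top A R^{-1}\|_2^2 \quad \text{for all } i \in [n]. \]
I would return $\tilde{\sigma}_i := \|e_i^\top AM\|_2^2$, rescaled if necessary by $\|AM\|_F^2$, as the approximate leverage scores.

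The main obstacle is translating the $(1\pm\eps)$ JL estimate of the row norms of $AR^{-1}$ into a $(1\pm\eps)$ bound on the true leverage scores of $A$, since $AR^{-1}$ is only approximately orthonormal. The fix is to observe that writing $U = AR^{-1}T$ gives $\|e_i^\top U\|_2 = \|(e_i^\top AR^{-1}) T\|_2$, and the sketching error from $S$ shows up \emph{only} as the multiplier $T$; passing to row norms of $UU^\top$ (which is the projector onto the column span of $A$ and is basis-invariant) absorbs $T$ entirely, so the constant factor distortion from $S$ cancels out and only the $(1\pm\eps)$ JL error remains. Summing the costs gives $\tilde{O}(\nnz(A) + d^3/\eps^2)$ as claimed, where the $\log n$ factors coming from $k$ are hidden in $\tilde{O}(\cdot)$.
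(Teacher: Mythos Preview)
The paper does not prove this proposition; it is quoted as Theorem~29 of \cite{cw13} and used as a black box. Your high-level plan --- sparse subspace embedding, QR on the sketch, then a right-side JL projection to read off row norms --- is exactly the algorithm of that reference (and of Drineas, Magdon-Ismail, Mahoney, and Woodruff before it).

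There is, however, a real gap in your accuracy argument. You take $S$ to be only a \emph{constant}-factor subspace embedding ($m=\Theta(d^2)$ rows), so $AR^{-1}$ has singular values in some fixed interval $[1/c,c]$ with $c>1$ bounded away from $1$. Writing the SVD $AR^{-1}=U\Sigma V^\top$, one gets
\[
\|e_i^\top AR^{-1}\|_2^2 \;=\; \|e_i^\top U\Sigma\|_2^2 \;=\; \sum_j U_{ij}^2\,\Sigma_{jj}^2,
\]
which is only within a multiplicative $c^2$ of the true leverage score $\sigma_i(A)=\sum_j U_{ij}^2$. Your proposed ``fix'' --- that passing to the basis-invariant projector $UU^\top$ absorbs the change-of-basis $T$ --- does not apply: you never compute $UU^\top$ or its diagonal; you compute (JL estimates of) the row norms of $AR^{-1}$, and those are \emph{not} the diagonal of $UU^\top$ unless $AR^{-1}$ is already orthonormal. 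Normalizing by $\|AM\|_F^2$ does not repair this either, since the ratio can still be off by a $c^{O(1)}$ factor.

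The correct version makes $S$ a $(1\pm O(\eps))$ subspace embedding, so that every singular value of $AR^{-1}$ lies in $[1-O(\eps),1+O(\eps)]$; then $\|e_i^\top AR^{-1}\|_2^2=(1\pm O(\eps))\,\sigma_i(A)$ directly, and the JL step contributes only a second $(1\pm\eps)$ factor. To keep the additive cost at $\tilde O(d^3/\eps^2)$ rather than $d^4/\eps^2$, one composes the initial CountSketch with a second fast dense embedding (e.g., SRHT) to bring the sketch down to $\tilde O(d/\eps^2)$ rows before performing the QR factorization.
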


\begin{proposition}\label{prop:leveragesamplemain}
	Given $A = \otimes_{i=1}^q A_i$, where $A_i \in \R^{n_i \times d_i}$, there is an algorithm which, with probability $1-1/n^c$ for any constant $c \geq 1$, outputs a diagonal matrix $D \in \R^{n \times n}$ with $m$ non-zeros entries, such that $D_{i,i} =1/(m \tilde{\sigma}_i)$ is non-zero with probability $\tilde{\sigma}_i \in (1 \pm 1/10) \sigma_i(A)$. The time required is $\tilde{O}( \sum_{i=1}^q \nnz(A_i) + \poly(dq/\eps) + mq)$.
\end{proposition}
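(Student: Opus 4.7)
The plan is to exploit the multiplicative structure of the leverage scores of a Kronecker product given by Corollary~\ref{cor:leverage}: we have $\sigma_{\vec{i}}(A) = \prod_{j=1}^{q} \sigma_{i_j}(A_j)$ for every row index $\vec{i} = (i_1,\ldots,i_q)$, and since $\sum_{i_j} \sigma_{i_j}(A_j) = d_j$, the leverage-score distribution on the rows of $A$ factors as the product distribution $\prod_{j=1}^q \bigl(\sigma_{i_j}(A_j)/d_j\bigr)$. Thus we can sample each coordinate of $\vec{i}$ independently and never form $A$ explicitly.

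First I would invoke Proposition~\ref{prop:leveragescoresCW} on each $A_j$ with accuracy parameter $\eps_0 := c_0/q$ for a small enough absolute constant $c_0$, producing approximate scores $\tilde{\sigma}_{i_j}(A_j) = (1\pm\eps_0)\sigma_{i_j}(A_j)$ for every $j\in[q]$ and $i_j\in[n_j]$. A union bound over the $q$ invocations (raising the polynomial exponent in Proposition~\ref{prop:leveragescoresCW} by one if needed) shows all approximations hold simultaneously with probability at least $1-1/n^c$. This step costs $\sum_{j=1}^q \tilde{O}\bigl(\nnz(A_j) + d_j^3/\eps_0^{2}\bigr) = \tilde{O}\bigl(\sum_j \nnz(A_j) + \poly(dq/\eps)\bigr)$ and dominates the runtime apart from the sampling itself.

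Next I would define $p_j(i_j) := \tilde{\sigma}_{i_j}(A_j)/\tilde{S}_j$ with $\tilde{S}_j := \sum_{i_j}\tilde{\sigma}_{i_j}(A_j)$, and build an alias-method sampler for each $p_j$ in $O(n_j)$ preprocessing time (all-zero rows of $A_j$ have zero leverage score and can be omitted, so this cost is absorbed into $\nnz(A_j)$). Each sample of a Kronecker row index $\vec{i}$ of $A$ is then obtained by drawing $i_j \sim p_j$ independently for each $j \in [q]$, which takes $O(q)$ time per sample and thus $O(mq)$ across $m$ draws. For each sampled index I would record $\tilde{q}_{\vec{i}} := \prod_j p_j(i_j)$ and place the specified reweighting $1/(m\tilde{q}_{\vec{i}})$ on the appropriate diagonal entry of $D$.

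The one analytic subtlety---which I expect to be the main, if mild, obstacle---is showing that $\tilde{q}_{\vec{i}}$ is a $(1\pm 1/10)$-multiplicative approximation to the true leverage probability $\sigma_{\vec{i}}(A)/d$. Expanding, $\tilde{q}_{\vec{i}} = \bigl(\prod_j (1\pm\eps_0)\sigma_{i_j}(A_j)\bigr) / \bigl(\prod_j (1\pm\eps_0)d_j\bigr) = (1\pm\eps_0)^{2q}\,\sigma_{\vec{i}}(A)/d$, and $(1\pm\eps_0)^{2q} \subseteq (1\pm 1/10)$ once $c_0$ is chosen small enough. With this compounding bound in hand, the claim follows by combining Corollary~\ref{cor:leverage} with standard alias-method sampling, and the asserted runtime $\tilde{O}(\sum_j \nnz(A_j) + \poly(dq/\eps) + mq)$ is exactly the sum of the three stages above.
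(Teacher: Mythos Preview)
Your proposal is correct and takes essentially the same approach as the paper: compute $\Theta(1/q)$-accurate leverage scores for each factor via Proposition~\ref{prop:leveragescoresCW}, sample each coordinate of $\vec{i}$ independently from the resulting per-factor distributions (using Corollary~\ref{cor:leverage}), and bound the compounded error $(1\pm\eps_0)^{O(q)} \subseteq (1\pm 1/10)$. The only cosmetic discrepancy is that you use the unnormalized convention $\sum_{i_j}\sigma_{i_j}(A_j)=d_j$, whereas the paper's Definition~\ref{def:l2leverage} normalizes leverage scores to sum to~$1$; this changes nothing in the argument.
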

\begin{proof}
	By Proposition \ref{prop:leveragescoresCW}, we can compute approximate leverage scores of each $A_i$ up to error $\Theta(1/q)$ in time $\tilde{O}(\nnz(A_i) + \poly(d/\eps))$ with high probability. To sample a leverage score from $A$, it suffices to sample one leverage score from each of the $A_i$'s by Corollary \ref{cor:leverage}. The probability that a given row $\vec{i} = (i_1,\dots,i_q) \in [n_1] \times [n_2] \times \dots \times [n_q]$ of $A$ is chosen is $\prod_{j=1}^q \tilde{\sigma}(A_j)_{i_j} = (1 \pm \Theta(1/q))^q \sigma_{\vec{i}}(A) = (1 \pm 1/10)\sigma_{\vec{i}}(A)$ as needed. Obtaining a sample takes $\tilde{O}(1)$ time per $A_i$ (since a random number needs to be generated to $O(\log(n))$-bits of precision in expectation and with high probability to obtain this sample), thus $O(q)$ time overall, so repeating the sampling $M$ times gives the desired additive $mq$ runtime.
\end{proof}

The $q = 1$ version of the following result can be found in \cite{cw13,swz19}.
\begin{proposition}\label{prop:approxmatrixprod}
	Let $D \in \R^{n \times n}$ be the diagonal row sampling matrix generated via Proposition \ref{prop:leveragesamplemain}, with $m = \Theta(1/(\delta\eps^2))$ non-zero entries. Let $A = \otimes_{i=1}^q A_i$ as above, and let $U \in \R^{n \times r}$ be an orthonormal basis for the column span of $A$, where $r = \rank(A)$. Then for any matrix $B$ with $n$ rows, we have 
	\[ \bpr{ \| U^\top D^\top  D B - U^\top  B \|_F \leq \eps \| U \|_F\|B\|_F} \geq 1 - \delta \]
\end{proposition}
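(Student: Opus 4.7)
The statement is the standard approximate matrix product guarantee via $\ell_2$ leverage score sampling, specialized to the case where one of the two matrices is an orthonormal basis. The plan is to apply the classical sum-of-variances analysis (analogous to the $q=1$ case cited from \cite{cw13,swz19}), using Corollary \ref{cor:leverage} and Proposition \ref{prop:leveragesamplemain} to handle the Kronecker product structure. The key point is that because $U$ is an orthonormal basis for the column span of $A$, its squared row norms $\|U_i\|_2^2$ equal the leverage scores $\sigma_i(A)$, which by Corollary \ref{cor:leverage} factor as $\sigma_{\vec{i}}(A) = \prod_j \sigma_{i_j}(A_j)$. Thus the sampling procedure in Proposition \ref{prop:leveragesamplemain} effectively samples rows with probability proportional to $\|U_i\|_2^2$, up to a $(1\pm 1/10)$ multiplicative factor.

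I would first represent $U^\top D^\top D B$ as a sum of $m$ i.i.d.\ random matrices. Writing $p_i = \tilde\sigma_i/\sum_j \tilde\sigma_j$ for the normalized sampling probabilities and $i_k$ for the row sampled at step $k \in [m]$, set $Y_k = \tfrac{1}{m p_{i_k}} U_{i_k}^\top B_{i_k}$, so that $U^\top D^\top D B = \sum_{k=1}^m Y_k$. Unbiasedness $\E[\sum_k Y_k] = U^\top B$ is immediate since $\E[Y_k] = \tfrac{1}{m}\sum_i p_i \cdot \tfrac{1}{p_i} U_i^\top B_i = \tfrac{1}{m} U^\top B$. Next I would bound the second moment of a single summand, using $\|U_i^\top B_i\|_F = \|U_i\|_2 \|B_i\|_2$ and the approximate leverage score lower bound $p_i \geq (9/11)\cdot \|U_i\|_2^2/\|U\|_F^2$ implied by Proposition \ref{prop:leveragesamplemain}:
\[
\E\bigl[\|Y_k\|_F^2\bigr] \;=\; \sum_i p_i \cdot \frac{\|U_i\|_2^2 \|B_i\|_2^2}{m^2 p_i^2} \;\leq\; \frac{C \|U\|_F^2 \|B\|_F^2}{m^2}
\]
for an absolute constant $C$. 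By independence of the $m$ samples, the variance decomposes additively, giving $\E[\|U^\top D^\top D B - U^\top B\|_F^2] \leq C \|U\|_F^2 \|B\|_F^2/m$. Finally, Markov's inequality with $m = \Theta(1/(\delta\eps^2))$ (with a sufficiently large constant absorbing $C$) yields
\[
\Pr\bigl[\|U^\top D^\top D B - U^\top B\|_F > \eps \|U\|_F \|B\|_F\bigr] \;\leq\; \frac{C}{m\eps^2} \;\leq\; \delta,
\]
which is the desired bound.

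The only real obstacle is bookkeeping: the sampling probabilities are proportional to \emph{approximate} leverage scores rather than exact ones, and the approximation factor itself arises from the $q$-fold product $(1\pm\Theta(1/q))^q = (1\pm 1/10)$ in Proposition \ref{prop:leveragesamplemain}. Since this only affects the analysis by a constant factor, it is absorbed into $C$ and does not change the sample complexity $m = \Theta(1/(\delta\eps^2))$. Everything else is a routine application of second-moment methods, with the Kronecker structure entering only through the efficient sampling of Proposition \ref{prop:leveragesamplemain}.
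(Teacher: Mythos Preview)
Your proposal is correct and follows essentially the same approach as the paper. The paper's proof is a two-sentence appeal to Theorem 2.1 of \cite{kannan2017randomized} after noting that $D$ samples row $U_{i,*}$ with probability at least $(9/10)\|U_{i,*}\|_2^2/\|U\|_F^2$; you have simply unpacked the second-moment/Markov argument that underlies that black-box theorem, which is exactly the right content.
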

\begin{proof}
	By definition of leverage scores and Proposition \ref{prop:leveragesamplemain}, $D$ is a matrix which sample each row $U_{i,*}$ of $U$ with probability at least $(9/10)\|U_{i,*}\|_2/\|U\|_F$. Taking the average of $m$ such rows, we obtain the approximate matrix product result with error $O(1/\sqrt{\delta m})$ with probability $1-\delta$ by Theorem 2.1 of \cite{kannan2017randomized}.
\end{proof}

We now ready to prove the main theorem of this section, Theorem \ref{thm:l2_Kronregression} \\
{\bf Theorem \ref{thm:l2_Kronregression}}~(Kronecker product $\ell_2$ regression){\bf.} {\it
	Let $D \in \R^{n \times n}$ be the diagonal row sampling matrix generated via Proposition \ref{prop:leveragesamplemain}, with $m = \Theta(1/(\delta\eps^2))$ non-zero entries, and let $A = \otimes_{i=1}^q A_i$, where $A_i \in \R^{n_i \times d_i}$, and $b \in \R^n$, where $n = \prod_{i=1}^q n_i$ and $d=\prod_{i=1}^q d_i$. Then we have let $\hat{x} = \arg \min_{x \in \R^d} \|D A x - Db\|_2$, and let $x^* = \arg \min_{x' \in \R^d} \|Ax-b\|_2$. Then with probability $1-\delta$, we have
	\[ \|A \hat{x} - b\|_2 \leq (1+\eps)  \|Ax^*-b\|_2\]
	Moreover, the total runtime requires to compute $\hat{x}$ is 
	\begin{align*}
	\tilde{O} \left( \sum_{i=1}^q \nnz(A_i) + \poly( dq / (\delta\eps) ) \right).
	\end{align*}}
\begin{proof}
	Let $U$ be an orthonormal basis for the column span of $A$. By Lemma 3.3 of \cite{cw09}, we have $\|A(\hat{x} - x^*)\|_2 \leq 2 \sqrt{\eps}\|Ax^* - b\|_2$. Note that while  Lemma 3.3 of \cite{cw09} uses a different sketching matrix $D$ than us, the only property required for the proof of Lemma 3.3 is that $| U^\top D^\top  D B - U^\top  B \|_F \leq \sqrt{\eps/d} \|A\|_F\|B\|_F$ with probability at least $1-\delta$ for any fixed matrix $B$, which we obtain by Proposition \ref{prop:approxmatrixprod} by having $O(d/(\delta\eps^2))$ non-zeros on the diagonal of $D$). By the normal equations, we have $A^\top  (Ax^* - b) = 0$, thus  $ \langle A (\hat{x} - x^*), (Ax^* - b)\rangle = 0$, and so by the Pythagorean theorem we have 
	\begin{align*}
	\|A \hat{x} - b\|_2^2 =\|Ax^* - b\|_2^2 + \|A(\hat{x} - x^*)\|_2^2  
	\leq (1+4\eps)\|Ax^* - b\|_2^2 
	\end{align*}
	Which completes the proof after rescaling of $\eps$. The runtime required to obtain the matrix $D$ is  $\tilde{O}( \sum_{i=1}^q \nnz(A_i) + \poly(dq/\eps))$ by Proposition \ref{prop:leveragesamplemain}, where we set $D$ to have $m = \Theta(d/(\delta\eps^2))$ non-zero entries on the diagonal. Once $D$ is obtained, one can compute $D(A+b)$ in time $O(md)$, thus the required time is $O(\delta^{-1}(d/\eps)^2)$. Finally, computing $\hat{x}$ once $DA,Db$ are computed requires a single pseudo-inverse computation, which can be carried out in $O(\delta^{-1}d^3/\eps^2)$ time (since $DA$ now has only $O(\delta^{-1}(d/\eps)^2)$ rows). 

\end{proof}

\subsection{Kronecker Product $\ell_p$ Regression}\label{sec:lpreg}
We now consider $\ell_p$ regression for $1 \leq p < 2$. Our algorithm is stated formally in Algorithm ~\ref{alg:l1}. 
	Our high level approach follows that of \cite{ddhkm09}. Namely, we first obtain a vector $x'$ which is an $O(1)$-approximate solution to the optimal solution. This is done by first constructing (implicitly) a matrix $U \in \R^{n \times d}$ that is a well-conditioned basis for the design matrix $A_1 \otimes \cdots \otimes A_q$. We then efficiently sample rows of $U$ with probability proportional to their $\ell_p$ norm (which must be done without even explicitly computing most of $U$). We then use the results of $\cite{ddhkm09}$ to demonstrate that solving the regression problem constrained to these sampled rows gives a solution $x' \in \R^d$ such that $\|(A_1 \otimes \cdots \otimes A_q)x' - b\|_p \leq 8 \min_{x \in \R^d}\|(A_1 \otimes \cdots \otimes A_q)x' - b\|_p$.
	
	We define the \textit{residual error} $\rho = (A_1 \otimes \cdots \otimes A_q)x' - b \in \R^n$ of $x'$. Our goal is to sample additional rows $i \in [n]$ with probability proportional to their residual error $|\rho_i|^p/\|\rho\|_p^p$, and solve the regression problem restricted to the sampled rows. However, we cannot afford to compute even a small fraction of the entries in $\rho$ (even when $b$ is dense, and certainly not when $b$ is sparse). So to carry out this sampling efficiently, we design an involved, multi-part sketching and sampling routine (described in Section \ref{sec:sampleresidual}). This sampling technique is the main technical contribution of this section, and relies on a number of techniques, such as the Dyadic trick for quickly finding heavy hitters from the streaming literature, and a careful pre-processing step to avoid a $\poly(d)$-blow up in the runtime. Given these samples, we can obtain the solution $\hat{x}$ after solving the regression problem on the sampled rows, and the fact that this gives a $(1 + \eps)$ approximate solution will follow from Theorem $6$ of \cite{ddhkm09}.

	\begin{algorithm*}[!h]\caption{Our $\ell_p$ Kronecker Product Regression Algorithm, $1 \leq p < 2$}\label{alg:l1}
		\begin{algorithmic}[1]
			\Procedure{$O(1)$-approximate $\ell_p$ Regression}{$\{ A_i, n_i, d_i \}_{i \in [q] }$} \Comment{Theorem~\ref{thm:l1_regression}}
			
			\State $d \leftarrow \prod_{i=1}^q d_i$, $n \leftarrow \prod_{i=1}^q n_i$.
			\For{$i = 1 ,\dots, q$}
			\State $s_i \leftarrow O( q d_i^2)$
			\State Generate sparse $p$-stable transform $S_i \in \R^{s_i \times n}$ (def \ref{def:sparse_sketch}) \Comment{Lemma ~\ref{lem:lowdistembed}}
			\State Take the QR factorization of $S_i A_i = Q_i R_i$ to obtain $R_i \in \R^{d_i \times d_i}$ \Comment{Fact ~\ref{fact:wellcond}}
			\State Let $Z \in \R^{d \times \tau}$ be a dense $p$-stable transform for $\tau = \Theta(\log(n))$
			\Comment{Definition \ref{def:dense_sketch}}
			\For{$j = 1,\dots,n_i$}
			
			\State $a_{i,j} \leftarrow   \median_{\eta \in[\tau]}\{ (|(A_iR_i^{-1} Z)_{j,\eta}|/ \theta_p	)^p		\}$, where $\theta_p$ is the median of $\mathcal{D}_p$.
			\EndFor
			\EndFor
			\State Define a distribution $\mathcal{D} = \{q_1',q_1',\dots,q_n'\}$ by $q_{{\sum_{i=1}^q j_i \prod_{l = 1}^{j-1}  n_l }}' = \prod_{i=1}^q a_{i,j_i}$.
			\State Let $\Pi \in \R^{n \times n}$ denote a diagonal sampling matrix, where $\Pi_{i,i} = 1/q_i^{1/p}$ with probability $q_i =\min\{1, r_1 q_i'\}$ and $0$ otherwise, where $r_1 = \Theta(d^3/\eps^2)$.  \Comment{\cite{ddhkm09}}
			\State Let $x' \in \R^{d}$ denote the solution of \\
			\hspace{30mm} $\min_{x \in \R^{d} } \| \Pi ( A_1  \otimes A_2 \otimes \cdots \otimes A_q  ) x - \Pi b \|_p$
			\State \Return $x'$ \Comment{$x'$ is an $O(1)$ approx: Lemma \ref{lem:O1approx}}
			\EndProcedure
			\Procedure{$(1+\eps)$-approximate $\ell_p$ Regression}{$x'\in \R^d$}
			
			\State Implicitly define $\rho =( A_1  \otimes A_2 \otimes \cdots \otimes A_q  ) x' -  b \in \R^{n}$
			\State Via Lemma~\ref{lem:lpsampling}, compute a diagonal sampling matrix  $\Sigma \in \R^{n \times n}$ such that $\Sigma_{i,i} = 1/\alpha_i^{1/p}$ with probability $\alpha_i = \min\{	1, \max\{q_i ,r_2 |\rho_i|^p/\|\rho\|_p^p\}	\}$ where $r_2 = \Theta(d^3/\eps^3)$. 
			\State Compute  $\wh{x} =\arg \min_{x \in \R^d} \|\Sigma(A_1 \otimes A_2 \otimes \cdots \otimes A_q) - \Sigma b\|_p$ (via convex optimization methods, e.g., \cite{bcll18,akps19,lsz19})
			\State \Return $\wh{x}$
			\EndProcedure
			
		\end{algorithmic}
	\end{algorithm*}

	\subsubsection{The $\ell_p$ Regression Algorithm}\label{app:lpReg}
	We now give a complete proof of Theorem \ref{thm:l1_regression}.
	Our high level approach follows that of \cite{ddhkm09}. Namely, we first obtain a vector $x'$ which is a $O(1)$ approximate solution to the optimal, and then use the \textit{residual error} $\rho \in \R^d$ of $x'$ to refine $x'$ to a $(1 \pm \eps)$ approximation $\wh{x}$. The fact that $x'$ is a constant factor approximation follows from our Lemma \ref{lem:O1approx}. Given $x'$, by Lemma \ref{lem:lpsampling} we can efficiently compute the matrix $\Sigma$ which samples from the coordinates of the residual error $\rho = (A_1 \otimes \cdots \otimes A_q)x' - b$ in the desired runtime. The sampling lemma is the main technical lemma, and requires a careful multi-part sketching and sampling routine. Given this $\Sigma$, the fact that $\hat{x}$ is a $(1 + \eps)$ approximate solution follows directly from Theorem $6$ of \cite{ddhkm09}. Our main theorem and its proof is stated below. The proof will utilize the lemmas and sampling algorithm developed in the secitons which follow. \\	
\begin{theorem}[Main result, $\ell_p$ $(1+\epsilon)$-approximate regression]\label{thm:l1_regression}
	Fix $1 \leq p < 2$. Then for any constant $q = O(1)$, given matrices $A_1, A_2, \cdots, A_q$, where $A_i \in \R^{n_i \times d_i}$, let $n = \prod_{i=1}^q n_i$, $d = \prod_{i=1}^q d_i$.
	Let $\hat{x} \in \R^d$ be the output of Algorithm \ref{alg:l1}. Then 
	
	\[\|  (A_1 \otimes A_2 \otimes \cdots \otimes A_q) \hat{x} - b \|_p  \leq (1+ \eps) \min_{x \in \R^n}\|  (A_1 \otimes A_2 \otimes \cdots \otimes A_q) x -b \|_p	\]
	holds with probability at least $1-\delta$. In addition, our algorithm takes 
	\begin{align*}
	\wt{O} \left( \left( \sum_{i=1}^q \nnz(A_i)  + \nnz(b) + \poly(d\log(1/\delta)/\eps) \right) \log(1/\delta) \right)
	\end{align*}
	time to output $\wh{x} \in \R^d$.   
\end{theorem}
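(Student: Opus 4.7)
The plan is to follow the two-stage sketch-and-sample framework of \cite{ddhkm09}, as laid out in Algorithm~\ref{alg:l1}. First I would produce a constant-factor approximation $x'$ by sampling rows according to (approximate) $\ell_p$ row-norms of a well-conditioned basis for $\mathcal{A}:=A_1\otimes\cdots\otimes A_q$, and then refine to a $(1+\eps)$-approximation by additionally sampling rows in proportion to the $p$-th powers of the entries of the residual $\rho=\mathcal{A}x'-b$.

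For the first stage, the structural observation I would use is that if $S_i A_i = Q_i R_i$ for each $i$, then Fact~\ref{fact:wellcond} together with Lemma~\ref{lem:lowdistembed} implies each $A_i R_i^{-1}$ is an $\ell_p$ well-conditioned basis for $A_i$. Using the identity $(A_1 R_1^{-1})\otimes\cdots\otimes(A_q R_q^{-1}) = \mathcal{A}(R_1\otimes\cdots\otimes R_q)^{-1} =: U$ and the fact that $\ell_p$-norms factor over Kronecker products of vectors, one checks that $U$ is itself an $\ell_p$ well-conditioned basis for $\mathcal{A}$ with $\beta/\alpha = d^{O(1)}$. Crucially, the $p$-th power of the row $\ell_p$-norm of $U$ at multi-index $\vec j=(j_1,\dots,j_q)$ factors as $\prod_i \|(A_iR_i^{-1})_{j_i,*}\|_p^p$, so sampling from the row-norm distribution on $U$ reduces to drawing one index per factor from a distribution supported on $[n_i]$. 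To avoid ever forming $A_iR_i^{-1}$ explicitly (which would be prohibitive), I would estimate these row norms by Indyk's median-of-$p$-stable-sketches estimator applied via the dense $p$-stable matrix $Z$ with $\tau=\Theta(\log n)$ columns: concentration of $\median_{\eta\in[\tau]}(|(\cdot)_{j,\eta}|/\theta_p)^p$ gives $a_{i,j}=(1\pm O(1))\|(A_iR_i^{-1})_{j,*}\|_p^p$ for every $(i,j)$ with high probability by a union bound, and products of these yield constant-factor approximations to the row norms of $U$. With $r_1=\Theta(d^3/\eps^2)$ rescaled samples drawn in this factored way, the subspace-preserving sampling analysis of \cite{ddhkm09} gives the $O(1)$-approximation guarantee for $x'$ stated in Lemma~\ref{lem:O1approx}.

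For the second stage, I would invoke Theorem~6 of \cite{ddhkm09}: if each row $i$ is kept with probability at least $\min\{1,\max\{q_i,\,r_2|\rho_i|^p/\|\rho\|_p^p\}\}$ for $r_2=\Theta(d^3/\eps^3)$, then solving the rescaled regression problem on the surviving rows yields a $(1+\eps)$-approximation. The main obstacle is implementing this sampling distribution in near-input-sparsity time, since $\rho$ has $n=\prod n_i$ coordinates and cannot even be touched entry-by-entry when both $\nnz(b)$ and the Kronecker factors are large. This is exactly the role of Lemma~\ref{lem:lpsampling}, and its proof is where the heavy-hitter dyadic trick and the precision-sampling framework must be combined: one rescales the coordinates of $\rho$ by independent $p$-stable (or exponential) scalings that convert $\ell_p$-weight into an $\ell_\infty$-heavy-hitter problem, then uses heavy-hitter sketches applied efficiently by exploiting the factored form of $\mathcal{A}x'$ together with explicit access to $b$, so that a sample from $|\rho_i|^p/\|\rho\|_p^p$ can be returned in $\wt O(\sum_i\nnz(A_i)+\nnz(b)+\poly(d/\eps))$ time. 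A standard $\log(1/\delta)$ repetition then boosts success probability to $1-\delta$.

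For the runtime I would account for each step as follows: the sparse $p$-stable sketches $S_iA_i$ cost $O(\nnz(A_i))$ each by Definition~\ref{def:sparse_sketch}; the QR factorizations of the $s_i\times d_i$ matrices, the inversions of $R_i$, and the product $R_i^{-1}Z$ each cost $\poly(d/\eps)$; computing the $a_{i,j}$'s via $A_i(R_i^{-1}Z)$ adds $\wt O(\nnz(A_i))$ per factor; drawing $r_1$ Kronecker-factored samples and solving the sketched $O(1)$-approximate regression cost $\poly(d/\eps)$; and the residual-sampling step of Lemma~\ref{lem:lpsampling} dominates at $\wt O((\sum_i\nnz(A_i)+\nnz(b)+\poly(d/\eps))\log(1/\delta))$, after which the final rescaled $\ell_p$ regression on $\poly(d/\eps)$ surviving rows costs $\poly(d/\eps)$ via convex-programming solvers such as those of \cite{bcll18,akps19,lsz19}. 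Chaining the constant-factor guarantee (Lemma~\ref{lem:O1approx}) with the $(1+\eps)$-refinement of Theorem~6 of \cite{ddhkm09} (enabled by Lemma~\ref{lem:lpsampling}) yields both the claimed approximation and the claimed runtime, completing the plan.
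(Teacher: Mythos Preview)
Your plan follows the paper's two-stage \cite{ddhkm09} approach and is structurally correct; two points deserve attention.

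First, your one-line ``standard $\log(1/\delta)$ repetition then boosts success probability'' is not sufficient as stated. Repetition for an optimization problem requires a way to \emph{select} among the $O(\log(1/\delta))$ candidate outputs, but the objective $\|\mathcal{A}\hat x - b\|_p$ lives in $\R^n$ and cannot be evaluated within the target time budget. The paper handles this explicitly: it runs the full procedure $r=O(\log(1/\delta))$ times with constant failure probability to obtain candidates $\hat x_1,\dots,\hat x_r$, and \emph{separately} generates $r$ fresh sampling matrices $\Sigma_1,\dots,\Sigma_r$. For each fixed $\hat x_i$, a Chebyshev bound (using the variance computation from Lemma~9 of \cite{ddhkm09}, with $r_2=\poly(d)$ large enough) shows each $\|\Sigma_j(\mathcal{A}\hat x_i-b)\|_p^p$ is a $(1\pm\eps)$ estimate of the true cost with probability $99/100$, so the median over $j$ succeeds for all $i$ with probability $1-\delta$, and one returns the $\hat x_i$ of minimum estimated cost. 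This extra verification step is what actually produces the $\log(1/\delta)$ factor in the runtime.

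Second, your sketch of the internals of Lemma~\ref{lem:lpsampling} (``rescale coordinates of $\rho$ by independent $p$-stable or exponential scalings to convert $\ell_p$-weight into an $\ell_\infty$ heavy-hitter problem'') is the mechanism of the \emph{all-pairs} sampler (Algorithm~\ref{alg:sample}), not the Kronecker residual sampler (Algorithm~\ref{alg:residualsample}). The latter builds the sampled multi-index $(a_1,\dots,a_q)$ one coordinate at a time: at level $i$ it uses Kronecker products of $p$-stable row vectors (Proposition~\ref{prop:stablekronecker}) to get $(c\log n)^{O(q)}$-error estimates of the conditional masses $\sum_{\vec u\in\Delta(\vec a):u_i=j}|\rho_{\vec u}|^p$, extracts the heavy conditional coordinates via the dyadic count-sketch of Definition~\ref{def:csHH}, and then either samples a heavy coordinate directly or uniformly subsamples a random bucket of the light coordinates. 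Since you invoke the lemma as a black box this does not break your theorem-level argument, but the description of what is inside should be corrected.
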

	
	\begin{proof}
		By Lemma \ref{lem:O1approx}, the output $x'$ in line 16 of Algorithm \ref{alg:l1} is an $8$ approximation of the optimal solution, and $x'$ is obtained in time $\wt{O}(\sum_{i=1}^q \nnz(A_i) + (dq/\eps)^{O(1)})$. We then obtain the residual error $\rho = (A_1 \otimes \cdots \otimes A_q)x' - b$ (implicitly). By Theorem 6 of \cite{ddhkm09}, if we let $\Sigma \in \R^{n \times n}$ be a row sampling matrix where $\Sigma_{i,i} = 1/\alpha_i^{1/p}$ with probability $\alpha_i = \min\{	1, \max\{	q_i, r_2 \frac{|\rho_i|^p}{\|\rho\|_p^p}	\}$, where $q_i$ is the row sampling probability used in the sketch $\Pi$ from which $x'$ was obtained, and $r_2 = O(d^3 /\eps^2\log(1/\eps))$, then the solution to $\min_x \|\Sigma (A_1 \otimes \cdots \otimes A_q)x - \Sigma b\|_p$ will be a $(1 + \eps)$ approximately optimal solution. By Lemma \ref{lem:lpsampling}, we can obtain such a matrix $\Sigma$ in time  $\wt{O}(\sum_{i=1}^q \nnz(A_i) + q\nnz(b) + (d\log(n)/(\eps\delta )^{O(q^2)})$, which completes the proof of correctness. Finally, note that we can solve the sketched regression problem $\min_x \|\Sigma (A_1 \otimes \cdots \otimes A_q)x - \Sigma b\|_p$ which has $O((d\log(n)/\eps)^{O(q^2)} (1/\delta))$ constraints and $d$ variables in time $O((d\log(n)/\eps)^{O(q^2)} (1/\delta))$ using linear programming for $p=1$ (see \cite{ls14,ls15,cls19,lsz19,b20} for some state of the art linear program solvers), or more generally interior point methods for convex programming for $p> 1$ (see \cite{bcll18,akps19,lsz19} for the recent development of $\ell_p$ solver).
		
		Now to boost the failure probability from a $O(1/\delta)$ to $\log(1/\delta)$ dependency, we do the following. We run the above algorithm with $\delta = 1/10$, so that our output $\hat{x} \in \R^d$ is a $(1+\eps)$ approximation with probability $9/10$, and we repeat this $r$ times with $r=O(\log(1/\delta))$ time to obtain $\hat{x}_1 , \hat{x}_2 ,\dots, \hat{x}_r,$, and then we repeat another $r$ times to obtain distinct sampling matrices $\Sigma_1,\dots,\Sigma_r$ (note that $\Sigma_i$ is not the sampling matrix associated to $\hat{x}_i$ in this notation, and comes from a distinct repetition of the above algorithm). This blows up the overall runtime by $O(\log(1/\delta))$. 
		Now for any vector $x \in \R^d$, let $X_i = |\Sigma_{i,i}(A_1 \otimes \cdots \otimes A_q)_{i,*}x -  b_i )|^p$. Clearly $\ex{\sum_i X_i} = \| (A_1 \otimes \cdots \otimes A_q)x -  b \|_p^p$. Moreover, we can bound $\ex{\sum_i X_i^2}$ by $\poly(d) \left(\ex{\sum_i X_i}\right)^2 / r_2$ (see proof of Lemma 9 in \cite{ddhkm09} for a computation). Setting $r_2 = \poly(d)$ large enough, by Chebyshev's we have that each $\Sigma_i$ preserves the cost of a fixed vector $x_j$ with probability $99/100$. 		
		so with probability $1-\delta$, after a union bound, for all $i \in [r]$ we have that 
		\[	\median_j \|\Sigma_j  (A_1 \otimes \cdots \otimes A_q)\hat{x}_i - \Sigma_j b\|_p\ = (1 \pm \eps ) 	\| (A_1 \otimes \cdots \otimes A_q)\hat{x}_i - b\|_p \]
		
		Thus we now have $(1+\eps)$-error approximations of the cost of each $\hat{x}_i$, and we can output the $\hat{x}_i$ with minimal cost. By Chernoff bounds, at least one $\hat{x}_i$ will be a $(1+\eps)$ optimal solution, so by a union bound we obtain the desired result with probability $1-2\delta$ as needed.

	\end{proof}

	We start by defining a tensor operation which will be useful for our analysis.  
	
	\begin{definition}[ $((\cdot , \dots ,\cdot ),\cdot)$ operator for tensors and matrices]\label{def:bracket}
		Given tensor $A\in \mathbb{R}^{d_1 \times d_2 \times \dots \times d_q}$ and matrices $B_i\in \mathbb{R}^{n_i\times d_i}$ for $i \in [q]$, we define the tensor $((B_1,B_2,\dots,B_q),A)\in \mathbb{R}^{n_1\times n_2 \times \dots \times n_q}$: 
		\[((B_1,B_2,\dots,B_q),A)_{i_1,\dots,i_q} =  \sum_{i_1'=1}^{d_1} \sum_{i_2'=1}^{d_2} \cdots \sum_{i_q'=1}^{d_q}A_{i_1',i_2',\dots,i_q'} \prod_{\ell=1}^q (B_\ell)_{i_\ell,i_\ell'} \]
		
		Observe for the case of $q=2$, we just have $((B_1,B_2), A) = B_1 A B_2^\top \in \R^{n_1 \times n_2}$. 
	\end{definition}
	Using the above notation, we first prove a result about reshaping tensors.

	\begin{lemma}[Reshaping]\label{lem:reshaping} 
		Given matrices $A_1, A_2, \cdots, A_q \in \R^{n_i \times d_i}$ and a tensor $B \in \R^{n_1 \times n_2 \times \cdots \times n_q}$, let $n = \prod_{i=1}^q n_i$ and let $d = \prod_{i=1}^d d_i$. Let $b$ denote the vectorization of $B$.
		For any tensor $X \in \R^{d_1 \times d_2 \times \cdots \times d_q}$, we have
		$\| ((A_1,A_2, \cdots, A_q),X) - B \|_{\xi}$ is equal to
		$ \| (A_1 \otimes A_2 \otimes \cdots \otimes A_q) x - b \|_{\xi}$
		where $\xi$ is any entry-wise norm (such as an $\ell_p$-norm) and $x$ is the vectorization of $X$. See Definition \ref{def:bracket} of the $((\cdot, \dots ,\cdot), \cdot)$ tensor operator. 
		
		Observe, for the case of $q=2$, this is equivalent to the statement that $\|A_1 X A_2^\top  - B\|_\xi = \|(A_1 \otimes A_2)x - b\|_\xi$.
		
	\end{lemma}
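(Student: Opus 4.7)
Since $\xi$ is an entry-wise norm, its value depends only on the multiset of entries of the object it is applied to, once we fix a bijection between the index sets. The plan, therefore, is to exhibit a natural bijection between the multi-indices $(i_1,\dots,i_q) \in [n_1] \times \cdots \times [n_q]$ that index the tensor form $((A_1,\dots,A_q),X) - B$ and the linear indices $k \in [n_1 \cdots n_q]$ that index the vector form $(A_1 \otimes \cdots \otimes A_q)x - b$, and then verify that corresponding entries are equal. This will immediately imply equality of the entry-wise norms.

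First I would fix the vectorization convention: for a tensor $T \in \R^{n_1 \times \cdots \times n_q}$ with entries $T_{i_1,\dots,i_q}$, let $\vect(T)_k = T_{i_1,\dots,i_q}$ where $k = i_q + \sum_{\ell=1}^{q-1} (i_\ell - 1) \prod_{\ell' > \ell} n_{\ell'}$ (or any analogous lexicographic ordering). The identical convention should be used to define $b = \vect(B)$ and $x = \vect(X)$, and the Kronecker product $A_1 \otimes \cdots \otimes A_q$ should be indexed by the induced multi-indexing on its rows and columns. The key algebraic identity to invoke is the standard indexing formula
\[
(A_1 \otimes A_2 \otimes \cdots \otimes A_q)_{(i_1,\dots,i_q),\,(i'_1,\dots,i'_q)} \;=\; \prod_{\ell=1}^{q} (A_\ell)_{i_\ell,\, i'_\ell},
\]
which follows by induction on $q$ from the $q=2$ case.

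Given these conventions, the $(i_1,\dots,i_q)$ entry of $(A_1 \otimes \cdots \otimes A_q)x$, viewed as a tensor after undoing $\vect$, is
\[
\sum_{i'_1,\dots,i'_q} \Bigl(\prod_{\ell=1}^{q} (A_\ell)_{i_\ell,i'_\ell}\Bigr) X_{i'_1,\dots,i'_q},
\]
which is precisely the definition of $((A_1,\dots,A_q),X)_{i_1,\dots,i_q}$ given in Definition \ref{def:bracket}. The $(i_1,\dots,i_q)$ entry of $b$, viewed as a tensor, is $B_{i_1,\dots,i_q}$ by definition of $\vect$. Hence the tensor $((A_1,\dots,A_q),X) - B$ and the vector $(A_1 \otimes \cdots \otimes A_q)x - b$ have the same multiset of entries under the bijection, so any entry-wise norm (in particular $\ell_p$) gives the same value on both.

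The only real obstacle is bookkeeping: ensuring that the vectorization order used to define $b$, $x$, and the Kronecker-product row/column indexing is mutually consistent, so that the identity $(A_1 \otimes \cdots \otimes A_q)_{(i_1,\dots,i_q),(i'_1,\dots,i'_q)} = \prod_\ell (A_\ell)_{i_\ell,i'_\ell}$ lines up correctly with the summation defining $((A_1,\dots,A_q),X)$. Once the convention is fixed once and for all, the proof reduces to a direct expansion as above, and the $q=2$ specialization $\|A_1 X A_2^\top - B\|_\xi = \|(A_1 \otimes A_2)x - b\|_\xi$ is the familiar $\vect(A_1 X A_2^\top) = (A_1 \otimes A_2)\vect(X)$ identity.
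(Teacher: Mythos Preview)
Your proposal is correct and follows essentially the same approach as the paper: fix a vectorization convention compatible with the Kronecker product indexing, expand the $(i_1,\dots,i_q)$ entry of $(A_1\otimes\cdots\otimes A_q)x$ using the identity $(A_1\otimes\cdots\otimes A_q)_{(i_1,\dots,i_q),(i_1',\dots,i_q')}=\prod_\ell (A_\ell)_{i_\ell,i_\ell'}$, and observe that this coincides with the definition of $((A_1,\dots,A_q),X)_{i_1,\dots,i_q}$. The paper carries out exactly this computation, just with the opposite linearization order (first index fastest rather than last), and your remark that the only obstacle is making the conventions mutually consistent is precisely the point.
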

	\begin{proof}
		For the pair $x \in \R^{d}$, $X \in \R^{d_1 \times d_2 \times \cdots \times d_q}$, the connection is the following: $\forall i_1 \in [d_1],  \dots, i_q \in [d_q]$,
		\begin{align*}
		x_{ i_1 + \sum_{l=2}^q ( i_l - 1) \cdot \prod_{t=1}^{l-1} d_t  } = X_{i_1, \cdots, i_q}.
		\end{align*}
		Similarly, for $b \in \R^{n}$, $B \in \R^{n_1 \times n_2 \times \cdots \times n_q}$,  for any $j_1, \in [n_1], \dots, j_q \in [n_q]$,
		\begin{align*}
		b_{ j_1 + \sum_{l=2}^q ( j_l - 1) \cdot \prod_{t=1}^{l-1} n_t  } = B_{j_1, j_2, \cdots, j_q}.
		\end{align*}
		
		For simplicity, for any $(i_1,\dots,i_q) \in [d_1] \times \dots \times [d_q]$ and $(j_1,\dots,j_q) \in [n_1]\times \dots \times [n_q]$ we define
		$\vec{i} =  i_1 + \sum_{l=2}^q ( i_l - 1) \cdot  \prod_{t=1}^{l-1} d_t$ and similarly $\vec{j} = j_1 + \sum_{l=2}^q ( j_l - 1) \cdot  \prod_{t=1}^{l-1} n_t $. 
		Then we can simplify the above relation and write 
		$x_{ \vec{i} } = X_{i_1,i_2,\cdots,i_q}, \text{~and~} b_{ \vec{j} } = B_{j_1,j_2, \cdots, j_q}.$
		
		For a matrix $Z$, let $Z_{i,*}$ denote the $i$-th row of $Z$. We consider the $\vec{j}$-th entry of $(A_1 \otimes A_2 \otimes \cdots \otimes A_q )x$, 
		\begin{align*}
		((A_1 \otimes A_2 \otimes \cdots \otimes A_q )x)_{\vec{j}}  =& ~ \left\langle  \left( A_1 \otimes A_2 \otimes \cdots \otimes A_q \right)_{\vec{j},*} \cdot x \right\rangle \\
		= & ~ \sum_{i_1 = 1}^{d_1} \sum_{i_2 = 1}^{d_2} \cdots \sum_{i_q = 1}^{d_q} \left( \prod_{l=1}^q (A_l)_{j_l, i_l} \right) \cdot  x_{ \vec{i} } \\
		= & ~ \sum_{i_1 = 1}^{d_1} \sum_{i_2 = 1}^{d_2} \cdots \sum_{i_q = 1}^{d_q}  \left( \prod_{l=1}^q (A_l)_{j_l, i_l} \right) \cdot X_{i_1,i_2, \cdots, i_q} \\
		= & ~ (( A_1, A_2, \cdots, A_q ),X )_{j_1,\dots,j_q}.
		\end{align*}Where the last equality is by Definition (\ref{def:bracket}). 
		Since we also have $b_{\vec{j}} =B_{j_1,\dots,j_q} $, this completes the proof of the Lemma.
	\end{proof}

	\subsubsection{Sampling From an $\ell_p$-Well-Conditioned Base}\label{sec:l1_tensor_well_conditioned_basis}

	In this Section, we discuss the first half of Algorithm \ref{alg:l1} which computes $x' \in \R^d$, which we will show is a $O(1)$-approximate solution to the optimal. First note that by Lemma \ref{lem:lowdistembed} together with fact \ref{fact:wellcond}, we know that $A_i R_i^{-1}$ is an $\ell_p$ well conditioned basis for $A_i$ (recall this means that $A_i R_i^{-1}$ is a $(\alpha,\beta,p)$ well conditioned basis for $A$, and $\beta/\alpha = d_i^{O(1)}$) with probability $1-O(1/q)$, and we can then union bound over this occurring for all $i \in [q]$. Given this, we now prove that  $(A_1 R_1^{-1} \otimes A_2 R_2^{-1} \otimes \dots \otimes A_q R_q^{-1})$ is a well conditioned basis for $ (A_1\otimes A_2  \otimes \dots \otimes A_q)$.
	\begin{lemma}
		Let $A_i \in \R^{n_i \times d_i}$ and $R_i \in \R^{d_i \times d_i}$. Then if $A_i R_i^{-1}$ is a $(\alpha_i,\beta_i,p)$ well-conditioned basis for $A_i$ for $i =1,2,\dots,q$, we have for all $x \in \R^{d_1\cdots d_q}$:
		\begin{align*}
		\prod_{i=1}^q \alpha_i \| x \|_p \leq \| (A_1 R_1^{-1} \otimes A_2 R_2^{-1} \otimes \dots \otimes A_q R_q^{-1}) x \|_p \leq\prod_{i=1}^q \beta_i \| x \|_p
		\end{align*}
	\end{lemma}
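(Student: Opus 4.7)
The plan is to reduce to the reshaping formulation of Lemma~\ref{lem:reshaping}, and then process one Kronecker factor at a time. Writing $B_i = A_i R_i^{-1}$ and letting $X \in \R^{d_1 \times \cdots \times d_q}$ be the tensorization of $x$, Lemma~\ref{lem:reshaping} gives that $(B_1 \otimes \cdots \otimes B_q)x$ is the vectorization of the tensor $Y = ((B_1,\ldots,B_q), X)$, and that the $\ell_p$ norm of the vector equals the entrywise $\ell_p$ norm of the tensor. So the task reduces to showing $\prod_{i=1}^q \alpha_i \cdot \|X\|_p \le \|((B_1,\ldots,B_q),X)\|_p \le \prod_{i=1}^q \beta_i \cdot \|X\|_p$.

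To see what happens in the $q=2$ case, which is the clean prototype, the tensor $Y$ is just $B_1 X B_2^{\top}$. Slicing by columns of $X$ and using that $B_1$ is a $(\ell_p,\alpha_1,\beta_1)$ well-conditioned basis for each column $X_{*,j} \in \R^{d_1}$, I would sum over $j$ to get $\alpha_1^p \|X\|_p^p \le \|B_1 X\|_p^p \le \beta_1^p \|X\|_p^p$. Then slicing $B_1 X$ by rows and viewing right-multiplication by $B_2^{\top}$ as applying $B_2$ to each row, the WCB property of $B_2$ and a second summation yield $\alpha_2^p \|B_1 X\|_p^p \le \|B_1 X B_2^{\top}\|_p^p \le \beta_2^p \|B_1 X\|_p^p$. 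Multiplying the two bounds gives the claim for $q=2$.

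For general $q$, I would formalize the same idea by induction on the mode. Define the partial tensor $Z^{(k)} = ((B_1,\ldots,B_k,I,\ldots,I), X) \in \R^{n_1 \times \cdots \times n_k \times d_{k+1} \times \cdots \times d_q}$, so $Z^{(0)} = X$ and $Z^{(q)} = Y$. For each $k$, applying $B_k$ along mode $k$ with all other indices fixed is precisely multiplication of the corresponding mode-$k$ fiber in $\R^{d_k}$ by $B_k$. Hence, after raising to the $p$-th power and summing the fiberwise bounds $\alpha_k^p \|\text{fiber}\|_p^p \le \|B_k \cdot \text{fiber}\|_p^p \le \beta_k^p \|\text{fiber}\|_p^p$ over all the remaining indices, one gets $\alpha_k^p \|Z^{(k-1)}\|_p^p \le \|Z^{(k)}\|_p^p \le \beta_k^p \|Z^{(k-1)}\|_p^p$. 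Telescoping this for $k=1,\ldots,q$ yields the desired two-sided bound with constants $\prod_i \alpha_i$ and $\prod_i \beta_i$.

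I do not anticipate a substantive obstacle; the only thing to be careful about is notational bookkeeping, namely verifying the identification of the mode-$k$ slice of $((B_1,\ldots,B_{k-1},I,\ldots,I),X)$ after applying $B_k$ along mode $k$ with the vector $B_k v$ for $v \in \R^{d_k}$ the corresponding mode-$k$ fiber, which is immediate from Definition~\ref{def:bracket}. Once this identification is made explicit, the inductive step is a one-line application of the WCB inequality followed by summation, and the overall argument is short.
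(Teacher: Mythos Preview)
Your proposal is correct and follows essentially the same approach as the paper: both use the reshaping Lemma~\ref{lem:reshaping} to convert the Kronecker product into the tensor operator $((B_1,\ldots,B_q),X)$, then process one factor at a time by slicing along the appropriate mode, applying the well-conditioned basis inequality to each fiber, and summing. The paper writes out the $q=2$ case in detail and then waves at the general case, whereas you make the induction over modes explicit via the partial tensors $Z^{(k)}$, but the underlying argument is identical.
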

	\begin{proof}
		We first consider the case of $q=2$. We would like to prove
		\begin{align*}
		\alpha_1 \alpha_2 \| x \|_p \leq \| (A_1 R_1^{-1} \otimes A_2 R_2^{-1}) x \|_p \leq \beta_1 \beta_2 \| x \|_p,
		\end{align*}
		First note, by the reshaping Lemma \ref{lem:reshaping}, this is equivalent to 
		\begin{align*}
		\alpha_1 \alpha_2 \| X \|_p \leq  \| A_1 R_1^{-1} X  ( R_2^{-1}  A_2 )^\top \|_p \leq \beta_1 \beta_2 \| X \|_p .
		\end{align*}
		Where $X \in \R^{d_1 \times d_2}$ is the tensorization of $x$. We first prove one direction. Let $U_1 = A_1 R_1^{-1}$ and $U_2 = A_2 R_2^{-1}$. We have
		\begin{align*}
		\| U_1 X U_2^\top \|_p^p = & ~ \sum_{i_2=1}^{n_2} \| U_1 (X U_2^\top)_{i_2} \|_p^p \\
		\leq & ~ \sum_{i_2=1}^{n_2} \beta_1^p \| (X U_2^\top)_{i_2} \|_p^p \\
		= & ~ \beta_1^p \| X U_2^\top \|_p^p \\
		\leq & ~ \beta_1^p \beta_2^p \| X \|_p^p,
		\end{align*}
		where the first step follows from rearranging, the second step follows from the well-conditioned property of $U_1$, the third step follows from rearranging again, the last step follows from  the well-conditioned property of $U_2$.
		Similarly, we have
		\begin{align*}
		\| U_1 X U_2^\top \|_p^p = & ~ \sum_{i_2=1}^{n_2} \| U_1 (X U_2^\top)_{i_2} \|_p^p \\
		\geq & ~ \sum_{i_2=1}^{n_2} \alpha_1^p \| (X U_2^\top)_{i_2} \|_p^p \\
		= & ~ \alpha_1^p \| X U_2^\top \|_p^p \\
		\geq & ~ \alpha_1^p \alpha_2^p \| X \|_p^p,
		\end{align*}
		where again the first step follows from rearranging, the second step follows from  the well-conditioned property of $U_1$, the third step follows from rearranging again, the last step follows from  the well-conditioned property of $U_2$.
		
		In general, for arbitrary $q \geq 2$, similarly using our reshaping lemma, we have 
		\begin{align*}
		\| (\otimes_{i=1}^q (A_i R_i^{-1}) )x \|_p \geq & ~ \prod_{i=1}^q \alpha_i \| x \|_p , \\
		\| (\otimes_{i=1}^q (A_i R_i^{-1})) x \|_p \leq & ~ \prod_{i=1}^q \beta_i \| x \|_p .
		\end{align*}
	\end{proof}
	
	Putting this together with fact \ref{fact:wellcond}, and noting $d = d_1 \cdots d_q$, we have
	\begin{corollary}\label{cor:wellcond}
		Let $A_i R_i^{-1}$be as in algorithm \ref{alg:l1}. Then we have for all $x \in \R^{d_1 \cdots d_q}$:
		\begin{align*}
		(1/d)^{O(1)}\| x\|_p \leq \| (A_1 R_1^{-1} \otimes \cdots \otimes A_q R_q^{-1}) x \|_p \leq d^{O(1)}\| x \|_p,
		\end{align*}
		In other words, $ (A_1 R_1^{-1} \otimes \cdots \otimes A_q R_q^{-1})$ is a well conditioned $\ell_p$ basis for $ (A_1\otimes \cdots \otimes A_q )$
	\end{corollary}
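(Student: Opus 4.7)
The plan is to combine the previous lemma with Fact \ref{fact:wellcond} in a straightforward way, together with the parameter bounds guaranteed by Definition \ref{def:wellconditioned}. First, I would apply Lemma \ref{lem:lowdistembed} and Fact \ref{fact:wellcond} to each factor separately: for each $i \in [q]$, the sparse $p$-stable transform $S_i$ of dimension $s_i = \Theta(q d_i^2)$ is, with probability $1 - O(1/q)$, a low-distortion $\ell_p$ embedding for $A_i$, and consequently $A_i R_i^{-1}$ is an $(\alpha_i, \beta_i, p)$ well-conditioned basis for the column span of $A_i$ with $\beta_i/\alpha_i = d_i^{O(1)}$. A union bound over $i \in [q]$ then guarantees all $q$ of these events simultaneously with constant probability.

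Next, I would invoke the preceding lemma, which states that if each $A_i R_i^{-1}$ is $(\alpha_i, \beta_i, p)$ well-conditioned, then for every $x$ in $\R^{d_1 \cdots d_q}$,
\[
\Bigl(\prod_{i=1}^{q}\alpha_i\Bigr)\|x\|_p \;\leq\; \bigl\|(A_1 R_1^{-1} \otimes \cdots \otimes A_q R_q^{-1})x\bigr\|_p \;\leq\; \Bigl(\prod_{i=1}^{q}\beta_i\Bigr)\|x\|_p .
\]
Finally, I would convert the per-factor parameter bounds into global bounds. Since Definition \ref{def:wellconditioned} requires $\alpha_i \leq 1 \leq \beta_i$ and $\beta_i/\alpha_i = d_i^{O(1)}$, we have $\alpha_i \geq d_i^{-O(1)}$ and $\beta_i \leq d_i^{O(1)}$. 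Taking the product and using $d = \prod_{i=1}^{q} d_i$ gives $\prod_i \alpha_i \geq d^{-O(1)}$ and $\prod_i \beta_i \leq d^{O(1)}$, yielding the claimed $d^{O(1)}$-distortion well-conditioned basis for $A_1 \otimes \cdots \otimes A_q$.

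Essentially no step is hard: the reshaping/induction work has already been carried out in the preceding lemma, and the only remaining content is the observation that the product of the per-factor constants is still polynomial in $d$. The one small subtlety worth flagging is ensuring that the hidden constants in the exponents of $d_i^{O(1)}$ do not blow up with $q$; this is fine since $q = O(1)$ in the regime of interest, and the proof goes through uniformly. No further probabilistic argument is needed beyond the initial union bound over the $q$ sketches $S_i$.
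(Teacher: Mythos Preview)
Your proposal is correct and follows exactly the approach the paper intends: the paper states the corollary immediately after the lemma with only the one-line justification ``Putting this together with Fact~\ref{fact:wellcond}, and noting $d = d_1 \cdots d_q$,'' and the union bound over the $q$ sketches is already discussed in the paragraph preceding the lemma. Your write-up simply spells out these steps in more detail than the paper does.
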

	
	From this, we can obtain the following result.
	
	\begin{lemma}\label{lem:O1approx}
		Let $x' \in \R^d$ be the output of the $O(1)$-Approximate $\ell_p$ Regression Procedure in Algorithm \ref{alg:l1}. Then with probability $99/100$ we have 
		\begin{align*}
		\|(A_1 \otimes \cdots \otimes A_q) x' - b\|_p \leq 8\min_x \|(A_1 \otimes \cdots \otimes A_q) x - b\|_p .
		\end{align*}
		 Moreover, the time required to compute $x'$ is $\tilde{O}(\sum_{i=1}^q \nnz(A_i) + \poly(d q/\eps) )$. 
	\end{lemma}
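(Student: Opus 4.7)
The plan is to combine Corollary~\ref{cor:wellcond} -- which states that $U := (A_1 R_1^{-1}) \otimes \cdots \otimes (A_q R_q^{-1})$ is an $\ell_p$ well-conditioned basis for the column span of $A_1 \otimes \cdots \otimes A_q$ -- with the $\ell_p$ row-norm sampling framework of Dasgupta et al.~\cite{ddhkm09}. In that framework, sampling rows of a well-conditioned basis with probabilities proportional to constant-factor upper estimates of the $p$-th powers of the row norms, using $r_1 = \poly(d/\eps)$ samples, produces a sketch on which the regression minimizer is a constant-factor approximation to the original problem with constant probability. What we must show is that the quantities $q_{\vec{i}}'$ used by Algorithm~\ref{alg:l1} are indeed such estimates, and that the whole procedure runs in the claimed time.

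First I would show that each $a_{i,j}$ computed in the inner loop is a constant-factor approximation of $\|(A_i R_i^{-1})_{j,*}\|_p^p$ with probability $\geq 1 - 1/n^{10}$. This is the standard median-of-$p$-stable-sketches estimator: for any fixed $y \in \R^{d_i}$ and $z$ with i.i.d.\ $\mathcal{D}_p$ entries, $\langle z,y\rangle$ is distributed as $\|y\|_p \cdot \mathcal{D}_p$, so $|\langle z,y\rangle|/\theta_p$ has median $\|y\|_p$; the median of $\tau = \Theta(\log n)$ independent estimates then gives a factor-$2$ approximation with failure probability $1/n^{10}$ by a standard Chernoff argument. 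Union-bounding over all $(i,j)$ with $i \in [q]$ and $j \in [n_i]$ yields simultaneous constant-factor accuracy.

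Second, using the basic Kronecker identity $\|U_{\vec{i},*}\|_p^p = \prod_{\ell=1}^q \|(A_\ell R_\ell^{-1})_{i_\ell,*}\|_p^p$ for $\vec{i} = (i_1,\dots,i_q)$, it follows that $q_{\vec{i}}' = \prod_\ell a_{\ell,i_\ell}$ is a $2^q = O(1)$-factor approximation of $\|U_{\vec{i},*}\|_p^p$ (using $q = O(1)$). Consequently, after normalization the distribution $\mathcal{D}$ in the algorithm is a constant-factor distortion of the true $\ell_p$-row-norm distribution on the well-conditioned basis $U$, and the sampling matrix $\Pi$ constructed with $r_1 = \Theta(d^3/\eps^2)$ samples is exactly the object required by Theorem~5 of~\cite{ddhkm09}. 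Invoking that theorem on $U$ gives that $x' = \arg\min_x \|\Pi(A_1 \otimes \cdots \otimes A_q)x - \Pi b\|_p$ satisfies $\|(A_1 \otimes \cdots \otimes A_q)x' - b\|_p \leq 8 \min_x \|(A_1 \otimes \cdots \otimes A_q)x - b\|_p$ with probability $99/100$.

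For the runtime: each $R_i$ is obtained in $\tilde O(\nnz(A_i) + \poly(d_i/\eps))$ time via the sparse $p$-stable transform $S_i$ (Definition~\ref{def:sparse_sketch}, Lemma~\ref{lem:lowdistembed}) and a $\poly(d_i) \times d_i$ QR factorization; forming $A_i R_i^{-1} Z$ for the $d_i \times \tau$ dense $p$-stable $Z$ costs $\tilde O(\nnz(A_i) + \poly(d_i))$. Drawing each of the $r_1 = \poly(d/\eps)$ samples from $\mathcal{D}$ requires one independent draw per factor, totalling $\tilde O(r_1 q)$ time, and the sampled regression instance has $O(r_1)$ constraints and $d$ variables, solvable in $\poly(d/\eps)$ time; the sampled rows of the Kronecker product and of $b$ can be read off in $O(d)$ and $O(1)$ time each respectively. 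The main subtlety to watch for is never materializing $\Pi(A_1 \otimes \cdots \otimes A_q)$ explicitly -- only the $r_1$ sampled rows are computed, exploiting the product structure of both the sampling probabilities and the design matrix -- which is precisely what avoids the $\prod_i n_i$ blow-up of a direct application of~\cite{ddhkm09}.
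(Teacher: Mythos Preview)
Your proposal is correct and follows essentially the same approach as the paper: both invoke Corollary~\ref{cor:wellcond} for the well-conditioned basis, exploit the multiplicativity of Kronecker row norms, use the Indyk median-of-$p$-stables estimator for the $a_{i,j}$, and appeal to the sampling theorem of~\cite{ddhkm09} to conclude the $8$-approximation. The only cosmetic difference is that the paper cites Theorem~6 (rather than Theorem~5) of~\cite{ddhkm09} for the constant-factor regression guarantee, and is slightly more explicit about first computing $R_i^{-1}Z \in \R^{d_i \times \tau}$ before multiplying by $A_i$.
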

	\begin{proof}
		By Theorem $6$ of \cite{ddhkm09}, if we let $\Pi$ be a diagonal row sampling matrix such that $\Pi_{i,i} = 1/q_i^{1/p}$ with probability $q_i \geq \min\{1, r_1 \frac{\|U_{i,*}\|_p^p}{\|U\|_p^p}\}$, where $U$ is a $\ell_p$ well-conditioned basis for  $(A_1\otimes \cdots \otimes A_q )$ and $r_1 = O(d^3)$, then the solution $x'$ to 
		\begin{align*}
		\min_x \|\Pi((A_1 \otimes \cdots \otimes A_q)x - b\|
		\end{align*}
		will be a $8$-approximation.
		Note that we can solve the sketched regression problem $\min_x \|\Pi((A_1 \otimes \cdots \otimes A_q)x' - b\|$ which has $O(\poly(d/\eps))$ constraints and $d$ variables in time $\poly(d/\eps)$ using linear programming for $p=1$ (see \cite{ls14,ls15,cls19,lsz19,b20} for the state of the art linear program solver), or more generally interior point methods for convex programming for $p> 1$ (see \cite{bcll18,akps19,lsz19} for the recent development of $\ell_p$ solver). 
		
		Then by Corollary \ref{cor:wellcond}, we know that setting $U =  (A_1 R_1^{-1} \otimes \cdots \otimes A_q R_q^{-1})$ suffices, so now we must sample rows of $ U$. To do this, we must approximately compute the norms of the rows of $U$. Here, we use the fact that $\|\cdot\|_p^p$ norm of a row of $ (A_1 R_1^{-1} \otimes \cdots \otimes A_q R_q^{-1} )$ is the product of the row norms of the $A_i R_i^{-1}$ that correspond to that row. Thus it suffices to sample a row $j_i$ from each of the $A_i R_i^{-1}$'s with probability at least $\min\{1,r_1\|(A_i R_i^{-1})_{j_i,*}\|_p^p/\|A_i R_i^{-1}\|_p^p\}$ for each $i \in [q]$. 
		
		To do this, we must estimate all the row norms $\|(A_i R_i^{-1})_{j_i,*}\|_p^p$ to $(1 \pm 1/10)$ error. This is done in steps $7-10$ of Algorithm \ref{alg:l1}, which uses dense $p$-stable sketches $Z \in \R^{d \times \tau}$, and computes $(A_i R_i^{-1}Z)$, where $\tau = \Theta(\log(n))$. Note that computing $R_i^{-1}Z \in \R^{d \times \tau}$ requires $\tilde{O}(d^2)$. Once computed,  $A_i (R_i^{-1}Z)$ can be computed in $\wt{O}(\nnz(A_i))$ time. We then take the median of the coordinates of $(A_i R_i^{-1}Z)$ (normalized by the median of the $p$-stable distribution $\mathcal{D}_p$, which can be efficiently approximated to $(1 \pm \eps)$ in $O(\poly(1/\eps))$ time, see Appendix A.2 of \cite{kane2010exact} for details) as our estimates for the row norms. This is simply the Indyk median estimator \cite{i06}, and gives a $(1 \pm 1/10)$ estimate $a_{i,j}$ of all the row norms $\|(A_i R_i^{-1})_{j,*}\|_p^p$  with probability $1-1/\poly(n)$. Then it follows by Theorem 6 of \cite{ddhkm09} that $x'$ is a $8$-approximation of the optimal solution with probability $99/100$ (note that we amplified the probability by increasing the sketch sizes $S_i$ by a constant factor), which completes the proof.

	\end{proof}

	\subsubsection{$\ell_p$ Sampling From the Residual of a $O(1)$-factor Approximation}\label{sec:sampleresidual}
	By Lemma \ref{lem:O1approx} in the prior section, we know that the $x'$ first returned by the in algorithm \ref{alg:l1} is a $8$-approximation. We now demonstrate how we can use this $O(1)$ approximation to obtain a $(1+\eps)$ approximation. The approach is again to sample rows of $(A_1 \otimes \cdots \otimes A_q)$. But instead of sampling rows with the well-conditioned leverage scores $q_i$, we now sample the $i$-th row with probability $\alpha_i = \min\{	1, \max\{q_i ,r_2 |\rho_i|^p/\|\rho\|_p^p\}	\}$
	, where $\rho = (A_1 \otimes \cdots \otimes A_q)x' - b \in \R^n$ is the \textit{residual error} of the $O(1)$-approximation $x'$. Thus we must now determine how to sample quickly from the residuals $|\rho_i|^p/\|\rho\|_p^p$. Our sampling algorithm will need a tool originally developed in the streaming literature.
	
	\paragraph{Count-sketch for heavy hitters with the Dyadic Trick.}
	We now introduce a sketch $S$ which finds the $\ell_2$ heavy hitters in a vector $x$ efficently. This sketch $S$ is known as count-sketch for heavy hitters with the Dyadic Trick. To build $S$ we first stack $\Theta(\log(n))$ copies of the \textit{count sketch matrix} $S^i \in \R^{k' \times n}$ \cite{cw13}. The matrix $S^i$ is constructed as follows. $S^i$ has exactly one non-zero entry per column, which is placed in a uniformly random row, and given the value $1$ or $-1$ uniformly at random. For $S^i$, let $h_i:[n] \to [k']$ be such that $h_i(t)$ is the row with the non-zero entry in the $t$-th column of $S^i$, and let $g_i:[n] \to \{1,-1\}$ be such that the value of that non-zero entry is $g_i(t)$. Note that the $h_i,g_i$ can be implemented as $4$-wise independent hash functions. Fix any $x \in \R^n$. Then given $S^1x,S^2x,\cdots ,S^{\Theta(\log(n))}x$, we can estimate the value of any coordinate $x_j$ by $\median_{i \in \Theta{\log(n)}} \{	g_i(j)(S^ix)_{h_i(j) }\}$.
	
	It is well-known that this gives an estimate of $x_j$ with additive error $\Theta(1/\sqrt{k'})\|x\|_2$ with probability $1-1/\poly(n)$ for all $j \in [n]$ \cite{charikar2004finding}. However, naively, to find the heaviest coordinates in $x$, that is all coordinates $x_j$ with $|x_j| \geq \Theta(1/\sqrt{k'})\|x\|_2$, one would need to query $O(n)$ estimates. This is where the Dyadic trick comes in \cite{cormode2005improved}. We repeat the above process $\Theta(\log(n))$ times, with matrices $S^{(i,j)}$, for $i,j \in \Theta(\log(n))$. Importantly, however, in $S^{(i,j)}$, for all $t,t' \in [n]$ such that the first $j$ most significant bits in their binary identity representation are the same, we set $h_{(i,j)}(t) = h_{(i,j)}(t')$, effectively collapsing these identities to one. To find a heavy item, we can then query the values of the *two* identities from $S^{(1,1)},S^{(2,1)},\cdots,S^{(\Theta(\log(n)) , 1)}$, and recurse into all the portions which have size at least $\Theta(1/\sqrt{k'})\|x\|_2$. It is easy to see that we recurse into at most $O(k')$ such pieces in each of the $\Theta(\log(n))$ levels, and it takes $O(\log(n))$ time to query a single estimate, from which the desired runtime of $O(k' \log^2(n))$ is obtained. For a further improvement on size $k$ of the overall sketched required to quickly compute $Q$, see \cite{larsen2016heavy}. We summarize this construction below in definition \ref{def:csHH}.
	
	\begin{definition}[Count-sketch for heavy hitters with Dyadic Trick \cite{charikar2004finding, larsen2016heavy}]\label{def:csHH}
		There is a randomized sketch $S \in \R^{k \times n}$ with $k = O(\log^2(n)/\eps^2)$ such that, for a fixed vector $x \in \R^{n}$, given $Sx \in \R^{k}$, one can compute a set $Q \subset [n]$ with $|Q| = O(1/\eps^2)$ such that $\{ i \in [n] \;| \; |x_i|\geq \eps \|x\|_2 \} \subseteq Q$ with probability $1-1/\poly(n)$. Moreover, $Sx$ can be computed in $O(\log^2(n)\nnz(x))$ time. Given $Sx$, the set $Q$ can be computed in time $O(k)$. 
	\end{definition}

	We begin with some notation. For a vector $y \in \R^{n}$, where $n = n_1 \cdots n_q$, one can index any entry of $y_i$ via $\vec{i} = (i_1,i_2,\cdots,i_q) \in [n_1] \times \cdots \times [n_q]$ via $i = i_1 + \sum_{j=2}^q (i_j-1) \prod_{l = 1}^{i_j-1} n_l$. It will useful to index into such a vector $y$ interchangably via a vector $y_{\vec{i}}$ and an index $y_{j}$ with $j \in [n]$.
	For any set of subsets $T_i \subset [n_i]$, we can define $y_{T_1 \times \cdots T_q} \in \R^{n}$ as $y$ restricted to the $\vec{i} \in T_1 \times \cdots \times T_q$. Here, by restricted, we mean the coordinates in $y$ that are not in this set are set equal to $0$. Similarly, for a $y \in \R^{n_i}$ and $S \subset [n_i]$, we can define $y_{S}$ as $y$ restricted to the coordinates in $S$. Note that in Algorithm \ref{alg:residualsample}, $\mathbb{I}_n$ denotes the $n \times n$ identity matrix for any integer $n$. We first prove a proposition on the behavior of Kronecker products of $p$-stable vectors, which we will need in our analysis.

	\begin{proposition}\label{prop:stablekronecker}
		Let $Z_1,Z_2,\cdots,Z_q$ be independent vectors with entries drawn i.i.d. from the $p$-stable distribution, with $Z_i \in \R^{n_i}$. Now fix any  $i \in [q]$, and any $x \in \R^{n}$, where $n = n_1 n_2 \cdots n_q$. Let $e_j \in \R^{n_i}$ be the $j$-th standard basis column vector for any $j \in [n_i]$. Let $\Gamma(i,j)  = [n_1] \times [n_2] \times \cdots \times [n_{i-1}] \times \{j\} \times [n_{i+1}] \times \cdots \times [n_q]$. Define the random variable 
		\begin{align*}
		\mathcal{X}_{i,j}(x) = |(Z_1 \otimes Z_1 \otimes \cdots \otimes Z_{i-1} \otimes e_j^\top \otimes Z_{i+1} \otimes \cdots \otimes Z_q)x|^p .
		\end{align*}
		Then for any $\lambda > 1$, with probability at least $1-O(q/\lambda)$ we have \[  \|x_{\Gamma(i,j) }\|_p^p /\lambda^q \leq\mathcal{X}_{i,j}(x) \leq  (\lambda \log(n))^q\|x_{\Gamma(i,j) }\|_p^p \]
	\end{proposition}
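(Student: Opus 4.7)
The proposition reduces to analyzing $|W|^p$, where
\[
W \;=\; \Bigl(\bigotimes_{\ell \ne i} Z_\ell\Bigr) \, y, \qquad y := x_{\Gamma(i,j)},
\]
because the factor $e_j^\top$ in the $i$-th slot collapses the $i$-th axis of the summation to the single coordinate $j$. Relabel the $m := q - 1$ remaining $p$-stable vectors as $Y_1, \dots, Y_m$. I would prove the claim by induction on $m$, peeling off one $Y_\ell$ per step using the $p$-stability property ($\sum_k z_k a_k \sim z \|a\|_p$ for i.i.d.\ $z_k, z \sim \mathcal{D}_p$). The base case $m = 1$ is immediate: $W \sim Z \|y\|_p$ for $Z \sim \mathcal{D}_p$, and the classical one-sided tails $\Pr[|Z| > t] \le O(t^{-p})$ (power-law decay) together with $\Pr[|Z| < \varepsilon] \le O(\varepsilon)$ (bounded density at $0$) yield the sandwich $|Z|^p \in [\lambda^{-q},\, \lambda \log n]$ with failure probability $O(1/\lambda)$, using $q/p \ge 1$ (since $q \ge 2 \ge p$) to absorb the lower-tail exponent.

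For the inductive step I would marginalize $Y_1$ first: writing $W = \sum_k (Y_1)_k c_k$ with $c_k = (Y_2 \otimes \cdots \otimes Y_m)\, y_{k,*,\dots,*}$, $p$-stability gives
\[
W \mid (Y_2,\dots,Y_m) \;\sim\; Z \cdot \|c\|_p, \qquad Z \sim \mathcal{D}_p,
\]
so $|W|^p \sim |Z|^p \cdot \|c\|_p^p$ in distribution, conditional on the remaining $Y$'s. The factor $|Z|^p$ is controlled exactly as in the base case, contributing at most $(\lambda \log n)$ in the upper direction and at least $\lambda^{-q}$ in the lower direction, each with failure $O(1/\lambda)$. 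The task then reduces to bounding $\|c\|_p^p$, which has the same structure as the inductive hypothesis with $m-1$ factors and tensor $y$, allowing the recursion.

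\textbf{The main obstacle} is that $\|c\|_p^p = \sum_k |c_k|^p$ is a sum of correlated summands (all sharing $Y_2,\dots,Y_m$), each with infinite $p$-th moment, so neither a direct Markov bound nor a standard concentration inequality applies. For the upper direction my plan is a truncation-plus-Markov device that bypasses independence. Unfold one more layer of stability to obtain $c_k \sim Z'_k \|c^{(2)}_{k,*}\|_p$ (conditionally on $Y_3,\dots,Y_m$), truncate each $|Z'_k|^p$ at $T = \Theta(n\lambda)$ using the $O(1/t)$ upper tail of $|Z|^p$, compute $\mathbb{E}[|Z'|^{p,\mathrm{trunc}}] = O(\log T)$, and apply Markov's inequality to the weighted sum $\sum_k \|c^{(2)}_{k,*}\|_p^p \cdot |Z'_k|^{p,\mathrm{trunc}}$ — a step that needs only linearity of expectation. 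A union bound over the $O(n)$ truncation-violation events then gives $\|c\|_p^p \le O(\lambda \log n)\,\|c^{(2)}\|_p^p$ with failure $O(1/\lambda)$, and recursing through the $m-1$ levels of the hierarchy (union bounding over levels) produces the upper bound $(\lambda \log n)^{q-1}\|y\|_p^p \le (\lambda \log n)^q \|y\|_p^p$. For the lower bound I would run the same recursive unfolding, invoking the near-zero tail $\Pr[|Z|<\varepsilon] \le O(\varepsilon)$ at each level, and absorb the accumulated $\lambda^{-1/p}$-type factors into the final $\lambda^{-q}$ via $q/p \ge 1$. I expect this lower-bound recursion to be the most delicate step, since it requires all $m$ layers simultaneously to have non-vanishing $|Z|$-factors drawn from correlated randomness; a plausible fallback is a Paley--Zygmund-type second-moment argument on a truncated variant of $\|c\|_p^p$ in place of the naive per-level union bound.
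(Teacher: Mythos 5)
Your overall architecture---peel off one stable vector per step, truncate and apply Markov for the upper tail, use anti-concentration for the lower tail---matches the paper's. The paper packages a single peel-off step as the claim that, for a fixed matrix $X$ and $p$-stable vector $Z$, with probability $1-O(1/\lambda)$ one has $\lambda^{-1}\|X\|_p^p \leq \|XZ\|_p^p \leq \lambda\log(n)\|X\|_p^p$, and then iterates this $q-1$ times via reshaping of the intermediate vector into a matrix, union bounding the failure events across steps. Your truncate-then-Markov upper-bound device is exactly this claim's upper-tail argument.

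The lower bound is where you have a gap, and your stated concern is misdirected. Cross-layer correlation is not the obstacle: each layer contracts by a fresh, mutually independent $Z_\ell$, so each application of the claim is performed conditionally on the earlier contractions, and the per-layer failures union-bound cleanly at cost $O(q/\lambda)$. The actual issue---which your proposal does not resolve---is within a single layer: $\|XZ\|_p^p = \sum_i |z_i|^p \|X_{i,*}\|_p^p$ with correlated (not independent) $z_i$'s, and one must rule out a constant fraction of the $\ell_p$-mass sitting on rows where $|z_i|^p$ is tiny. The paper's fix is a first-moment argument requiring no independence whatsoever: set $W_i = \1[\,|z_i|^p < 2/\lambda\,]$; by anti-concentration of $p$-stable laws, $\ex{W_i} = O(1/\lambda)$, so by Markov $\pr{\sum_i W_i \|X_{i,*}\|_p^p > \|X\|_p^p/2} = O(1/\lambda)$. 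Conditioned on the complement, the rows with $|z_i|^p \geq 2/\lambda$ carry at least half the mass, giving $\|XZ\|_p^p \geq (\|X\|_p^p/2)\cdot(2/\lambda) = \|X\|_p^p/\lambda$. This is simpler than, and supersedes, the Paley--Zygmund fallback you propose.
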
 
	\begin{proof}
		First observe that we can reshape $y = x_{\Gamma} \in \R^m$ where $m = n/n_i$, and re-write this random variable as $\mathcal{X}_{i,j}(x) = |(Z_1 \otimes Z_2 \otimes \cdots \otimes Z_{q-1}) y|^p$. By reshaping Lemma \ref{lem:reshaping}, we can write this as $|(Z_1 \otimes Z_2 \otimes \cdots \otimes Z_{q-2}) YZ_{q-1}^\top|^p$, where $Y \in \R^{m/n_{q-1} \times n_{q-1}}$. We first prove a claim. In the following, for a matrix $A$, let $\|A\|_p^p = \sum_{i,j} |A_{i,j}|^p$. 
		
		\begin{claim}
			Let $Z$ be any $p$-stable vector and $X$ a matrix. Then for any $\lambda > 1$, with probability $1-O(1/\lambda)$, we have 
			\begin{align*}
			\lambda^{-1} \|X\|_p^p\leq \|XZ\|_p^p \leq \log(n) \lambda \|X\|_p^p .	
			\end{align*}	
		\end{claim}
		\begin{proof}
			By $p$-stability, each entry of $|(XZ)_i|^p$ is distributed as $|z_i|^p\|X_{i,*}\|_p^p$, where $z_i$ is again $p$-stable (but the $z_i's$ are not independent). Now $p$-stables have tails that decay at the rate $\Theta(1/x^p)$ (see Chapter 1.5 of \cite{nolan2009stable}), thus $\pr{|z_i|^p > x} = O(1/x)$ for any $x > 0$. We can condition on the fact that $z_i < \lambda \cdot n^{10}$ for all $i$, which occurs with probability at least $1-n^{-9}/\lambda$ by a union bound. Conditioned on this, we have $\ex{|z_i|^p} = O(\log(n))$ (this can be seen by integrating over the truncated tail $O(1/x)$), and the upper bound then follows from a application of Markov's inequality. 
			
			For the lower bound Let $Y_i$ be an indicator random variable indicating the event that $|z_i|^p < 2/\lambda$. Now $p$-stables are anti-concentrated, namely, their pdf is upper bounded by a constant everywhere. It follows that $\pr{Y_i } < c/\lambda$ for some constant $c$. By Markov's inequality $\pr{ \sum_i Y_i \|X_{i,*}\|_p^p > \|X\|_p^p/2 } < O(1/\lambda)$. Conditioned on this, the remaining $\|X\|_p^p/2$ of the $\ell_p$ mass shrinks by less than a $2/\lambda$ factor, thus $\|XZ\|_p^p > (\|X\|_p^p/2)(2/\lambda) = \|X\|_p^p/\lambda$ as needed. 
		\end{proof}
		By the above claim, we have $ \|Y\|_p/\lambda^{1/p} \leq \|YZ_{q-1}^\top\|_p \leq  (\log(n) \lambda)^{1/p} \|Y\|_p$ with probability $1-O(1/\lambda)$. Given this, we have $\mathcal{X}_{i,j}(x) = |(Z_1 \otimes Z_2 \otimes \cdots \otimes Z_{q-2}) y'|^p$, where  $ \|Y\|_p/\lambda^{1/p} \leq \|y'\|_p  \leq  (\log(n) \lambda)^{1/p} \|Y\|_p$. We can inductively apply the above argument, each time getting a blow up of  $(\log(n) \lambda)^{1/p}$ in the upper bound and $(1/\lambda)^p$ in the lower bound, and a failure probability of $(1/\lambda)$. Union bounding over all $q$ steps of the induction, the proposition follows. 
		
	\end{proof}

	\begin{algorithm*}\caption{Algorithm to $\ell_p$ sample $\Theta(r_2)$ entires of $\rho = (A_1 \otimes \cdots \otimes A_q)x'-b$}\label{alg:residualsample}
		\begin{algorithmic}[1]
			\Procedure{Residual $\ell_p$ sample}{$\rho, r_2$} 
			\State $r_3 \leftarrow \Theta(r_2 \log^{q^2}(n)/\delta)$.
			\State Generate i.i.d. $p$-stable vectors $Z^{1,j}, Z^{2,j},\dots, Z^{q,j} \in \R^n$ for $j \in [\tau]$ for $\tau = \Theta(\log(n))$
			
			\State $T \leftarrow \emptyset$ \Comment{sample set to return}
			\State Pre-compute  and store $Z^{i,j}A_i \in \R^{1 \times d_i}$ for all $i \in [q]$ and $j \in [\tau]$
			\State Generate count-sketches for heavy hitters $S^i \in \R^{k \times n_i}$ of Definition \ref{def:csHH} for all $i \in [q]$, where $k = O(\log^2(n) r_3^{O(1)})$.
			\For{$t=1,2,\dots,r_3$}
			\State $s = (s_1,\dots,s_q) \leftarrow (\emptyset,\dots,\emptyset)$ \Comment{next sample to return}
			
			\State $w^j \leftarrow \left( (\mathbb{I}_{n_1}) \otimes  (\bigotimes_{k = 2}^{q} Z^{k,j}) \rho\right) \in \R^{n_1}$ \Comment{$\mathbb{I}_n \in \R^{n \times n}$ is identity}
			\State Define $w \in \R^{n_1}$ by $w_l = \median_{j \in [\tau]} \{|w^j_l|\}$ for $l \in [n_1]$
			
			\State Sample $j^* \in [n_1]$ from the distribution $\left(\frac{|w_1|^p}{\|w\|_p^p}, \frac{|w_2|^p}{\|w\|_p^p}, \dots, \frac{|w_{n_1}|^p}{\|w\|_p^p}\right)$
			\State $s_1 \leftarrow j^*$
			
			\For{$i=2,\dots,q$}
			
			\For{ $j \in [\tau]$}
			\State Write $e_{a_k}^\top \in \R^{1 \times n_k}$ as the standard basis vector
			\State $v_i^j \leftarrow S^i\left( (\bigotimes_{k = 1}^{i-1} e_{a_k}^\top) \otimes (\mathbb{I}_{n_i}) \otimes  (\bigotimes_{k = i+1}^{q} Z^{k,j}) \rho\right) \in \R^{k}$ 
			\State Compute heavy hitters $H_{i,j} \subset [n_i]$ from $v_i^j$\Comment{Definition \ref{def:csHH}}
			\State $\beta_i^j \leftarrow \left( (\bigotimes_{k = 1}^{i-1} e_{a_k}^\top) \otimes   (\bigotimes_{k = i}^{q} Z^{k,j}) \rho\right) \in \R$ 
			
			\EndFor
			\State Define $\beta_i \in \R^{k'}$ by $\beta_i = \median_{j \in [\tau]} \{		|\beta_i^j|^p\}$
			\State $H_i = \cup_{j=1}^\tau H_{i,j}$ 
			\State $\gamma_i \leftarrow \median_{j \in [\tau]}\left( (\bigotimes_{k = 1}^{i-1} e_{a_k}^\top) \otimes  Z^{i,j}_{[n_i] \setminus H_i} \otimes  (\bigotimes_{k = i+1}^{q} Z^{k,j}) \rho\right) \in \R$ 
			\If{ with probability $1-\gamma_i/\beta_i$}
			
			\State Draw $\xi \in H_i$ with probability \[\frac{ \text{median}_{j \in \tau}\left|\left( (\bigotimes_{k = 1}^{i-1} e_{a_k}^\top) \otimes (e_\xi^\top) \otimes  (\bigotimes_{k = i+1}^{q} Z^{k,j}) \rho\right)\right|^p}{\sum_{\xi' \in H_i} \text{median}_{j \in \tau}\left|\left( (\bigotimes_{k = 1}^{i-1} e_{a_k}^\top) \otimes (e_{\xi'}^\top) \otimes  (\bigotimes_{k = i+1}^{q} Z^{k,j}) \rho\right)\right|^p}\]
			\State $s_i \leftarrow \xi$
			\Else \Comment{$s_i$ was not sampled as a heavy hitter}
			\State Randomly partition $[n_i]$ into $\Omega_1^i,\Omega_2^i,\dots,\Omega_\eta^i$ with $\eta = \Theta(r_3^2)$
			\State Sample $t \sim [\eta]$ uniformly at random
			\For{$j \in \Omega_t \setminus H_i$}
			\State $\theta_j = \median_{l \in [\tau]} \left(| (\bigotimes_{k = 1}^{i-1} e_{a_k}^\top) \otimes (e_j^\top) \otimes  (\bigotimes_{k = i+1}^{q} Z^{k,l}) \rho |^p\right)$
			\EndFor
			\State 
			Sample $s_i \leftarrow j^*$ from the distribution $\{\frac{\theta_j}{\sum_{j' \in \Omega_t \setminus H_i} \theta_{j'} }\}_{j \in \Omega_t \setminus H_i}$
			\EndIf
			\EndFor
			\State  $T \leftarrow S \cup s$ where $s = (s_1,\dots,s_q)$
			\EndFor
			
			\State
			\Return sample set $T$
			
			\EndProcedure
		\end{algorithmic}
	\end{algorithm*}


	\begin{lemma}\label{lem:lpsampling}
		Fix any $r_2 \geq 1$, and suppose that $x' = \min_x \|\Pi( A_1 \otimes \cdots \otimes A_q)x - \Pi b\|_p$ and $\Pi \in \R^{n \times n}$ is a row sampling matrix such that $\Pi_{i,i} = 1/q_i^{1/p}$ with probability $q_i$. Define the residual error $\rho = (A_1 \otimes \cdots \otimes A_q)x' - b \in \R^n$. Then Algorithm \ref{alg:residualsample}, with probability $1-\delta$, succeeds in outputting a row sampling matrix $\Sigma \in \R^{n \times n}$ such that $\Sigma_{i,i} = 1/\alpha_{i}^{1/p}$ with probability $\alpha_i = \min\{	1, \max\{q_i ,r_3 |\rho_i|^p/\|\rho\|_p^p\}	\}$ for some $r_3 \geq r_2$, and otherwise $\Sigma_{i,i} = 0$. The algorithm runs in time 
		\begin{align*}
		\wt{O} \left( \sum_{i=1}^q \nnz(A_i) + q\nnz(b) + (r_2\log(n)/\delta )^{O(q^2)} \right).
		\end{align*}
	\end{lemma}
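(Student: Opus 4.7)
The plan is to verify Algorithm \ref{alg:residualsample} correctly draws approximate $\ell_p$ samples from $\rho$ and then combine them with the leverage-score sampling matrix $\Pi$ to obtain the $\max\{q_i, r_3|\rho_i|^p/\|\rho\|_p^p\}$ probability bound. The key technical claim is that each iteration of the outer loop produces a coordinate $\vec{s}=(s_1,\ldots,s_q)\in[n_1]\times\cdots\times[n_q]$ drawn approximately from the distribution $|\rho_{\vec{s}}|^p/\|\rho\|_p^p$ via sequential conditional sampling along each mode.

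First I would prove correctness of the first step. Applying Proposition \ref{prop:stablekronecker} with $\lambda=O(1)$, for each $l\in[n_1]$ and $j\in[\tau]$ the quantity $w^j_l = (\bigotimes_{k=2}^q Z^{k,j})\rho_{\Gamma(1,l)}$ satisfies $\|\rho_{\Gamma(1,l)}\|_p^p/O(1)^q \le |w^j_l|^p \le (\log n)^q\|\rho_{\Gamma(1,l)}\|_p^p$ with constant probability. Taking the median over $\tau=\Theta(\log n)$ repetitions and union bounding over $l\in[n_1]$, the entries $|w_l|^p$ simultaneously approximate $\|\rho_{\Gamma(1,l)}\|_p^p$ up to a $\poly\log(n)$ factor, so sampling $s_1\propto|w_{s_1}|^p$ hits each $l$ with probability $\Omega(\|\rho_{\Gamma(1,l)}\|_p^p/(\poly\log(n)\|\rho\|_p^p))$.

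Next, I would extend the argument inductively to coordinate $s_i$ conditional on $s_1,\ldots,s_{i-1}$. The conditional slice is $\rho_{(s_1,\ldots,s_{i-1})\times[n_i]\times\cdots\times[n_q]}$, which may contain a few heavy coordinates that uniform partitioning would miss. The algorithm therefore splits into two cases: with probability $1-\gamma_i/\beta_i$ it samples from the heavy hitters $H_i$ (identified via Definition \ref{def:csHH}), and otherwise from a uniformly chosen one of $\eta=\Theta(r_3^2)$ buckets, where $\beta_i$ estimates the total slice $\ell_p^p$ mass and $\gamma_i$ estimates the non-heavy $\ell_p^p$ mass, both via median-of-$p$-stable estimators as above. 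The correctness of the bucket step relies on the fact that after removing $H_i$, every remaining coordinate contributes at most $O(1/r_3)$ fraction of the non-heavy mass, so a random bucket's $\ell_p^p$ mass concentrates (Bernstein) around $1/\eta$ of the total, making the within-bucket probabilities correctly normalized when multiplied by the $1/\eta$ bucket-selection factor. Chaining these approximations across modes $i=1,\ldots,q$ produces samples proportional to $|\rho_{\vec{s}}|^p/\|\rho\|_p^p$ up to a factor absorbed by the $\log^{O(q^2)}(n)$ overhead in $r_3$.

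The hard part is controlling the compounded approximation error from the $O(q)$ nested $p$-stable estimators together with the heavy-hitter identifications; a union bound over the $O(r_3)$ outer iterations, $\tau$ sketch repetitions, and $q$ modes gives overall success probability $1-\delta$ after setting $\tau$ and the count-sketch width $k$ as in the algorithm. For the runtime, precomputing $Z^{k,j}A_k$ across $k\in[q],j\in[\tau]$ costs $\wt{O}(\sum_i\nnz(A_i))$ since each stable vector passes over each $A_i$ once. Every sketch evaluation against $\rho$ reduces to (i) an inner product of precomputed row vectors $Z^{k,j}A_k$ with $x'$, constant time per call using the Kronecker structure, and (ii) one pass over $b$ processing the product sketch, contributing $O(q\cdot\nnz(b))$. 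Heavy-hitter extraction and exhaustive in-bucket evaluation contribute at most $(r_2\log(n)/\delta)^{O(q^2)}$. Finally, combining the resulting sampling matrix with $\Pi$ coordinate-wise (sampling $i$ if either chooses it) gives $\Sigma$ with $\Sigma_{i,i}=1/\alpha_i^{1/p}$ at probability $\alpha_i=\min\{1,\max\{q_i,r_3|\rho_i|^p/\|\rho\|_p^p\}\}$ up to constant factors, which is absorbed by enlarging $r_3$.
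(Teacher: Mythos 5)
Your proposal follows the paper's proof essentially verbatim: sequential conditional sampling of $(s_1,\ldots,s_q)$ via $p$-stable median estimates backed by Proposition~\ref{prop:stablekronecker}, a heavy-hitter/non-heavy split using the Dyadic-trick count-sketch of Definition~\ref{def:csHH} and random bucket partitioning into $\eta=\Theta(r_3^2)$ pieces, $\log^{O(q^2)}(n)$ oversampling to absorb the compounded per-mode distortion, precomputation of $Z^{k,j}A_k$ plus Kronecker reshaping for input-sparsity time, a $q\cdot\nnz(b)$ term for the $b$ component, and a coordinate-wise max combination with $\Pi$ to achieve the $\alpha_i$ sampling rates. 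The one imprecision is the claim that each sketch evaluation reduces to "constant time per call": the paper's reshaping argument shows these evaluations cost $O(dq)$ or $O(kd_i^2)$ per call (and relies on a two-case analysis for the non-heavy mass being negligible versus substantial), but since these are absorbed into the $\poly(r_3)$ term the asymptotic bound is unaffected.
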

	\begin{proof}
		The algorithm is given formally in Figure \ref{alg:residualsample}. We analyze the runtime and correctness here. 
		
		\paragraph{Proof of Correctness.} The approach of the sampling algorithm is as follows. Recall that we can index into the coordinates of $\rho \in \R^n$ via $\vec{a} = (a_1,\dots,a_q)$ where $a_i \in [n_i]$. We build the coordinates of $\vec{a}$ one by one. To sample a $\vec{a} \in \prod_{i=1}^q [n_i]$, we can first sample $a_1 \in [n_1]$ from the distribution $\pr{a_1 = j} = \sum_{\vec{u}: u_1 = j} |\rho_{\vec{u}}|^p/ (\sum_{\vec{u}} |\rho_{\vec{u}}|^p)$. Once we fix $a_1$, we can sample $a_2$ from the conditional distribution distribution $\pr{a_2 = j} = \sum_{\vec{u}: u_2 = j, u_1 = a_1} |\rho_{\vec{u}}|^p/ (\sum_{\vec{u} : u_1 = a_1} |\rho_{\vec{u}}|^p)$, and so on. For notation, given a vector $\vec{a} = (a_1,\dots,a_{i-1})$, let $\Delta(\vec{a}) = \{\vec{u} \in [n_1] \times \cdots \times [n_q] \; | \; a_j = y_j \text{ for all } j=1,2,\dots,i-1 \}$. Then in general, when we have sampled $\vec{a} = (a_1,\dots,a_{i-1})$ for some $i \leq q$, we need to sample $a_i \leftarrow j \in [n_k]$ with probability
		\begin{align*}
		\pr{a_i = j} =  \sum_{\vec{u} \in \Delta(\vec{a}) : u_i = j} |\rho_{\vec{u}}|^p/ \left( \sum_{\vec{u} \in \Delta(\vec{a}) } |\rho_{\vec{u}}|^p \right) .
		\end{align*}
		We repeat this process to obtain the desired samples.  Note that to sample efficiently, we will have to compute these aforementioned sampling probabilities approximately. Because of the error in approximating, instead of returning $r_2$ samples, we over-sample and return  $r_3 = \Theta(r_2 \log^{q^2}(n))$ samples.

		The first step is of the algorithm is to generate the $p$-stable vectors $Z^{i,j} \in \R^{n_i}$ for $i \in [q]$ and $j =1,2,\dots,\Theta(\log(n))$. We can pre-compute and store $Z^{i,j} A_i$ for $i \in [q]$, which takes $\tilde{O}(\sum_{i=1}^q \nnz(A_i))$ time. We set $w^j \leftarrow \left( (\mathbb{I}_{n_1}) \otimes  (\bigotimes_{k = 2}^{q} Z^{k,j}) \rho\right) \in \R^{n_1}$ and define $w \in \R^{n_1}$ by $w_l = \median_{j \in [\tau]} \{		|w^j_l|\}$ for $l \in [n_1]$. Observe that $w^j_l$ is an estimate of $\sum_{\vec{u} : u_1 = l} |\rho_{\vec{u}}|^p$. By Proposition \ref{prop:stablekronecker}, it is a $(c\log(n))^q$ approximation with probability at least $3/4$ for some constant $c$. Taking the median of $\Theta(\log(n))$ repetitions, we have that \begin{align*}
		c^{-q} \cdot \sum_{\vec{u} : u_1 = l} |\rho_{\vec{u}}|^p \leq |w_l|^p \leq (c\log(n))^{q} \cdot \sum_{\vec{u} : u_1 = l} |\rho_{\vec{u}}|^p 
		\end{align*}
		with probability $1-1/\poly(n)$, and we can then union bound over all such estimates every conducted over the course of the algorithm. We call the above estimate $|w_l|^p$ a $O( (c \log(n))^q)$-error estimate of $\sum_{\vec{u} : u_1 = l} |\rho_{\vec{u}}|^p$. Given this, we can correctly and independently sample the first coordinate of each of the $\Theta(r_3)$ samples. We now describe how to sample the $i$-th coordinate. So in general, suppose we have sampled $(a_1,...,a_{i-1})$ so far, and we need to now sample $a_i \in [n_i]$ conditioned on $(a_1,...,a_{i-1})$. We first consider
		\begin{equation*}
		W^{i,k} = \left( (\bigotimes_{k = 1}^{i-1} e_{a_k}^\top) \otimes (\mathbb{I}_{n_i}) \otimes  (\bigotimes_{k = i+1}^{q} Z^{k,j}) \rho\right) \in \R^{n_i}
		\end{equation*}
		
		Note that the $j$-th coordinate $W^{i,k}_j$ for $W^{i,k}$ is an estimate of $\sum_{\vec{u} \in \Delta(\vec{a}) : u_i = j} |\rho_{\vec{u}}|^p$. Again by By Proposition \ref{prop:stablekronecker}, with probability $1-1/\poly(n)$, we will have $|W^{i,k}_j|^p$ is a $O((c \log(n))^q)$-error estimate of $\sum_{\vec{u} \in \Delta(\vec{a}) : u_i = j} |\rho_{\vec{u}}|^p$ or at least one $k \in [\tau]$. Our goal will now be to find all $j \in [n_i]$ such that $\sum_{\vec{u} \in \Delta(\vec{a}) : u_i = j} |\rho_{\vec{u}}|^p \geq \Theta((c \log(n))^{q}/r_3^{8}) \sum_{\vec{u} \in \Delta(\vec{a})} |\rho_{\vec{u}}|^p$. We call such a $j$ a \textit{heavy hitter}.
		
		Let $Q_i \subset [n_i]$ be the set of heavy hitters. To find all the heavy hitters, we use the count-sketch for heavy hitters with the Dyadic trick of definition \ref{def:csHH}. We construct this count-sketch of def \ref{def:csHH} $S^i \in \R^{k' \times n_i}$ where $k' = O(\log^2(n)r_3^{16})$. We then compute $S^i W^{i,k}$, for $k=1,2,\dots,\tau$, and obtain the set of heavy hitters $h \in H_{i,k} \subset [n_i]$ which satisfy $|W^{i,k}_j|^p \geq \Theta(1/r_3^{8}) \|W^{i,k}\|_p^p$. By the above discussion, we know that for each $j \in Q_i$, we will have $|W^{i,k}_j|^p \geq \Theta(1/r_3^{16}) \|W^{i,k}\|_p^p$ for at least one $k \in [\tau]$ with high probability. Thus $H_i = \cup_{k=1}^\tau H_{i,k} \supseteq Q_i$.

		We now will decide to either sample a heavy hitter $ \xi \in H_i$, or a non-heavy hitter $\xi \in [n_i] \setminus H_i$. By Proposition \ref{prop:stablekronecker}, we can compute a $O((c\log(n))^{-q})$-error estimate  
		\begin{align*}
		\beta_i = \median_{j \in [\tau]} \left|\left( (\bigotimes_{k = 1}^{i-1} e_{a_k}^\top)  \otimes  (\bigotimes_{k = i}^{q} Z^{k,j}) \rho\right)\right|^p
		\end{align*}
		of $\sum_{\vec{u} \in \Delta(\vec{a})} |\rho_{\vec{u}}|^p$, meaning:
		
		\begin{equation*}
		O(c^{-q}) \sum_{\vec{u} \in \Delta(\vec{a})} |\rho_{\vec{u}}|^p \leq  \beta_i\leq O((c \log n)^{q}) \sum_{\vec{u} \in \Delta(\vec{a})} |\rho_{\vec{u}}|^p .
		\end{equation*}
		Again, by Proposition \ref{prop:stablekronecker}, we can compute a $O((c\log(n))^{-q})$-error estimate 
		\begin{align*}
		\gamma_i = \median_{j \in [\tau]}\left( (\bigotimes_{k = 1}^{i-1} e_{a_k}^\top) \otimes  Z^{i,j}_{[n_i] \setminus H_i} \otimes  (\bigotimes_{k = i+1}^{q} Z^{k,j}) \rho\right)
		\end{align*}
		of $\sum_{h \in [n_i] \setminus H_i}\sum_{\vec{u} \in \Delta(\vec{a}) : u_i = j} |\rho_{\vec{u}}|^p$. It follows that 
		
		\begin{align*}
		O(c^{-2q}) \frac{\sum_{h \in [n_i] \setminus H_i}\sum_{\vec{u} \in \Delta(\vec{a}) : u_i = j} |\rho_{\vec{u}}|^p}{\sum_{\vec{u} \in \Delta(\vec{a})} |\rho_{\vec{u}}|^p}		
		\leq 	\frac{\gamma_i}{\beta_i} \leq  O((c \log n)^{2q}) \frac{\sum_{h \in [n_i] \setminus H_i}\sum_{\vec{u} \in \Delta(\vec{a}) : u_i = j} |\rho_{\vec{u}}|^p}{\sum_{\vec{u} \in \Delta(\vec{a})} |\rho_{\vec{u}}|^p}	
		\end{align*}
		In other words, $\gamma_i/\beta_i$ is a $O((c\log(n))^{2q})$-error approximation of the true probability that we should sample a non-heavy item. Thus with probability $1-\gamma_i/\beta_i$, we choose to sample a heavy item.	
		
		To sample a heavy item, for each $\xi \in H_i$, by Proposition \ref{prop:stablekronecker}, we can compute an $O((c\log(n))^{-q})$-error estimate 
		\begin{align*}
		\median_{j \in \tau}\left|\left( (\bigotimes_{k = 1}^{i-1} e_{a_k}^\top) \otimes (e_\xi^\top) \otimes  (\bigotimes_{k = i+1}^{q} Z^{k,j}) \rho\right)\right|^p
		\end{align*}
		of $\sum_{\vec{u} \in \Delta(\vec{a}) : u_i = \xi} |\rho_{\vec{u}}|^p$, meaning 
		
		\begin{align*}
		\O(c^{-q}) \sum_{\vec{u} \in \Delta(\vec{a}) : u_i = \xi} |\rho_{\vec{u}}|^p  &\leq \median_{j \in \tau}\left|\left( (\bigotimes_{k = 1}^{i-1} e_{a_k}^\top) \otimes (e_\xi^\top) \otimes  (\bigotimes_{k = i+1}^{q} Z^{k,j}) \rho\right)\right|^p \\
		&\leq  O((c \log n)^{q}) \sum_{\vec{u} \in \Delta(\vec{a}) : u_i = \xi} |\rho_{\vec{u}}|^p 
		\end{align*}
		Thus we can choose to sample a heavy item $\xi \in H_i$ from the distribution given by 
		\begin{align*}
		\bpr{\text{sample } a_i \leftarrow \xi } =  \frac{\text{median}_{j \in \tau}\left|\left( (\bigotimes_{k = 1}^{i-1} e_{a_k}^\top) \otimes (e_\xi^\top) \otimes  (\bigotimes_{k = i+1}^{q} Z^{k,j}) \rho\right)\right|^p }{ \sum_{\xi' \in H_i} \text{median}_{j \in \tau}\left|\left( (\bigotimes_{k = 1}^{i-1} e_{a_k}^\top) \otimes (e_{\xi'}^\top) \otimes  (\bigotimes_{k = i+1}^{q} Z^{k,j}) \rho\right)\right|^p} 
		\end{align*}
		Which gives a $O((c\log(n))^{2q})$-error approximation to the correct sampling probability for a heavy item.

		In the second case, with probability $\gamma_i/\beta_i$, we choose to not sample a heavy item. In this case, we must now sample a item from $[n_i] \setminus H_i$. To do this, we partition $[n_i]$ randomly into $\Omega_1,\dots,\Omega_\eta$ for $\eta = 1/r_3^2$. Now there are two cases. First suppose that we have 
		
		\begin{align*}
		\frac{\sum_{j \in [n_i] \setminus H_i} \sum_{\vec{u} \in \Delta(\vec{a}) : u_i = j} |\rho_{\vec{u}}|^p }{\sum_{\vec{u} \in \Delta(\vec{a})} |\rho_{\vec{u}}|^p} \leq  \Theta(1/r_3^3)
		\end{align*}
		Now recall that $\gamma_i/\beta_i$ was a $O((c \log(n))^{2q})$-error estimate of the ratio on the left hand side of the above equation, and $\gamma_i/\beta_i$ was the probability with which we choose to sample a non-heavy hitter.
		Since we only repeat the sampling process $r_3$ times, the probability that we ever sample a non-heavy item in this case is at most $\Theta(q (c\log(n))^{2q}/r_3^2) < \Theta(q/r_3)$, taken over all possible repetitions of this sampling in the algorithm. Thus we can safely ignore this case, and condition on the fact that we never sample a non-heavy item in this case. 
		
		Otherwise,  
		\begin{align*}
		\sum_{j \in [n_i] \setminus H_i} \sum_{\vec{u} \in \Delta(\vec{a}) : u_i = j} |\rho_{\vec{u}}|^p >\Theta(1/r_3^3) \sum_{\vec{u} \in \Delta(\vec{a}) } |\rho_{\vec{u}}|^p,
		\end{align*}
		and it follows that 
		\begin{align*}
		\sum_{\vec{u} \in \Delta(\vec{a}) : u_i = j'} |\rho_{\vec{u}}|^p \leq r_3^{-5} \cdot \sum_{j \in [n_i] \setminus H_i} \sum_{\vec{u} \in \Delta(\vec{a}) : u_i = j} |\rho_{\vec{u}}|^p
		\end{align*}
		for all $j' \in [n_i] \setminus H_i$, since we removed all $\Theta(1/r_3^8)$ heavy hitters from $[n_i]$ originally. Thus by Chernoff bounds, with high probability we have that 
		\begin{align*}
		\sum_{j \in \Omega_i \setminus H_i} \sum_{\vec{u} \in \Delta(\vec{a}) : u_i = j} |\rho_{\vec{u}}|^p = \Theta \left( \eta^{-1} \cdot \sum_{j \in [n_i] \setminus H_i} \sum_{\vec{u} \in \Delta(\vec{a}) : u_i = j} |\rho_{\vec{u}}|^p \right),
		\end{align*}
		which we can union bound over all repetitions.

		Given this, by choosing $t \sim [\eta]$ uniformly at random, and then choosing $j \in \Omega_t \setminus H_i$ with probability proportional to its mass in $\Omega_t \setminus H_i$, we get a $\Theta(1)$ approximation of the true sampling probability. Since we do not know its exact mass, we instead sample from the distribution 
		\begin{align*}
		\left\{ \frac{\theta_j}{\sum_{j' \in \Omega_t \setminus H_i} \theta_{j'} } \right\}_{ j \in \Omega_t \setminus H_i },
		\end{align*}
		where  
		\begin{equation*}
		\theta_j = \median_{l \in [\tau]} \left(\left| (\bigotimes_{k = 1}^{i-1} e_{a_k}^\top) \otimes (e_j^\top) \otimes  (\bigotimes_{k = i+1}^{q} Z^{k,l}) \rho \right|^p\right)
		\end{equation*}
		
		Again by Proposition \ref{prop:stablekronecker}, this gives a $O((c \log(n))^{2q})$-error approximation to the correct sampling probability. Note that at each step of sampling a coorindate of $\vec{a}$ we obtained at most $O((c \log(n))^{2q})$-error in the sampling probability. Thus, by oversampling by a $O((c \log(n))^{2q^2})$ factor, we can obtain the desired sampling probabilities. This completes the proof of correctness. Note that to improve the failure probability to $1-\delta$, we can simply scale $r_3$ by a factor of $1/\delta$.

		\paragraph{Proof of Runtime.}
		We now analyze the runtime. At every step $i=1,2,\dots,q$ of the sampling, we compute $v_i^j \leftarrow S^i \left( (\bigotimes_{k = 1}^{i-1}e_{a_k}^\top ) \otimes (\mathbb{I}_{n_i}) \otimes  (\bigotimes_{k = i+1}^{q} Z^{k,j}) \rho\right) \in \R^{n_i}$ for $j =1,2,\dots\Theta(\log(n))$. This is equal to
		\[ S^i \left((\bigotimes_{k = 1}^{i-1} (A_k)_{a_k,*}) \otimes (A_i) \otimes  (\bigotimes_{k = i+1}^{q} Z^{k,j} A_k)x' - (\bigotimes_{k = 1}^{i-1} e_{a_k}^\top) \otimes (\mathbb{I}_{n_i}) \otimes  (\bigotimes_{k = i+1}^{q} Z^{k,j}) b\right) \]
		We first consider the term inside of the parenthesis (excluding $S^i$).
		Note that the term $(\bigotimes_{k = i+1}^{q} Z^{k,j} A_k)$ was already pre-computed, and is a vector of length at most $d$, this this requires a total of $\tilde{O}(\sum_{i=1}^q \nnz(A_i) + d)$ time. Note that these same values are used for every sample. Given this pre-computation, we can rearrage the first term to write $(\bigotimes_{k = 1}^{i-1} (A_k)_{a_k,*}) \otimes (A_i)  X'  (\bigotimes_{k = i+1}^{q} Z^{k,j} A_k)^\top$ where $X'$ is a matrix formed from $x'$ so that $x'$ is the vectorization of $X'$ (this is done via reshaping Lemma \ref{lem:reshaping}).  The term $y = X'  (\bigotimes_{k = i+1}^{q} Z^{k,j} A_k)^\top$ can now be computed in $O(d)$ time, and then we reshape again to write this as $(\bigotimes_{k = 1}^{i-1}(A_k)_{a_k,*}) Y A_i^\top$ where $Y$ again is a matrix formed from $y$. Observe that $\zeta = \text{vec}(\bigotimes_{k = 1}^{i-1}(A_k)_{a_k,*} Y) \in \R^{d_i}$ can be computed in time $O(qd)$, since each entry is a dot product of a column $Y_{*,j} \in \R^{d_1 \cdot d_2 \cdots d_{i-1}}$ of $Y$ with the $d_1 \cdot d_2 \cdots d_{i-1}$ dimensional vector $\bigotimes_{k = 1}^{i-1}(A_k)_{a_k,*}$, which can be formed in $O(d_1 \cdot d_2 \cdots d_{i-1}q)$ time, and there are a total of $d_i$ columns of $Y$.
		
		Given this, The first entire term $ S^i(\bigotimes_{k = 1}^{i-1} (A_k)_{a_k,*}) \otimes (A_i) \otimes  (\bigotimes_{k = i+1}^{q} Z^{k,j} A_k)x'$ can be rewritten as $S^iA_i \zeta$, where $\zeta = \zeta_{\vec{a}} \in \R^{d_i}$ can be computed in $O(dq)$ time for each sample $\vec{a}$. Thus if we recompute the value $S_i A_i \in \R^{k \times n}$, where $k = \wt{O}(r_3^{16})$, which can be done in time $\wt{O}(\nnz{A_i})$, then every time we are sampling the $i$-th coordinate of some $\vec{a}$, computing the value of $S^iA_i \zeta_{\vec{a}}$ can be done in time $O(k d_i^2) = r_3^{O(1)}.$
		
		We now consider the second term.		
		We perform
		 similar trick, reshaping $b \in \R^n$ into $B \in \R^{(n_1 \cdots n_{i}) \times(n_i \cdots n_q)}$ and writing this term as $ ((\bigotimes_{k = 1}^{i-1} e_{a_k}^\top) \otimes (\mathbb{I}_{n_i}) )B (\bigotimes_{k = i+1}^{q} Z^{k,j})^\top $ and computing $b' = B (\bigotimes_{k = i+1}^{q} Z^{k,j})^\top \in \R^{(n_1 \cdots n_{i}) }$ in $\nnz(B) = \nnz(b)$ time. Let $B' \in \R^{(n_1 \cdots n_{i-1})\times n_i}$ be such that $\text{vec}(B') = b'$, and we reshape again to obtain $(\bigotimes_{k = 1}^{i-1} e_{a_k}^\top) B' (\mathbb{I}_{n_i}) = (\bigotimes_{k = 1}^{i-1} e_{a_k}^\top) B'$ Now note that so far, the value $B'$ did not depend on the sample $\vec{a}$ at all. Thus for each $i=1,2,\dots,q$, $B'$ (which depends only on $i$) can be pre-computed in $\nnz(b)$ time. Given this, the value  $(\bigotimes_{k = 1}^{i-1} e_{a_k}^\top) B'$ is just a row $B_{(a_1,\dots,a_k),*}'$ of $B'$ (or a column of $(B')^\top$). We first claim that $\nnz(B') \leq \nnz(b) = \nnz(B)$. To see this, note that each entry of $B'$ is a dot product $ B_{j,*} (\bigotimes_{k = i+1}^{q} Z^{k,j})^\top$ for some row $B_{j,*}$ of $B$, and moreover there is a bijection between these dot products and entries of $B'$. Thus for every non-zero entry of $B'$, there must be a unique non-zero row (and thus non-zero entry) of $B$. This gives a bijection from the support of $B'$ to the support of $B$ (and thus $b$) which completes the claim. Since $S^i (B_{(a_1,\dots,a_k),*}')^\top$ can be computed in $\tilde{O}(\nnz(B_{(a_1,\dots,a_k),*}'))$ time, it follows that $S^i (B_{(a_1,\dots,a_k),*}')^\top$ can be computed for all rows $(B_{(a_1,\dots,a_k),*}')$ of $B$ in $\wt{O}(\nnz(b))$ time. Given this precomputation, we note that $(\mathbb{I}_{n_i}) \otimes  (\bigotimes_{k = i+1}^{q} Z^{k,j}) b$ is just $S^i(B_{(a_1,\dots,a_k),*}')^\top$ for some $(a_1,\dots,a_k)$, which has already been pre-computed, and thus requires no addition time per sample.  Thus, given a total of $\wt{O}(\sum_{i=1}^q \nnz(A_i)  + q\nnz(b) +r_3^{O(1)})$ pre-processing time, for each sample we can compute $v_i^j$ for all $i \in [q]$ and $j \in [\tau]$ in $\wt{O}(r_3^{O(1)})$ time, and thus  $\wt{O}(r_3^{O(1)})$ time over all $r_3$ samples.

		Given this, the procedure to compute the heavy hitters $H_{i,j}$ takes $\wt{O}(r_3^{16})$ time by Definition \ref{def:csHH} for each sample and $i \in [q], j \in [\tau]$. By a identical pre-computation and rearrangement argument as above, each $\beta_i^j$ (and thus $\beta_i$) can be computed in  $\wt{O}(r_3^{O(1)})$ time per sample after pre-computation. Now note that $\gamma_i$ is simply equal to 
		\begin{align*}
		\median_{j \in [\tau]} \left( \beta_i^j -  (\bigotimes_{k = 1}^{i-1} e_{a_k}^\top) \otimes (Z^{k,j}_{H_i}) \otimes  (\bigotimes_{k = i+1}^{q} Z^{k,j}) \rho \right).
		\end{align*}
		Since $Z^{k,j}_{H_i}$ is sparse, the above can similar be computed in $O(d|H_i|) = \wt{O}(r_3^{O(1)})$ time per sample after pre-computation. To see this, note that the $b$ term of $ (\bigotimes_{k = 1}^{i-1} e_{a_k}^\top) \otimes (Z^{k,j}_{H_i}) \otimes  (\bigotimes_{k = i+1}^{q} Z^{k,j}) \rho$ can be written as $(\bigotimes_{k = 1}^{i-1} e_{a_k}^\top)B''' (Z^{k,j}_{H_i})^\top$, where $B''' \in \R^{n_1 \cdots n_{i-1} \times n_i}$ is a matrix that has already been pre-computed and does not depend on the given sample. Then this quantity is just the dot product of a row of $B'''$ with  $(Z^{k,j}_{H_i})^\top$, but since $(Z^{k,j}_{H_i})$ is $|H_i|$-sparse, so the claim for the $b$ term follows. For the $(A_1 \otimes \cdots \otimes A_q)$ term, just as we demonstrated in the discussion of computing $v_i^j$, note that this can be written as $(\bigotimes_{k = 1}^{i-1}(A_k)_{a_k,*}) Y ((A_i)_{H_i,*})^\top$ for some matrix $Y \in \R^{d_1 \cdots d_i \times d_{i-1}}$ that has already been precomputed.  Since $(A_i)_{H_i,*}$ only has $O(|H_i|)$ non-zero rows, this whole product can be computed in time $O(d|H_i|)$ as needed.
		
		Similarly, we can compute the sampling probabilities 
		\begin{equation*}
		\bpr{\text{sample } a_i \leftarrow j } =  \frac{\text{median}_{j \in \tau}\left|\left( (\bigotimes_{k = 1}^{i-1} e_{a_k}^\top) \otimes (e_\xi^\top) \otimes  (\bigotimes_{k = i+1}^{q} Z^{k,j}) \rho\right)\right|^p }{ \sum_{\xi' \in H_i} \text{median}_{j \in \tau}\left|\left( (\bigotimes_{k = 1}^{i-1} e_{a_k}^\top) \otimes (e_{\xi'}^\top) \otimes  (\bigotimes_{k = i+1}^{q} Z^{k,j}) \rho\right)\right|^p} 
		\end{equation*}
		for each every item $\zeta \in H_i$ in $\wt{O}(r_3^{O(1)})$ time after pre-computation, and note $|H_i| =\wt{O}(r_3^{O(1)})$ by definition \ref{def:csHH}. Thus the total time to sample a heavy hitter in a given coordinate $i \in [q]$  for each sample $\wt{O}(r_3^{O(1)})$ per sample, for an overall time of $\wt{O}(qr_3^{O(1)})$ over all samples and $i \in [q]$.
		
		Finally, we consider the runtime for sampling a non-heavy item. Note that $|\Omega_t| = O(n_i/\eta)$ with high probability for all $t \in [\eta]$ by chernoff bounds. Computing each 
		\begin{align*}
		\theta_j = \median_{l \in [\tau]} \left(\left| (\bigotimes_{k = 1}^{i-1} e_{a_k}^\top) \otimes (e_j^\top) \otimes  (\bigotimes_{k = i+1}^{q} Z^{k,l}) \rho \right|^p\right)
		\end{align*}  takes $O(qd)$ time after pre-computation, and so we spend a total of $O(qd n_i/\eta)$ time sampling an item from $\Omega_t \setminus H_i$. Since we only ever sample a total of $r_3$ samples, and $\eta = \Theta(r_3^2)$, the total time for sampling non-heavy hitters over the course of the algorithm in coordinate $i$ is $o(n_i) = o(\nnz(A_i))$ as needed, which completes the proof of the runtime. 
		
		\paragraph{Computing the Sampling Probabilities $\alpha_i$}
		The above arguments demonstrate how to sample efficiently from the desired distribution. We now must describe how the sampling probabilities $\alpha_i$ can be computed. First note, for each sample that is sampled in the above way, at every step we compute exactly the probability with which we decide to sample a coordinate to that sample. Thus we know exactly the probability that we choose a sample, and moreover we can compute each $q_i$ in $O(d)$ time as in Lemma \ref{lem:O1approx}. Thus we can compute the maximum of $q_i$ and this probability exactly. For each item sampled as a result of the leverage score sampling probabilities $q_i$ as in Lemma \ref{lem:O1approx}, we can also compute the probability that this item was sampled in the above procedure, by using the same sketching vectors $Z^{i,k}$ and count-sketches $S^i$. This completes the proof of the Lemma.

	\end{proof}

\section{All-Pairs Regression}\label{sec:allpairs}

Given a matrix $A \in \R^{n \times d}$ and $b \in \R^n$, let $\bar{A} \in \R^{n^2 \times d}$ be the matrix such that $\bar{A}_{i + (j-1)n,*} = A_{i,*} - A_{j,*}$, and let $\bar{b} \in \R^{n^2}$ be defined by $\bar{b}_{i + (j-1)n} = b_{i} - b_{j}$. Thus, $\bar{A}$ consists of all pairwise differences of rows of $A$, and $\bar{b}$ consists of all pairwise differences of rows of $b$,. The $\ell_p$ all pairs regression problem on the inputs $A,b$ is to solve $\min_{x \in \R^{d}} \|\bar{A}x - \bar{b}\|_p$.

First note that this problem has a close connection to Kronecker product regression. Namely, the matrix $\bar{A}$ can be written $\bar{A} = A \otimes \1^n -\1^n \otimes A$, where $\1^n \in \R^n$ is the all $1$'s vector. Similarly, $\bar{b} = b \otimes \1^n - \1^n \otimes b$. For simplicity, we now drop the superscript and write $\1 = \1^n$.  

Our algorithm is given formally in Figure \ref{alg:apreg}. We generate sparse $p$-stable sketches $S_1,S_2 \in \R^{k \times n}$, where $k = (d/(\eps \delta))^{O(1)}$. We compute $M = (S_1 \otimes S_2)(F \otimes \1 - \1 \otimes F) = S_1 F \otimes S_2 \1 -  S_1\1 \otimes S_2 F$, where $F = [A,b]$. We then take the $QR$ decomposition $M = QR$. Finally, we sample rows of $(F \otimes \1 - \1 \otimes F)R^{-1}$ with probability proportional to their $\ell_p$ norms. This is done by an involved sampling procedure described in Lemma \ref{lem:samplefast}, which is similar to the sampling procedure used in the proof of Theorem \ref{thm:l1_regression}. Finally, we solve the regression problem $\min_x \|\Pi(\bar{A}x - \bar{b})\|_p$, where $\Pi$ is the diagonal row-sampling matrix constructed by the sampling procedure. 
We summarize the guarantee of our algorithm in the following theorem.

\begin{theorem}\label{thm:allpairsmain}
	Given $A \in \R^{n \times d}$ and $b \in \R^n$, for $p \in [1,2]$, let $\bar{A} = A \otimes \1 -\1 \otimes A \in \R^{n^2 \times d}$ and $\bar{b} = b \otimes \1 - \1 \otimes b \in \R^{n^2}$. Then there is an algorithm for that outputs $\hat{x} \in \R^d$ such that with probability $1-\delta$ we have	$\|\bar{A}\hat{x} - \bar{b}\|_p \leq (1+\eps) \min_{x \in \R^d}\|\bar{A}x -\bar{b}\|_p$.
	The running time is $\wt{O}(\nnz(A) + (d/(\eps\delta))^{O(1)})$.
\end{theorem}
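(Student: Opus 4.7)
The plan is to follow the two-stage framework of Theorem \ref{thm:l1_regression}: first obtain an $O(1)$-approximate solution by constructing an $\ell_p$ well-conditioned basis for the column span of $[\bar A,\bar b]$, then refine to a $(1+\eps)$-approximation via residual sampling. The crucial novelty is that $\bar A$ is not itself a Kronecker product but the \emph{difference} of two Kronecker products $A\otimes \1-\1\otimes A$, so the Kronecker-product sketching machinery must be adapted. Importantly, for any $x\in\R^d$, letting $y=Ax$ and reshaping the $n^2$-dimensional vector $\bar A x$ into the $n\times n$ matrix $y\1^\top-\1 y^\top$, applying $S_1\otimes S_2$ corresponds to the matrix $(S_1y)(S_2\1)^\top-(S_1\1)(S_2y)^\top$, which can be computed in $O(\nnz(A)+\poly(d/\eps\delta))$ time from $S_1A,S_2A,S_1\1,S_2\1$.

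For the well-conditioned basis, I would prove that $S_1\otimes S_2$ is a low-distortion $\ell_p$-subspace embedding for the $(d+1)$-dimensional column span of $[\bar A,\bar b]$. The upper bound follows from repeated application of a single-$p$-stable claim of the type in Proposition \ref{prop:stablekronecker} (with $q=2$): for any fixed matrix $Y\in\R^{n\times n}$, with probability $1-1/\poly(d)$, $\|S_1 Y S_2^\top\|_p^p$ is a $(d/\eps\delta)^{O(1)}$ multiplicative approximation of $\|Y\|_p^p$. The lower bound uses anti-concentration of the $p$-stable distribution. A net argument over the $(d+1)$-dimensional subspace then yields the low-distortion embedding property simultaneously for all $x$. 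Taking the QR factorization $M=QR$ of $M=(S_1\otimes S_2)[\bar A,\bar b]$ and invoking Fact \ref{fact:wellcond} gives that $[\bar A,\bar b]R^{-1}$ is an $\ell_p$ well-conditioned basis.

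The main obstacle is then efficient $\ell_p$-sampling from the rows of $U=\bar A R^{-1}\in\R^{n^2\times d}$ without materializing $\bar A$. Each row is indexed by a pair $(i,j)\in[n]\times[n]$ and equals $(A_{i,*}-A_{j,*})R^{-1}$. I would imitate the hierarchical sampling of Lemma \ref{lem:lpsampling}: first sample $i$ from an approximation of $\Pr[i]\propto\sum_j\|(A_{i,*}-A_{j,*})R^{-1}\|_p^p$, then sample $j$ conditionally. To estimate the $i$-marginals, I would compute $(AR^{-1})Z$ for a dense $p$-stable sketch $Z\in\R^{d\times\Theta(\log n)}$ in $\wt O(\nnz(A)+\poly(d))$ time, and derive from it an Indyk-median estimate of $\|A_{i,*}R^{-1}\|_p^p$ and of $\|\1\otimes AR^{-1}Z\|$-type aggregates; the linearity of $\bar A$ in $A$ lets $p$-stable sketches commute with the row difference. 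For the conditional $j$-sampling, I would combine count-sketch-for-heavy-hitters (Definition \ref{def:csHH}) to catch rows of $U$ that are heavy relative to their slice, and precision sampling on the tail within a random partition of $[n]$, exactly as in Algorithm \ref{alg:residualsample}. The same infrastructure, applied with $\rho=\bar A x'-\bar b$ (a difference of Kronecker structures, handled analogously), performs the residual sampling needed to go from $O(1)$- to $(1+\eps)$-approximation.

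Once the sampling matrix $\Pi$ with $\poly(d/\eps\delta)$ nonzeros is obtained, the sketched program $\min_x\|\Pi(\bar A x-\bar b)\|_p$ has $\poly(d/\eps\delta)$ constraints and $d$ variables, so it can be solved by linear/convex programming in $\poly(d/\eps\delta)$ time, and Theorem~6 of \cite{ddhkm09} certifies the $(1+\eps)$-approximation. For $p=2$ the argument simplifies: one replaces the $p$-stable sketch with approximate leverage score sampling exactly as in Theorem~\ref{thm:l2_Kronregression}, observing that leverage scores of rows of $\bar A$ can be bounded using the leverage scores of $A$ itself (since $\bar A=A\otimes\1-\1\otimes A$ has column span contained in the column span of $A$), yielding the tighter $O(\nnz(A)+\poly(d/\eps\delta))$ runtime.
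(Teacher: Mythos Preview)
Your overall strategy is workable, but you miss two simplifications the paper makes and your proposal contains one genuine gap.

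\textbf{One stage, not two.} The paper does \emph{not} use the $O(1)$-then-residual-refine framework of Theorem~\ref{thm:l1_regression}. Instead it sets $F=[A,b]$, builds a well-conditioned basis for the full $(d{+}1)$-dimensional column span of $F\otimes\1-\1\otimes F$, and samples enough rows (via Lemma~\ref{lem:samplefast}) so that Theorem~5 of \cite{ddhkm09} gives a $(1\pm\eps)$ \emph{subspace embedding} directly. Since $\bar A x-\bar b$ lies in that span for every $x$, solving the sketched problem already yields a $(1+\eps)$-approximation; no residual sampling is needed. Your two-stage route would work but is strictly more complicated.

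\textbf{The net argument fails as stated.} Your plan for proving that $S_1\otimes S_2$ is a low-distortion embedding is to combine a Proposition~\ref{prop:stablekronecker}-style bound for a fixed $Y$ with a net over the $(d{+}1)$-dimensional subspace. But Proposition~\ref{prop:stablekronecker} only gives success probability $1-O(1/\lambda)$ with distortion $(\lambda\log n)^{O(1)}$; a net of size $\exp(\Theta(d))$ would force $\lambda=\exp(\Theta(d))$ and hence exponential distortion. The paper avoids this entirely in Lemma~\ref{lem:lowdistortionap}: it conditions on $S_1,S_2$ each being a low-distortion embedding for the \emph{fixed} $(d{+}2)$-dimensional span of $[F,\1]$ (Lemma~\ref{lem:lowdistembed}), then uses reshaping to write $\|(S_1\otimes S_2)(F\otimes\1-\1\otimes F)x\|_p=\|S_1(Fx\1^\top-\1 x^\top F^\top)S_2^\top\|_p$ and observes that every column (respectively row) of the inner matrix lies in that fixed span. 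This gives the embedding for all $x$ simultaneously with $\poly(d)$ distortion and no net.

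\textbf{The $p=2$ case.} Your parenthetical ``$\bar A$ has column span contained in the column span of $A$'' is ill-typed ($\bar A\in\R^{n^2\times d}$, $A\in\R^{n\times d}$), and leverage scores of $\bar A$ do not factor through those of $A$ the way they do for a genuine Kronecker product. The paper instead uses the \textsc{TensorSketch} subspace embedding of \cite{dssw18}, which applies to $\bar A=A\otimes\1-\1\otimes A$ and $\bar b$ in $O(\nnz(A))$ time, and then solves the sketched least-squares problem directly.
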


\begin{algorithm*}[!t]\caption{Our All-Pairs Regression Algorithm}\label{alg:apreg}
	\begin{algorithmic}[1]
		\Procedure{All-Pairs Regression}{$A,b$} 
		\State $F = [A,b] \in \R^{n \times d+1}$. $r\leftarrow \poly(d/\eps)$
		\State Generate $S_1,S_2 \in \R^{k \times n}$ sparse $p$-stable transforms for $k = \poly(d/(\eps \delta))$. 
		\State Sketch $(S_1 \otimes S_2)(F \otimes \1 - \1 \otimes F)$.
		\State Compute $QR$ decomposition: $(S_1 \otimes S_2)(F \otimes \1 - \1 \otimes F)= QR$.
		\State Let $M = (F \otimes \1 - \1 \otimes F)R^{-1}$, and $\sigma_i = \|M_{i,*}\|_p^p/\|M\|_p^p$. 
		\State Obtain row sampling diagonal matrix $\Pi \in \R^{n \times n}$ such that $\Pi_{i,i}=1/\wt{q_i}^{1/p}$ independently with probability $q_i \geq \min\{1,r\sigma_i\}$, where $\wt{q_i} = (1 \pm \eps^2)q_i$.  \Comment{Lemma ~\ref{lem:samplefast}}
		\State \Return $\hat{x}$ , where $\hat{x} = \arg \min_{x \in \R^d} \|\Pi(\bar{A} x - \bar{b})\|_p$.
		\EndProcedure
	\end{algorithmic}
\end{algorithm*}
The theorem crucially utilizes our fast $\ell_p$ sampling routine, which is described in Figure \ref{alg:sample} in the supplementary. A full discussion and proof of the lemma can be found in the supplementary material \ref{sec:samplefastallpairs}.

\begin{lemma}[Fast $\ell_p$ sampling]\label{lem:samplefast}
	Given $R \in \R^{d+1 \times d+1}$ and $F = [A,b] \in \R^{n \times d+1}$, there is an algorithm that, with probability $1-n^{-c}$ for any constant $c$, produces a diagonal matrix $\Pi \in \R^{n^2 \times n^2}$ such that $\Pi_{i,i} = 1/
\wt{q_i}^{1/p}$ with probability $q_i \geq \min\{1,r \|M_{i,*}\|_p^p/\|M\|_p^p\}$ and  $\Pi_{i,i} = 0$ otherwise, where $r = \poly(d/\eps)$ and $M = (F \otimes \1 - \1 \otimes F)R^{-1}$, and $\tilde{q_i} = (1 \pm \eps^2)q_i$ for all $i \in [n^2]$. The total time required is $\wt{O}(\nnz{A} + \poly(d/\eps))$.
\end{lemma}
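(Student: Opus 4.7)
The plan is to exploit the factorization $M = G \otimes \1 - \1 \otimes G$, where $G := FR^{-1} \in \R^{n \times (d+1)}$, so that the row of $M$ indexed by $(i,j)$ is $G_{i,*}-G_{j,*}$ and $\|M_{(i,j),*}\|_p^p = \|G_{i,*}-G_{j,*}\|_p^p$. Sampling is hierarchical: first sample $i\in[n]$ with probability proportional to $T_i:=\sum_{j}\|G_{i,*}-G_{j,*}\|_p^p$, and then sample $j$ conditional on $i$ with probability $\propto \|G_{i,*}-G_{j,*}\|_p^p$. The crux is that all the necessary sketches can be implemented without ever materializing $G$ row-by-row, since for any vector $w\in\R^{d+1}$ we can form $Gw = F(R^{-1}w)$ in $O(\nnz(A)+d^2)$ time.

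\textbf{Precomputation.} Draw $\tau = \poly(\log n/\eps)$ independent $p$-stable vectors $Z^k\in\R^{d+1}$ and $Y^k\in\R^n$. For each $k$, compute $z^k := R^{-1}Z^k$ in $O(d^2)$ time and then $GZ^k = F z^k\in\R^n$ in $O(\nnz(A))$ time; also precompute the scalars $a^k := (Y^k)^\top G Z^k$ and $b^k := \sum_\ell Y^k_\ell$. In addition, build a count-sketch-for-heavy-hitters matrix $\tilde S\in\R^{m\times n}$ of Definition~\ref{def:csHH} with $m=\poly(\log n/\eps)$, and precompute $\tilde S\cdot (GZ^k)$ and $\tilde S\cdot \1$. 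The total precomputation cost is $\wt O(\nnz(A)+\poly(d/\eps))$.

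\textbf{Marginal and conditional sampling.} For each $i$, set $u_i^k := b^k\cdot (GZ^k)_i - a^k = \sum_{j,\ell}(G_{i,\ell}-G_{j,\ell})Y^k_j Z^k_\ell$; by Proposition~\ref{prop:stablekronecker} (with $q=2$), $\wt T_i := \median_k|u_i^k|^p$ is an $O(\log^2 n)$-factor approximation of $T_i$ with high probability, and we sample $i$ from $\{\wt T_i/\sum_{i'}\wt T_{i'}\}$. Conditional on $i$, the vector $W^k := GZ^k - (GZ^k)_i\cdot \1 \in \R^n$ satisfies $W^k_j = (G_{i,*}-G_{j,*})Z^k$, so $\median_k|W^k_j|^p$ is an $O(\log n)$-factor estimate of $\|G_{i,*}-G_{j,*}\|_p^p$. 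Compute $\tilde S W^k = \tilde S(GZ^k) - (GZ^k)_i\cdot (\tilde S\1)$ in $O(m)$ time per $k$ and extract candidate heavy hitters $H_i^k\subset[n]$; let $H_i:=\bigcup_k H_i^k$, of polylogarithmic size. Use the $p$-stable sketches to estimate $\sum_{j\notin H_i}\|G_{i,*}-G_{j,*}\|_p^p/T_i$ and with that probability pass to the non-heavy branch; otherwise sample $j\in H_i$ proportional to its median estimate. In the non-heavy branch, partition $[n]\setminus H_i$ into $\Theta(r^2)$ random buckets, pick one uniformly at random, and sample proportionally to median estimates within that bucket, exactly as in Algorithm~\ref{alg:residualsample}.

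\textbf{Refining the probabilities and the main obstacle.} The above yields a sample $(i,j)$ whose probability matches the ideal $\|M_{(i,j),*}\|_p^p/\|M\|_p^p$ up to a $\poly(\log n)$ distortion; oversampling $r$ by this factor (still $\poly(d/\eps)$) guarantees $q_i\geq \min\{1,r\|M_{(i,j),*}\|_p^p/\|M\|_p^p\}$. To obtain $\wt q_i=(1\pm\eps^2)q_i$ for the rescaling, for each sampled pair we recompute $G_{i,*}-G_{j,*} = (F_{i,*}-F_{j,*})R^{-1}$ exactly in $O(d^2)$ time and estimate $\|M\|_p^p$ to $(1\pm\eps^2)$ precision via an independent Indyk-median sketch of dimension $\poly(1/\eps)$; every probability along the sampling hierarchy is a known function of the computed sketches, so $q_i$ itself can be evaluated to the same precision. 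The principal obstacle, inherited essentially from the proof of Lemma~\ref{lem:lpsampling}, is carefully tracking the compounded $\poly(\log n)$-factor distortion through the two levels of hierarchical sampling: the heavy/non-heavy decomposition must be correct simultaneously over all $r$ repetitions (a union bound over the $p$-stable estimators with sketch dimension $\poly(1/\eps)$), and the random $\Theta(r^2)$-bucketing in the non-heavy branch must be analyzed via Chernoff-style arguments to show that all but a negligible fraction of the $\ell_p^p$ mass is distributed evenly across buckets once the heavy hitters are removed. Once these correctness arguments are in place, the runtime $\wt O(\nnz(A)+\poly(d/\eps))$ follows the same accounting as in Lemma~\ref{lem:lpsampling}: every per-sample step costs $\poly(d/\eps)$ after the $\wt O(\nnz(A))$-time precomputation, and the aggregate non-heavy bucket work is $O(n\tau/r) = \wt O(\nnz(A))$.
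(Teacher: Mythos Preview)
Your proposal is correct and gives a valid proof, though the route differs from the paper's own argument in a couple of interesting ways.

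The paper's Algorithm~4 begins with a \emph{Gaussian} dimension-reduction on the column side: it replaces $M\in\R^{n^2\times(d+1)}$ by $C=MG$ for a Gaussian $G\in\R^{(d+1)\times\Theta(\log n)}$, so that $\|C_{i,*}\|_p^p\approx\|M_{i,*}\|_2^p$ (and hence $\|M_{i,*}\|_p^p$ up to a $\poly(d)$ factor), and then $\ell_p$-samples an \emph{entry} of $C$ by reshaping each column of $C$ into an $n\times n$ matrix and $p$-stably sketching its column norms. You skip the Gaussian step entirely and work directly with $M$, using the Kronecker-of-$p$-stables estimator of Proposition~\ref{prop:stablekronecker} for the outer index $i$ and the ordinary Indyk median for the inner index $j$; this is arguably cleaner and is a more direct specialization of the Lemma~\ref{lem:lpsampling} machinery to the all-pairs structure. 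Second, for the non-heavy branch the paper uses the exponential-scaling trick of \cite{jayaram2018perfect}: within the sampled bucket $\Omega_t$ it applies the diagonal $D_{j,j}=1/u_j^{1/p}$ with i.i.d.\ exponentials $u_j$ and uses a second count-sketch to locate the scaled maximizer, which yields a \emph{perfect} $\ell_p$ sample. You instead reuse the bucket-then-median approach of Algorithm~\ref{alg:residualsample}, which is equally valid here and avoids introducing a second tool. What your route buys is uniformity with the $\ell_p$-regression sampling lemma; what the paper's route buys is that the inner sample is exact (no additional $\poly(\log n)$ distortion from the $p$-stable medians), at the cost of one more ingredient.

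Two minor remarks: (i) there is a harmless sign slip in your definition of $W^k$ (you get $W^k_j=(G_{j,*}-G_{i,*})Z^k$, not $(G_{i,*}-G_{j,*})Z^k$, but only $|W^k_j|$ is used); (ii) the heavy-hitter set $H_i=\bigcup_k H_i^k$ has size $\tau\cdot\poly(r)=\poly(d/\eps,\log n)$ rather than merely ``polylogarithmic,'' though this is still comfortably within the $\poly(d/\eps)$ per-sample budget.
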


\subsection{Analysis of All-Pairs Regression Algorithm}

In this section, we prove the correctness of our all-pairs regression algorithm \ref{alg:apreg}. Our main theorem, Theorem \ref{thm:allpairsmain}, relies crucially on the sample routine developed in Section \ref{sec:samplefastallpairs}. We first prove the theorem which utilizes this routine, and defer the description and proof of the routine to Section \ref{sec:samplefastallpairs}. 

Recall first the high level description of our algorithm (given formally in Figure \ref{alg:apreg}).  We pick $S_1,S_2 \in \R^{k \times n}$and $S$ are sparse $p$-stable sketches. We then compute $M = (S_1 \otimes S_2)(F \otimes \1 - \1 \otimes F) = S_1 F \otimes S_2 \1 -  S_1\1 \otimes S_2 F$, where $F = [A,b]$. We then take the $QR$ decomposition $M = QR$. Finally, we sample rows of $(F \otimes \1 - \1 \otimes F)R^{-1}$ with probability proportional to their $\ell_p$ norms. This is done by the sampling procedure described in Section \ref{sec:samplefastallpairs}. Finally, we solve the regression problem $\min_x \|\Pi(\bar{A}x - \bar{b})\|_p$, where $\Pi$ is the diagonal row-sampling matrix constructed by the sampling procedure. 

We begin by demonstrating that  $S_1 \otimes S_2$ is a $\poly(d)$ distortion embedding for the column span of $[\bar{A},\bar{b}]$.

\begin{lemma}\label{lem:lowdistortionap}
	Let $S_1,S_2 \in \R^{k \times n}$ be sparse $p$-stable transforms, where $k = \poly(d/(\eps\delta))$. Then for all $x \in \R^{d+1}$, with probability $1-\delta$ we have
	\begin{align*}
	1/O(d^4 \log^4d) \cdot \| [\bar{A},\bar{b}]x\|_p \leq \|(S_1 \otimes S_2)[\bar{A},\bar{b}]x\|_p \leq O(d^2 \log^2 d) \cdot \|[\bar{A},\bar{b}]x\|_p .
	\end{align*}
\end{lemma}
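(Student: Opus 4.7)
The approach is to reduce the Kronecker embedding claim to two applications of Lemma \ref{lem:lowdistembed} for the fixed subspace $\mathcal V := \mathrm{col}([F,\mathbf 1]) \subseteq \R^n$ (of dimension at most $d+2$), where $F=[A,b]$ and $\mathbf 1 = \mathbf 1^n$. First, by Lemma \ref{lem:lowdistembed} with sketch size $k = \poly((d+2)/\delta)$ and a union bound, with probability $1-\delta$ both $S_1$ and $S_2$ satisfy $\|v\|_p \le \|S_i v\|_p \le O(d \log d)\|v\|_p$ simultaneously for every $v \in \mathcal V$. I condition on this event for the remainder.

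Next, fix $x \in \R^{d+1}$ and set $y = Fx \in \R^n$, which lies in $\mathcal V$. Using the mixed-product property of Kronecker products together with the standard vectorization/matrix identification (a vector $u \otimes v$ corresponds to the outer product $uv^\top$), the vector $[\bar A, \bar b] x = y \otimes \mathbf 1 - \mathbf 1 \otimes y$ corresponds to the $n\times n$ matrix $y \mathbf 1^\top - \mathbf 1 y^\top$, whose entrywise $\ell_p$ norm equals $\sum_{i,j}|y_i - y_j|^p$. Similarly, $(S_1 \otimes S_2)[\bar A, \bar b] x$ corresponds to the $k\times k$ matrix
\[
M \;=\; S_1\bigl(y \mathbf 1^\top - \mathbf 1 y^\top\bigr)S_2^\top \;=\; (S_1 y)(S_2 \mathbf 1)^\top - (S_1 \mathbf 1)(S_2 y)^\top .
\]

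The key step is a column-by-column decomposition of $M$: the $b$-th column of $M$ equals $S_1 w_b$ with $w_b := (S_2 \mathbf 1)_b \, y - (S_2 y)_b \, \mathbf 1 \in \mathcal V$ (since $y, \mathbf 1 \in \mathcal V$). The subspace $\mathcal V$ depends only on the inputs and not on $S_2$, so $S_1$ being an embedding for $\mathcal V$ applies to each $w_b$ even though $w_b$ is a random vector. Applying the embedding bound for $S_1$ column-wise and summing $p$-th powers gives
\[
\sum_{b=1}^k \|w_b\|_p^p \;\le\; \|M\|_p^p \;\le\; O(d\log d)^p \sum_{b=1}^k \|w_b\|_p^p .
\]

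The final step is to evaluate $\sum_b \|w_b\|_p^p$. Since $(w_b)_i = (S_2 \mathbf 1)_b y_i - (S_2 y)_b = \bigl(S_2(y_i \mathbf 1 - y)\bigr)_b$, swapping the order of summation yields $\sum_b \|w_b\|_p^p = \sum_i \|S_2(y_i \mathbf 1 - y)\|_p^p$. Applying the embedding bound for $S_2$ to each $y_i \mathbf 1 - y \in \mathcal V$ and summing over $i$, together with the identity $\sum_i \|y_i \mathbf 1 - y\|_p^p = \sum_{i,j}|y_i - y_j|^p = \|[\bar A, \bar b] x\|_p^p$, gives
\[
\|[\bar A, \bar b] x\|_p^p \;\le\; \sum_b \|w_b\|_p^p \;\le\; O(d \log d)^p \|[\bar A, \bar b] x\|_p^p .
\]
Chaining the two pairs of inequalities yields $\|[\bar A, \bar b]x\|_p \le \|(S_1 \otimes S_2)[\bar A, \bar b]x\|_p \le O(d^2 \log^2 d)\|[\bar A, \bar b]x\|_p$, which is (in fact stronger than) the stated conclusion.

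The only subtle point is the independence remark highlighted above: one must resist the temptation to apply Lemma \ref{lem:lowdistembed} to random subspaces. Because $\mathcal V$ is defined directly from $A$ and $b$, one fixed invocation of Lemma \ref{lem:lowdistembed} per sketch $S_i$ suffices, and no net argument or additional union bound is required.
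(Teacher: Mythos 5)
Your proof is correct and takes essentially the same route as the paper: condition on $S_1,S_2$ being low-distortion $\ell_p$ embeddings for the fixed subspace $\mathrm{col}([F,\mathbf 1])$, reshape the Kronecker product vector into the matrix $S_1(y\mathbf 1^\top - \mathbf 1 y^\top)S_2^\top$, and apply the embedding guarantee column-wise for $S_1$ and row-wise for $S_2$. Your explicit column/sum-swap bookkeeping actually yields the stronger chained lower bound of $1$ rather than the lemma's stated $1/O(d^4\log^4 d)$; this is a harmless tightening, and the paper's own chain of inequalities would give the same.
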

\begin{proof}
	Let $F = [A,b]$. Then a basis for the columns of $[\bar{A},\bar{b}]$ is given by $F \otimes \1 - \1 \otimes F$. We first condition on both $S_1,S_2$ being a low-distortion embedding for the $d+2$ dimensional column-span of $[F,\1]$. Note that this holds with large constant probability by \ref{lem:lowdistembed}.
	
	So for any $x \in \R^{d+1}$, we first show the upper bound
	\begin{align*}
	\|(S_\1 \otimes S_2)(F \otimes \1 - \1 \otimes F)x\|_p 
	& = \|(S_1 F \otimes S_2 \1)x - (S_1 \1 \otimes S_2 F)\|_p \\
	&= \|S_1 F x \1^\top S_2^\top - S_1 \1 x^\top F^\top  S_2^\top \|_p \\
	&= \|S_1( F x \1^\top  -  \1 x^\top F^\top)  S_2^\top \|_p \\
	& \leq  O(d \log d) \cdot \|(F x \1^\top  -  \1 x^\top F^\top)  S_2^\top \|_p \\
	&\leq   O(d^2 \log^2 d) \cdot \|F x \1^\top  -  \1 x^\top F^\top \|_p \\
	&=  O(d^2 \log^2 d) \cdot \|(F \otimes \1 - \1 \otimes F)x\|_p , 
	\end{align*}
	where the first equality follows by properties of the Kronecker product \cite{l00}, the second by reshaping Lemma \ref{lem:reshaping}. The first inequality follows from the fact that each column of $( F x \1^\top  -  \1 x^\top F^\top)  S_2^\top$ is a vector in the column span of $[F,\1]$, and then using that $S_1$ is a low distortion embedding. The second inequality follows from the fact that each row of $(F x \1^\top  -  \1 x^\top F^\top)$ is a vector in the \textit{column span} of $[F,\1]$, and similarly using that $S_2$ is a low distortion embedding. The final inequality follows from reshaping. Using a similar sequence of inequalities, we get the matching lower bound as desired.
\end{proof}

We now prove our main theorem.

\paragraph{Theorem \ref{thm:allpairsmain}}{\it
	Given $A \in \R^{n \times d}$ and $b \in \R^n$, for $p \in [1,2]$ there is an algorithm for the All-Pairs Regression problem that outputs $\hat{x} \in \R^d$ such that with probability $1-\delta$ we have
	\[ 	\|\bar{A}\hat{x} - \bar{b}\|_p \leq (1+\eps) \min_{x \in \R^d}\|\bar{A}x -\bar{b}\|_p	\]
	Where  $\bar{A} = A \otimes \1 -\1 \otimes A \in \R^{n^2 \times d}$ and $\bar{b} = b \otimes \1 - \1 \otimes b \in \R^{n^2}$.
	For $p <2$, the running time is $\wt{O}(nd + (d/(\eps\delta))^{O(1)})$, and for $p=2$ the running time is $O(\nnz(A) + (d/(\eps\delta))^{O(1)})$.}
\begin{proof}
	We first consider the case of $p=2$. Here, we can use the fact that the \textsc{TensorSketch} random matirx $S \in \R^{k \times n}$ is a subspace embedding for the column span of $[\bar{A},\bar{b}]$ when $k = \Theta(d/\eps^2)$ \cite{dssw18}, meaning that $\|S[\bar{A},\bar{b}]\|_2 = (1 \pm \eps)\|[\bar{A},\bar{b}]x\|_2$ for all $x \in \R^{d+1}$ with probability $9/10$. Moreover, $S\bar{A}$ and $S\bar{b}$ can be computed in $O(\nnz(A) + \nnz(b)) = O(\nnz(A))$ by \cite{dssw18} since they are the difference of Kronecker products. As a result, we can simply solve the regression problem $\hat{x} = \arg\min_x \|S\bar{A}x -S \bar{b}\|_2$ in $\poly(kd)$ time to obtain the desired $\hat{x}$.
	
	For $p<2$, we use the algorithm in Figure \ref{alg:apreg}, where the crucial leverage score sampling procedure to obtain $\Pi$ in step 7 of Figure \ref{alg:apreg} is described in Lemma \ref{lem:samplefast}. Our high level approach follows the general $\ell_p$ sub-space embedding approach of \cite{ddhkm09}. Namely, we first compute a low-distortion embedding $(S_1 \otimes S_2)(F \otimes \1 - \1 \otimes F)$.  By Lemma \ref{lem:lowdistortionap}, using sparse-p stable transformations $S_1,S_2$, we obtain the desired $\poly(d)$ distortion embedding into $\R^{k^2}$, where $k = \poly(d/\eps)$. Note that computing $(S_1 \otimes S_2)(F \otimes \1 - \1 \otimes F)$ can be done in $O(\nnz(A) + \nnz(b) + n)$ time using the fact that $(S_1 \otimes S_2)(F \otimes \1) = S_1 F \otimes S_2 \1$. As shown in \cite{ddhkm09}, it follows that $M = (F \otimes \1 - \1 \otimes F)R^{-1}$ is an $\ell_p$  well-conditioned basis for the column span of $(F \otimes \1 - \1 \otimes F)$ (see definition \ref{def:wellconditioned}). Then by Theorem 5 of \cite{ddhkm09}, if we let $\hat{\Pi}$ be the diagonal row sampling matrix such that $\hat{\Pi}_{i,i} = 1/q_i^{1/p}$ for each $i$ with probability $q_i \geq \min\{1, r \|M_{i,*}\|_p^p/ \|M\|_p^p\}$ (and $\hat{\Pi}_{i,i} = 0$ otherwise) for $r = \poly(d \log(1/\delta)/\eps)$, then with probability $1-\delta$ we have 
	\begin{align*}
	\|\hat{\Pi} (F \otimes \1 - \1 \otimes F) x\|_p = (1 \pm \eps )\| (F \otimes \1 - \1 \otimes F) x\|_p
	\end{align*}
	for all $x \in \R^{d+1}$. First assume that we had such a matrix.

	Since $(\bar{A}x - \bar{b})$ is in the column span of $(F \otimes \1 - \1 \otimes F)$ for any $x \in \R^{d+1}$, it follows that $\|\hat{\Pi}(\bar{A}x - \bar{b})\|_p  = (1 \pm \eps)\|(\bar{A}x - \bar{b})\|_p$ for all $x \in \R^{d}$, which completes the proof of correctness. By Lemma \ref{lem:samplefast}, we can obtain a row sampling matrix $\Pi$ in time $\wt{O}(nd + \poly(d/\eps))$, except that the entries of $\Pi$ are instead equal to either $0$ or $1/\wt{q}_i^{1/p}$ where $\wt{q_i} = (1 \pm \eps^2)q_i$. Now let $\hat{\Pi}$ be the idealized row sampling matrices from above, with entries either $0$ or $1/q_i^{1/p}$ as needed for Theorem 5 of \cite{ddhkm09}. Note that for any matrix $Z$ each row of $\hat{\Pi}Z x$ is equal to $\Pi Z x$ times some constant $1-\eps^2<c < 1+ \eps^2$. It follows that 
	$\|\Pi(\bar{A}x - \bar{b})\|_p = (1 \pm \eps^2) \|\hat{\Pi}(\bar{A}x - \bar{b})\|_p$ for all $x \in \R^d$, and thus the objective function is changed by at most a $(1 \pm \eps^2)$ term, which is simply handled by a constant factor rescaling of $\eps$.
	
	Finally, we can solve the sketched regression problem $\|\Pi(\bar{A} x - \bar{b})\|_p$ which has $\poly(d/\eps)$ constraints and $d$ variables in time $\poly(d/\eps)$ using linear programming for $p=1$ (see \cite{ls14,ls15,cls19,lsz19,b20} for the state of the art linear program sovler), or more generally interior point methods for convex programming for $p> 1$ (see \cite{bcll18,akps19,lsz19} for the recent development of $\ell_p$ solver. Finally, the failure probability bound holds by union bounding over all the aforementioned results, and noting that the lowest probability event was the even that $S_1 \otimes S_2$ was a low distortion embedding via Lemma \ref{lem:lowdistortionap}.  This completes the proof of the theorem. 
	
\end{proof}

\begin{algorithm*}[!t]\caption{Algorithm to $\ell_p$ sample $\Theta(r)$ rows of $M=(F \otimes \1 - \1 \otimes F )R^{-1}$}\label{alg:sample}
	\begin{algorithmic}[1]
		\Procedure{$\ell_p$ sample}{$F= [A,b] \in \R^{n \times d},R^{-1} \in \R^{d+1 \times d+1}, r$} 
		\State Generate a matrix $G \in \R^{d+1 \times \xi}$ of i.i.d. $\mathcal{N}(0,1/\sqrt{\xi})$ Gaussian random variables, with $\xi = \Theta(\log(n))$
		\State $Y \leftarrow R^{-1} G \in \R^{d+1 \times \xi}$
		\State $C \leftarrow (F \otimes \1 - \1 \otimes F )Y$
		\State Reshape i-th column $C_{*,i}$ into $(F{Y_{*,i}} \1^\top - \1 (Y_{*,i})^\top F^\top) \in \R^{n \times n}$
		\State Generate $Z \in \R^{t \times n}$ i.i.d. $p$-stable for $t = \Theta(\log(n))$ \Comment{Definition \ref{def:dense_sketch}}
		\State  For all $(i,l) \in [\xi] \times [n]$, set $$\sigma_{i,l} \leftarrow \median_{\tau \in [t]}\left( \frac{\left|(Z (F{Y_{*,i}} \1^\top - \1 (Y_{*,i})^\top F^\top)_{\tau,l}\right|^p}{(\text{median}(\mathcal{D}_p))^p}\right)$$ \Comment{Indyk Estimator \cite{i06}}
		\State Set $W^{(i,l)}  \leftarrow (FY_{*,i} \1^\top - \1 Y_{*,i}^\top F^\top )_{*,l} = FY_{*,i} - \1 (FY)_{l,i} \in \R^n $
		\For{$ j = 1,\dots,\Theta(r)$}
		\State Sample $(i,l)$ from distribution $\sigma_{i,l}/\left(\sum_{i',l'}\sigma_{i',l'}\right)$.
		\EndFor
		\State $T \leftarrow$ multi-set of samples $(i,l)$ 
		\State Generate $S_0\in \R^{k \times n}$ $S \in \R^{k' \times n}$ count-sketches for heavy hitters with $k =r^{O(1)}, k' = k^{O(1)}$. \Comment{Definition \ref{def:csHH}}
		\State Generate $u_1,\dots,u_n$ i.i.d. exponential variables.
		\State $D \leftarrow  \texttt{Diag}(1/u_1^{1/p},\dots,1/u_n^{1/p}) \in \R^{n \times n}$.
		
		\For{each sample $(i,l) \in T$}
		\State Compute $S_0 W^{(i,l)}$ and obtain set of heavy hitters $Q_0^{(i,l)} \subset[n]$
		\State Compute $W^{(i,l)}_j$ exactly for all $j \in Q_0^{(i,l)}$, to obtain true heavy hitters $H^{(i,l)}$.
		\State Compute  $$\alpha_{i,l} \leftarrow \median_{\tau \in [t]}\left( \frac{\left|Z_{\tau,*} W^{(i,l)}- \sum_{\zeta \in H^{(i,l)}} Z_{\tau,\zeta} W_\zeta^{(i,l)} \right|^p }{(\text{median}(\mathcal{D}_p))^p}\right)$$ 
		\If{ With prob $1- \alpha_{(i,l)}/\sigma_{(i,l)}$, sample a heavy item $j^* \leftarrow j$}
		\State Sample a heavy item $j^* \leftarrow j$ from the distribution $|W^{(i,l)}_j|^p/\sum_{j \in H_{(i,l)} }|W^{(i,l)}_j|^p$.
		\State \Return The row $((l-1)n + j^*)$ \Comment{Note that $C_{(l-1)n + j^*, *}$ contains $W^{(i,l)}_{j^*}$} 
		\Else
		\State Randomly partition $[n]$ into $\Omega_1,\Omega_2,\dots,\Omega_\eta$ with $\eta = \Theta(r^4/\eps^4)$.
		\State Sample $t \sim [\eta]$ uniformly at random.
		\State Compute $S(DW^{(i,l)})_{\Omega_t \setminus H^{(i,l)}}$, and set $Q^{(i,l)} \subset \Omega_t \setminus H^{(i,l)}$ of heavy hitters.
		\State $j^* \leftarrow \arg \max_{j \in Q^{(i,l)}} (DW^{(i,l)})_j$
		\State \Return The row $((l-1)n + j^*)$ \Comment{Note that $C_{(l-1)n + j^*, *}$ contains $W^{(i,l)}_{j^*}$} 
		\EndIf
		\EndFor
		
		\EndProcedure
	\end{algorithmic}
\end{algorithm*}

\subsection{Proof of Fast Sampling Lemma \ref{lem:samplefast}}\label{sec:samplefastallpairs}
We now provide a full proof of the main technical lemma of Section \ref{sec:allpairs}. The sampling algorithm is given formally in Algorithm \ref{alg:sample}. The following proof of Lemma \ref{lem:samplefast} analyzes each step in the process, demonstrating both correctness and the desired runtime bounds. \\
{\bf Lemma \ref{lem:samplefast} }{\it   
	Given $R \in \R^{(d+1) \times (d+1)}$ and $F = [A,b] \in \R^{n \times (d+1)}$, there is an algorithm that, with probability $1-\delta$ for any $\delta > n^{-c}$ for any constant $c$, produces a diagonal matrix $\Pi \in \R^{n^2 \times n^2}$ such that $\Pi_{i,i} = 1/
	\wt{q_i}^{1/p}$ with probability $q_i \geq \min\{1,r \|M_{i,*}\|_p^p/\|M\|_p^p\}$ and  $\Pi_{i,i} = 0$ otherwise, where $r = \poly(d/\eps)$ and $M = (F \otimes \1 - \1 \otimes F)R^{-1}$, and $\tilde{q_i} = (1 \pm \eps^2)q_i$ for all $i \in [n^2]$. The total time required is $\wt{O}(\nnz{A} + \poly(d/\eps))$.}
\begin{proof}
	Our proof proceeds in several steps. We analyze the runtime concurrently with out analysis of correctness. 
	\paragraph{Reducing the number of Columns of $R^{-1}$. }
	We begin by generating a matrix $G \in \R^{(d+1) \times \xi}$ of i.i.d. $\mathcal{N}(0,1/\sqrt{\xi})$ Gaussian random variables. We then compute $Y \leftarrow R^{-1}G$ in $\wt{O}(d^2)$ time. We first claim that it suffices to instead $\ell_p$ sample rows of $C = (F \otimes \1 - \1 \otimes F)Y = MG$. Note that each entry $|C_{i,j}|^p$ is distributed as $g^p\|M_{i,*}\|_2^p$ where $G$ $\mathcal{N}(0,1/\sqrt{\xi})$ Gaussian, which holds by the $2$-stability of Gaussian random variables.  Note that $\ex{|g|^p} = \Theta(1/\xi)$, so $\ex{\|C_{i,*}\|_p^p} = \|M_{i,*}\|_2^p$, and by sub-exponential concentration (see Chapter 2 of \cite{wain19}), we have that $\|C_{i,*}\|_p^p = (1 \pm 1/10) \|M_{i,*}\|_2^p$ with probability $1-1/\poly(n)$, and we can union bound over this holding for all $i \in [n^2]$. By relationships between the $p$ norms, we have $\|M_{i,*}\|_p^p/d < \|M_{i,*}\|_2^p < \|M_{i,*}\|_p^p$, thus this changes the overall sampling probabilities by a factor between $\Theta(1/d^2)$ and $\Theta(d^2)$. Thus, we can safely oversample by this factor (absorbing it into the value of $r$) to compensate for this change in sampling probabilities.

	\paragraph{Sampling a row from $C$. }
	To sample a row from $C$, the approach will be to sample an entry $C_{i,j}$ of $C$ with probability proportional to  $\|C_{i,j}\|_p^p/\|C\|_p^p$. For every $(i,j)$ sampled, we sample the entire $i$-th row of $j$, so that the $j$-th row is indeed sampled with probability proportional to its norm. Thus, it suffices to sample entries of $C$ such that each $C_{i,j}$ is chosen with probability at least $\min\{1,r \|C_{i,j}\|_p^p/\|C\|_p^p\}$.
	First note that the $i$-th column of $C = (F \otimes \1 - \1 \otimes F)Y$ can be rearranged into a $n \times n$ matrix via Lemma \ref{lem:reshaping}, given by  $(F Y_{*,i} \1^\top - \1 Y^\top_{*,i} F^\top)$. To $\ell_p$ sample a coordinate from $C$, it suffices to first $\ell_p$ sample a column of one of the above matrices, and then $\ell_p$ sample an entry from that column. 
	
	To do this, we first compute $FY \in \R^{n \times \xi}$, which can be done in time $\wt{O}(\nnz{A})$ because $Y$ only has $\xi = \Theta(\log(n))$ columns. We then compute $Z(FY_{*,i}\1^\top - \1 Y_{*,i}^\top F^\top ) \in \R^{1 \times n}$ for all $i \in [d]$, where $Z \in \R^{1 \times n}$ is a fixed vector of i.i.d. $p$-stable random variables. Once $FY$ has been computed, for each $i \in [\xi]$ it takes $O(n)$ time to compute this $n$-dimensional vector, thus the total time required to compute all $\xi$ vectors is $\wt{O}(n)$. We repeat this process $t=O(\log(n))$ times with different $p$-stable vectors $Z^1,\dots,Z^\top$, and take the median of each coordinate of $Z^{j}(FY_{*,i} \1^\top - \1 Y_{*,i}^\top F^\top ) \in \R^{n}$, $j \in [t]$, divided by the median of the $p$-stable distribution (which can be approximated to $(1 \pm \eps)$ error in $\poly(1/\eps)$ time, see Appendix A.2 of \cite{kane2010exact} for details of this). This is done in Step $7$ of Algorithm \ref{alg:sample}. It is standard this this gives a $(1 \pm 1/10)$ approximation the the norm $\|(FY_{*,i} \1^\top - \1 Y_{*,i}^\top F^\top )_{*,l}\|_p$ for each $i \in [d], l \in [n]$ with probability $1-1/\poly(n)$ (See the Indyk median estimator \cite{i06}).  
	
	Now let $\sigma_{i,l}$ be our estimate of the norm $\|(FY_{*,i} \1^\top - \1 Y_{*,i}^\top F^\top )_{*,l}\|_p$, for all $i \in [\xi]$ and $l \in [n]$. We now  sample a columns $(i,l) \in [\xi] \times [n]$, where each $(i,l)$ is chosen with probability $\sigma_{i,l}/(\sum_{i',l'} \sigma_{i',l'})$. We repeat this process $\Theta(r)$ times, to obtain a \textit{multi-set} $T \subset [\xi] \times [n]$ of sampled columns $(i,l)$. We stress that $T$ is a multi-set, because the same column $(i,l)$ may have been chosen for multiple samples, and each time it is chosen we must independently sample one of the entries of that column. For any $(i,l) \in T$, we define $W^{(i,l)} =(FY_{*,i} \1^\top - \1 Y_{*,i}^\top F^\top )_{*,l} = (FY_{*,i} - \1 (FY)_{l,i} )$.
	
	\paragraph{$\ell_p$ Sampling an entry from $W^{(i,l)}$.}
	Now fix any $(i,l) \in T$. We show how to $\ell_p$ sample an entry from the vector $W^{(i,l)} \in \R^{n}$. In other words, for a given $j \in [n]$, we want to sample $W_j^{(i,l)} \in [n]$ with probability  at least $r|W_j^{(i,l)}|^p/\|W^{(i,l)}\|_p^p$. We do this in two steps. First, let $S_0 \in \R^{k \times n}$ be  the count-sketch for heavy hitters of definition \ref{def:csHH}, where $k = \poly(r)$. Note that we can compute $S_0F Y$ and $S_0 \1$ in time $\tilde{O}(n)$, since $FY \in \R^{n \times \xi}$. Once this is done, for each $(i,l) \in T$ we can compute $S_0 W^{(i,l)}$ in $O(k)$ time by computing $(S_0 \1 (FY)_{l,i})$ (note that $FY$ and $S_0\1$ are already computed), and subtracting it off from the $i$-th column of $S_0 F Y$, so the total time is $\tilde{O}(n + \poly(d/\eps))$ to compute $S_0W^{(i,l)}$ for all $(i,l) \in |T|$. Now we can obtain the set $Q^{(i,l)}_0 \subset [n]$ containing all the $\tilde{\Omega}(1/\sqrt{k})$ heavy hitters in $W^{(i,l)}$ with high probability. We can then explicitly compute the value of $W^{(i,l)}_j$ for all $j \in Q^{(i,l)}_0$, and exactly compute the set 
	\begin{align*}
	H^{(i,l)} = \left\{ j \in [n] ~\Big|~ | W^{(i,l)}_j|^p > \beta/r^{16} \|W^{(i,l)}\|_p^p \right\},
	\end{align*} 
	all in $\tilde{O}(k)$ time via definition \ref{def:csHH}, where $\beta > 0$ is a sufficiently small constant (here we use the fact that $|x|_p \geq |x|_2$ for $p \leq 2$). Note that we use the same sketch $S_0$ to compute all sets $Q^{(i,l)}_0$, and union bound the event that we get the heavy hitters over all $\poly(d/\eps)$ trails. 
	
	We are now ready to show how we sample an index from $W^{(i,l)}$. First, we estimate the total $\ell_p$ norm of the items in $[n_i] \setminus H^{(i,l)}$ (again with the Indyk median estimator), and call this $\alpha_{(i,l)}$ as in Algorithm \ref{alg:sample}, which can be computed in $O(|H^{(i,l)}|)$ additional time (by subtracting off the $|H^{(i,l)}|$ coordinates $ZW^{(i,l)}_\zeta$ for all heavy hitters $\zeta \in H^{(i,l)}$ from our estimate $\sigma_{(i,l)}$), and with probability $\alpha_{(i,l)}/\sigma_{(i,l)}$, we choose to sample one of the items of $H^{(i,l)}$, which we can then sample from the distribution $ |W^{(i,l)}_j|^p/(\sum_{j \in H^{(i,l)}} |W^{(i,l)}_j|^p)$. Since all the $\sigma_{(i,l)}, \alpha_{(i,l)}$'s were constant factor approximations, it follows that we sampled such an item with probability $\Omega(r|W^{(i,l)}_{j'}|^p/\|C\|_p^p)$ as needed.  Otherwise, we must sample an entry from $[n] \setminus H^{(i,l)}$. 
	To do this, we first randomly partition $[n]$ into $\eta = \Theta(r^4/\eps^4)$ subsets $\Omega_1,\Omega_2,\dots,\Omega_\eta$. 
	
	We now make the same argument made in the proof of Lemma \ref{lem:lpsampling}, considering two cases. In the first case, the $\ell_p$ mass of $[n] \setminus H^{(i,l)}$  drops by a $1/r^2$ factor after removing the heavy hitters. In this case, $\alpha_{(i,l)}/\sigma_{(i,l)}=O(1/r^2)$, thus we will never \textit{not} sample a heavy hitter with probability $1 - O(1/r)$, which we can safely ignore. Otherwise, the $\ell_p$ drops by less than a $1/r^2$ factor, and it follows that all remaining items must be at most a $\beta/r^{14}$ heavy hitter over the remaining coordinates $[n] \setminus H^{(i,l)}$ (since if they were any larger, they would be  $\beta/r^{16}$ heavy hitters in $[n]$, and would have been removed in $H^{(i,l)}$). Thus we can assume we are in the second case. 	
	So by Chernoff bounds, we have $\sum_{j \in \Omega_t} |W_j^{(i,l)}|_p = \Theta(\frac{1}{\eta} \sum_{j \in [n] \setminus H^{(i,l)}} |W_j^{(i,l)}|_p)$ with probability greater than $1- \exp(-\Omega(r))$. We can then union bound over this event occurring for all $t \in [\eta]$ and all $(i,l) \in T$. Given this, if we uniformly sample a $t \sim [\eta]$, and then $\ell_p$ sample a coordinate $j \in \Omega_t$, we will have sampled this coordinate with the correct probability up to a constant factor. We now sample such a $t$ uniformly from $\eta$.

	To do this, we generate a diagonal matrix $D \in \R^{n \times n}$, where $D_{i,i} = 1/u_{i}^{1/p}$, where $u_1,\dots,u_n$ are i.i.d. exponential random variables. For any set $\Gamma \subset [n]$, let $D_{\Gamma}$ be $D$ with all diagonal entries $(j,j)$ such that $j \notin \Gamma$ set equal to $0$. Now let $S \in \R^{k' \times n}$ be a second instance of count-sketch for heavy hitters of definition \ref{def:csHH}, where we set $k' = \poly(k)$ from above. It is known that returning $j^* = \arg \max_{j \in \Omega_t \setminus H^{(i,l)}} |(DW^{(i,l)})_j|$ is a perfect $\ell_p$ sample from $\Omega_t \setminus H^{(i,l)}$ \cite{jayaram2018perfect}. Namely, $\pr{j^* = j} =  |W^{(i,l)}_j|^p/\|W_{\Omega_t \setminus H^{(i,\ell)}}\|_p^p$ for any $j \in\Omega_t \setminus H^{(i,\ell)}$ . Thus it will suffice to find this $j^*$. To find $j^*$, we compute $S(DW)_{\Omega_t \setminus H^{(i,\ell)}}$. Note that since $FY$ has already been computed, to do this we need only compute $SD_{\Omega_t \setminus H^{(i,\ell)}}FY_{*,i}$ and $SD_{\Omega_t \setminus H^{(i,\ell)}}\1(FY)_{\ell,i}$, which takes total time $\tilde{O}(|\Omega_t \setminus H^{(i,\ell)}|)=\tilde{O}(n/\eta)$. We then obtain a set $Q^{(i,l)} \subset \Omega_t \setminus H^{(i,\ell)}$ which contains all $j$ with $|(DW^{(i,l)})_j| \geq \tilde{\Omega}(1/\sqrt{k'})\|(DW)_{\Omega_t \setminus H^{(i,\ell)}}\|_2$.
	
	As noted in \cite{jayaram2018perfect}, the value $\max_{j \in \Omega_t \setminus H^{(i,l)}} |(DW^{(i,l)})_j|$ is distributed identically to 
	\begin{align*}
	\|W_{\Omega_t \setminus H^{(i,\ell)}}\|_p/u^{1/p}
	\end{align*}
	where $u$ is again an exponential random variable. Since exponential random variables have tails that decay like $e^{-\Omega(x)}$, it follows that with probability $1-\exp(-\Omega(r))$ that we have 
	\begin{align*}
	\max_{j \in \Omega_t \setminus H^{(i,l)}} |(DW^{(i,l)})_j| = \Omega(\|W_{\Omega_t \setminus H^{(i,\ell)}}\|_p/r),
	\end{align*}
	and we can then union bound over the event that this occurs for all $(i,l) \in T$ and $\Omega_t$. Given this it follows that $(DW^{(i,l)})_{j^*}  = \Omega(\|W_{\Omega_t \setminus H^{(i,\ell)}}\|_p/r)$. Next, for any constant $c \geq 2$, by Proposition 1 of \cite{jayaram2018perfect}, we have $\|((DW)_{\Omega_t \setminus H^{(i,\ell)}})_{\texttt{tail}(c\log(n))}\|_2= \tilde{O}(\|W^{(i,l)}_{\Omega_t \setminus H^{(i,\ell)}}\|_p)$ with probability $1-n^{-c}$, where for a vector $x$, $x_{\tail(t)}$ is $x$ but with the top $t$ largest (in absolute value) entries set equal to $0$. Since there are at most $c\log(n)$ coordinates in $(DW)_{\Omega_t \setminus H^{(i,\ell)}}$ not counted in $((DW)_{\Omega_t \setminus H^{(i,\ell)}})_{\texttt{tail}(c\log(n))}$, and since $(DW)_{j^*}$ is the largest coordinate in all of $(DW)_{\Omega_t \setminus H^{(i,\ell)}}$, by putting together all of the above it follows that $(DW)_{j^*}$ is a $\tilde{\Omega}(1/r)$-heavy hitter in $(DW)_{\Omega_t \setminus H^{(i,\ell)}}$. Namely, that $|(DW)_{j^*}| \geq \tilde{\Omega}(\|(DW)_{\Omega_t \setminus H^{(i,\ell)}}\|_2/r)$. Thus, we conclude that $j^* \in Q^{(i,l)}$.

	Given that $j^* \in Q^{(i,l)}$, we can then compute the value $(DW^{(i,l)})_j = D_{j,j}(F Y_{j,i} - FY_{l,i})$ in $O(1)$ time to find the maximum coordinate $j^*$. Since $|Q^{(i,l)}| = O(k') = O(\poly(d/\eps))$, it follows that the total time required to do this is $\tilde{O}(n/\eta + \poly(d/\eps))$. Since we repeat this process for each $(i,l) \in T$, and $|T| = \Theta(r)$ whereas $\eta = \Theta(r^4)$, it follows that the total runtime for this step is $\tilde{O}(n/r^3 + \poly(d/\eps))$. By \cite{jayaram2018perfect}, the result is a perfect $\ell_p$ sample from $(DW)_{\Omega_t \setminus H^{(i,\ell)}}$, which is the desired result. 
	To complete the proof, we note that the only complication that remains is that we utilize the same scaling matrix $D$ to compute the sampled used in each of the columns $W^{(i,l)}$ for each $(i,l) \in T$. However, note that for $t \neq t'$, we have that $D_{\Omega_t}$ and $D_{\Omega_t}$ are independent random variables. Thus it suffices to condition on the fact that the $t \in [\eta]$ that is sampled for each of the $|T|$ repetitions of sampling a $\Omega_t$ are distinct. But this occurs with probability at least $1/r$, since $|T| = \Theta(r)$ and $\eta = \Theta(r^4)$. Conditioned on this, all $|T|$ samples are independent, and each sample is an entry $C_{i,j}$ of $C$ such that the probability that a given $(i,j)$ is chosen is $|C_{i,j}|^p/\|C\|_p^p$. Repeating this sampling $\Theta(r)$ times, we get that each $C_{i,j}$ is sampled with probability at least $\min\{1, r|C_{i,j}|^p/\|C\|_p^p\}$, which completes the proof of correctness. Note that the dominant runtime of the entire procedure was $\wt{O}(\nnz(A) + \poly(d/\eps))$ as stated, and the probability of success was $1- \exp(-r) + 1/\poly(n)$, which we can be amplified to any $1-\delta$ for $\delta > 1/n^c$ for some constant $c$ by increasing the value of $r$ by $\log(1/\delta)$ and the number of columns of the sketch $G$ to $\log(1/\delta)$, which does not effect the $\wt{O}(\nnz(A) + \poly(d/\eps))$ runtime.

	\paragraph{Computing approximations $\wt{q_i}$ for $q_i$.}
	It remains now how to compute the approximate sampling probabilities $\tilde{q}_i$ for $\Theta(r)$ rows of $C$ that were sampled. Note that to sample an entry, in $C$, we first sampled the $n \times 1$ submatrix $W^{(i,l)}$ of $C$ which contained it, where the probability that we sample this submatrix is known to us. Next, if the entry of $C$ was a heavy hitter in $W^{(i,l)}$, we exactly compute the probability that we sample this entry, and sample it with this probability. If the entry $j$ of $W^{(i,l)}$ is not a heavy hitter, we first sample an $\Omega_t$ uniformly with probability exactly $1/\eta$. The last step is sampling a coordinate from $W^{(i,l)}_{\Omega_t \setminus H^{(i,l)}}$ via exponential scaling. However, we do not know the exact probability of this sampling, since this will be equal to $|W^{(i,l)}_j|^p/ \|W^{(i,l)}_{\Omega_t \setminus H^{(i,l)}}\|_p^p$, and we do not know $\|W^{(i,l)}_{\Omega_t \setminus H^{(i,l)}}\|_p^p$ exactly. Instead, we compute it approximately to error $(1 \pm \eps^2)$ as follows. For each $(i,l) \in T$ and $\alpha = 1,2,\dots,\Theta(\log(n)/\eps^4)$, we compute $Z^{(\alpha)}W^{(i,l)}_{\Omega_t \setminus H^{(i,l)}}$, where $Z \in \R^{1 \times |\Omega_t \setminus H^{(i,l)}|}$ is a vector of $p$-stable random variables. Again, we use the Indyk median estimator \cite{i06}, taking the median of these $\Theta(\log(n)/\eps^4)$ repetitions,  to obtain 
	an estimate of $\|W^{(i,l)}_{\Omega_t \setminus H^{(i,l)}}\|_p^p$ with high probability to $(1 \pm \eps^2)$ relative error. Each repetition requires $O(|\Omega_t \setminus H^{(i,l)}|)$ additional time, and since $|\Omega_t \setminus H^{(i,l)}||T| = o(\eps^4n/r^3)$, it follows that the total computational time is at most an additive $o(n)$, thus computing the $\tilde{q}_i$'s to error $(1 \pm \eps^2)$ does not effect the overall runtime.

\end{proof}

\newpage
\section{Low Rank Approximation of Kronecker Product Matrices}\label{sec:lowrank}
We now consider low rank approximation of Kronecker product matrices. Given $q$ matrices $A_1,A_2,\dots$, $A_q$, where $A_i \in \R^{n_i \times d_i}$, the goal is to output a rank-$k$ matrix $B \in \R^{n \times d}$, where $n = \prod_{i=1}^q n_i$ and $d = \prod_{i=1}^q d_i$, such that $\| B - A \|_{F} \leq (1 + \eps) \OPT_k$,
 where $\OPT_k = \min_{\texttt{rank}-k~A'} \| A' - A \|_{F}$, and $A =\otimes_{i=1}^q A_i$. Our approach employs the Count-Sketch distribution of matrices \cite{cw13,w14}. A count-sketch matrix $S$ is generated as follows. Each column of $S$ contains exactly one non-zero entry. The non-zero entry is placed in a uniformly random row, and the value of the non-zero entry is either $1$ or $-1$ chosen uniformly at random. 

Our algorithm is as follows. We sample $q$ independent Count-Sketch matrices $S_1,\dots S_q$, with $S_i \in \R^{k_i \times n_i}$, where $k_1 = \dots = k_q =  \Theta(qk^2/\eps^2)$. We then compute $M = (\otimes_{i=1}^q S_i)A$, and let $U \in \R^{k \times d}$ be the top $k$ right singular vectors of $M$. Finally, we output $B = A U^\top  U$ in factored form (as $q+1$ separate matrices, $A_1,A_2,\dots,A_q,U$), as the desired rank-$k$ approximation to $A$. The following theorem demosntrates the correctness of this algorithm. 

\begin{theorem}\label{thm:LRAmain}
For any constant $q \geq 2$, there is an algorithm which runs in time $O(\sum_{i=1}^q \nnz(A_i) + d\poly(k/\eps))$ and outputs a rank $k$-matrix $B$ in factored form such that 
\begin{align*}
\| B - A \|_{F} \leq (1+\eps) \OPT_k
\end{align*} 
with probability $9/10$. \footnote{To amplify the probability, we can sketch $A$ and $A U^\top U$ with a sparse JL matrix (e.g., Lemma \ref{lem:subspace_amp} with $k_i = \Theta(q k^2/(\delta\eps^2))$ for each $i$)  in input sparsity time to estimate the cost of a given solution. We can then repeat $\log(1/\delta)$ times and take the minimum to get failure probability $1-\delta$.} 
\end{theorem}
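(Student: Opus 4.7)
The plan is to establish that the random matrix $\Pi = \bigotimes_{i=1}^q S_i$ functions as a Projection-Cost-Preserving Sketch (PCP) for the input $A = \bigotimes_{i=1}^q A_i$. Concretely, I will show that for every rank-$k$ orthogonal projection $P$ on the $d$-dimensional row space, $\bigl|\|A - AP\|_F^2 - \|\Pi A - \Pi AP\|_F^2\bigr| \leq \eps\, \OPT_k^2$ up to an additive shift independent of $P$. Given this PCP property, a now-standard argument (Cohen--Elder--Musco--Musco--Persu) implies that when $U \in \R^{k \times d}$ collects the top-$k$ right singular vectors of $\Pi A$, the matrix $B = A U^\top U$ is a $(1+\eps)$-approximate rank-$k$ approximation of $A$ in Frobenius norm, which is exactly the theorem's guarantee (after a constant rescaling of $\eps$).

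The key technical step is lifting the known PCP property of a single Count-Sketch (with sketching dimension $\Theta(k^2/\eps^2)$) to the Kronecker product $\Pi$. The natural route is to decompose PCP into two sufficient conditions on $\Pi$: an $(\eps/\sqrt{k})$-approximate matrix product guarantee, and a $(1 \pm 1/2)$-subspace embedding for the top-$2k$ right singular subspace of $A$. For a single Count-Sketch both are standard. For $\Pi$, I plan a hybrid argument that swaps the $S_i$ one factor at a time for the identity, bounding the incremental error at each swap by invoking the single-factor guarantee of $S_i$ applied to an appropriately flattened matrix whose rows are organized along the $i$-th tensor axis. Since each swap contributes error $O(\eps/q)$ thanks to $S_i$ having dimension $\Theta(qk^2/\eps^2)$, summing over the $q$ swaps gives overall error $O(\eps)$. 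The independence of the $S_i$'s and the identity $\E[S_i^\top S_i] = I$ yield $\E[\Pi^\top \Pi] = \bigotimes_i \E[S_i^\top S_i] = I$, which anchors all of the second-moment computations.

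For the runtime, the central identity is $\Pi A = \bigotimes_{i=1}^q (S_i A_i)$, so $M$ factors as a Kronecker product of the small matrices $B_i = S_i A_i \in \R^{k_i \times d_i}$, each computable in $O(\nnz(A_i))$ time since $S_i$ has one non-zero per column. The right singular vectors of $\bigotimes_i B_i$ are precisely the Kronecker products $\bigotimes_i v_{i, j_i}$ of the right singular vectors of the individual $B_i$, with singular values $\prod_i \sigma_{i, j_i}$. Hence I (a) compute the SVD of each $B_i$ in $d_i \poly(k/\eps)$ time, (b) select the top-$k$ tuples $(j_1,\dots,j_q)$ by largest product of singular values using a heap in $\poly(k/\eps)$ time, and (c) form the $k$ rows of $U$ as explicit Kronecker products in $O(kd)$ time. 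The output is the factored form $(A_1,\dots,A_q, U)$, from which $B = A U^\top U$ is recoverable. Total time is $O(\sum_i \nnz(A_i) + d\poly(k/\eps))$ as claimed.

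The main obstacle is the PCP analysis for $\Pi = \bigotimes_i S_i$. A single Count-Sketch has matrix-product error $O(1/\sqrt{m})$; naive tensoring of the single-factor moment bounds produces cross-factor interactions in the mixed moments $\E[\Pi^\top \Pi \Pi^\top \Pi]$ that can degrade effective error exponentially in $q$ unless the sketching dimensions $k_i$ scale with $q$. I expect the hybrid argument to contain this degradation and justify exactly the sketching dimensions $k_i = \Theta(qk^2/\eps^2)$ used by the algorithm, but obtaining a clean accounting of the $(\eps,k)$ dependence while simultaneously controlling the $9/10$ failure probability will require careful bookkeeping of the accumulated error across the $q$ hybrid steps, and the subspace-embedding step requires showing that the Kronecker product $\Pi$ preserves norms on the (unknown) top-$2k$ right singular subspace of $A$, which typically uses a net argument combined with the AMP bound.
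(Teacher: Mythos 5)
Your high-level skeleton matches the paper's: show that $S = \bigotimes_i S_i$ is a projection-cost-preserving sketch for $A = \bigotimes_i A_i$, invoke the Cohen et al.\ framework to conclude that projecting $A$ onto the top-$k$ right singular subspace of $SA$ gives a $(1+\eps)$-optimal rank-$k$ approximation, and exploit $SA = \bigotimes_i (S_i A_i)$ for the runtime. Where you diverge is in the key technical lemma. The paper proves the PCP property by establishing that $\bigotimes_i S_i$ satisfies the $(O(\eps/k),1/10,2)$-JL moment property via an explicit induction on $q$ that bounds $\E[\|(Q\otimes T)x\|_2^4]$ for unit $x$, tracking how the fourth moment grows by $O(1/\bar{k})$ per factor; it then gets approximate matrix product from Kane--Nelson, and gets the subspace embedding directly from AMP applied to $U^\top S^\top S U - I_k$ (no net argument needed, since a Frobenius-norm bound immediately controls the eigenvalues). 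Your proposed hybrid/factor-swap argument aims at the same AMP bound from a different angle, but it has a subtlety you'd need to handle: after swapping, you are asking $S_j$ to be an AMP for the \emph{random} matrices $T_{j-1}A$ and $T_{j-1}B$, so you must condition on $T_{j-1}$ and then control $\E[\|T_{j-1}A\|_F\|T_{j-1}B\|_F]$, which itself requires second-moment control over the earlier sketches --- precisely the bookkeeping the paper's direct moment induction packages cleanly. The hybrid route likely works with care, but it is not obviously simpler. On the runtime side, your observation that the right singular vectors of $\bigotimes_i (S_i A_i)$ factor as Kronecker products of per-factor singular vectors, so one can SVD each $S_iA_i$ separately and select the top $k$ products via a heap, is a valid and arguably cleaner alternative to the paper's approach of explicitly forming $M$ and taking its full SVD; both yield the stated $d\,\mathrm{poly}(k/\eps)$ bound for constant $q$, but yours makes the $d$-linear dependence more transparent.
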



\begin{proof}
	By Lemma \ref{lem:pcp}, we have $(1-\eps)\|A - AP\|_F^2 \leq \|M -M P\|_F^2 + c \leq (1+\eps)\|A - AP\|_F^2$ for all rank $k$ projection matrices $P$. In particular, we have
	\[  \min_{P}  (1+\eps)\|A - AP\|_F^2 + c = (1 + \eps)\OPT_k^2 \]
	where the minimum is taken over all rank $k$ projection matrices. The minimizer $P$ on the LHS is given by the projection onto the top $k$ singular space of $M$. Namely, $MP = M U^\top  U$ where $U$ is the top $k$ singular row vectors of $M$. Thus $ \|M - M U^\top U \|_F^2 + c \leq (1+\eps)\OPT_k^2$. Moreover, we have $\|A - A U^\top U \|_F^2 \leq (1 + 2\eps) ( \|M - M U^\top  U \|_F^2 + c) \leq (1+4\eps)\OPT_k^2$. Thus $\|A - A U^\top U \|_F \leq (1 + O(\eps))\OPT_k$ as needed. 
	
	For runtime, note that we first must compute $M= (\otimes_{i=1}^q S_i)(A_1 \otimes A_2) = S_1 A_1 \otimes \dots \otimes S_q A_q$. Now $S_i A_i$ can be computed in $O(\nnz(A_i))$ time for each $i$ \cite{cw13}. One all $S_iA_i$ are computed, their Kronecker product can be computed in time $O(q k_1 k_2 \cdots k_q d)= \poly(k d/\eps)$.  Given $M \in \R^{ k_1\cdots k_q \times d}$, the top $k$ singular vectors $U$ can be computed by computing the SVD of $M$, which is also done in time $\poly(k d/\eps)$. Once $U$ is obtained, the algorithm can terminate, which yields the desired runtime.
\end{proof}

To complete the proof of the main theorem, we will need to prove Lemma \ref{lem:pcp}. To do this, we begin by introducing two definitions.

\begin{definition}[Subspace embedding]
	A random matrix $S$ is called a $\eps$-subspace embedding for a rank $k$ subspace $\mathcal{V}$ we have simultaneously for all $x \in \mathcal{V}$ that 
	\begin{align*}
	\|Sx\|_2 = (1 \pm \eps)\|x\|_2.
	\end{align*} 
\end{definition}

\begin{definition}[Approximate matrix product]
	A random matrix $S$ satisfies the $\eps$-approximate matrix product property if, for any fixed matrices $A,B$, of the appropriate dimensions, we have 
	\begin{align*}
	\pr{\|A^\top S^\top S B - A^\top B\|_F \leq \eps \|A\|_F\|B\|_F} \geq 9/10.
	\end{align*} 
\end{definition}

We now show that $S$ is both a subspace embedding and satisfies approximate matrix product, where $S= \otimes_{i=1}^q S_i$ and $S_i \in \R^{k_i \times n_i}$ are count-sketch matrices.
\begin{lemma}\label{lem:subspace_amp}
	If $S =(\otimes_{i=1}^q S_i)$ with $S_i \in \R^{k_i \times n_i}$, $k_1 = k_2 = \dots = k_q =  \Theta(qk^2/\eps^2)$, then $S$ is an $\eps$-subspace embedding for any fixed $k$ dimensional subspace $\mathcal{V} \subset \R^n$ with probability $9/10$, and also satisfies the $(\eps/k)$-approximate matrix product property.
\end{lemma}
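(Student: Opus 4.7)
The plan is to reduce both claims of the lemma to a single second moment estimate: for every fixed matrix $Y$ with $n = \prod_{i=1}^q n_i$ rows,
\[
\var\bigl[\|SY\|_F^2\bigr] \;\leq\; C \Bigl(\sum_{i=1}^q \tfrac{1}{k_i}\Bigr)\, \|Y\|_F^4
\]
for an absolute constant $C > 0$. With $k_i = \Theta(qk^2/\eps^2)$ this is at most $O(\eps^2/k^2)\|Y\|_F^4$. Combined with the identity $\E[S^\top S] = \bigotimes_i \E[S_i^\top S_i] = I_n$, which gives $\E[\|SY\|_F^2]=\|Y\|_F^2$, a standard polarization shows $\var[\langle Sx, Sy\rangle] \leq O(\eps^2/k^2)\|x\|_2^2\|y\|_2^2$, and summing over columns yields $\E[\|A^\top S^\top SB - A^\top B\|_F^2] \leq O(\eps^2/k^2)\|A\|_F^2\|B\|_F^2$. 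Markov then gives the $(\eps/k)$-approximate matrix product bound with probability $9/10$, which is the second conclusion. Specializing to $A = B = U$ for an orthonormal basis $U \in \R^{n \times k}$ of $\mathcal{V}$ gives $\|U^\top S^\top S U - I_k\|_F \leq (\eps/k)\|U\|_F^2 = \eps$; since the operator norm is dominated by the Frobenius norm this yields the $\eps$-subspace embedding property.

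To establish the second moment bound I would induct on $q$. The base case $q = 1$ is the classical matrix variance bound for a single count-sketch: by $4$-wise independence of the hashes and signs, $\|S_1 M\|_F^2 = \|M\|_F^2 + \text{noise}$ with noise of mean zero and variance at most $(2/k_1)\|MM^\top\|_F^2 \leq (2/k_1)\|M\|_F^4$, where the last inequality uses the PSD bound $\|MM^\top\|_F^2 \leq \tr(MM^\top)^2$.

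For the inductive step, write $S = S_1 \otimes S'$ with $S' = \bigotimes_{i=2}^q S_i$, and reshape each column $y_j$ of $Y$ as an $n_1 \times (n_2 \cdots n_q)$ matrix $Y^{(j)}$, so that $\|(S_1 \otimes S') y_j\|_2^2 = \|S_1 Y^{(j)}(S')^\top\|_F^2$. Concatenating the matrices $Y^{(j)}(S')^\top$ horizontally into $\bar Z$ and the matrices $(Y^{(j)})^\top$ into $\bar Y^\top$ yields the identities $\|SY\|_F^2 = \|S_1 \bar Z\|_F^2$ and $\|\bar Z\|_F^2 = \|S'\bar Y^\top\|_F^2$, with $\|\bar Y\|_F = \|Y\|_F$. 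Conditioning on $S'$ and applying the law of total variance,
\[
\var\bigl[\|SY\|_F^2\bigr] = \var_{S'}\bigl[\|S'\bar Y^\top\|_F^2\bigr] + \E_{S'}\bigl[\var_{S_1}[\|S_1\bar Z\|_F^2 \mid S']\bigr].
\]
The first term is bounded by the inductive hypothesis applied to $\bar Y^\top$, yielding $O(\sum_{i\geq 2}1/k_i)\|Y\|_F^4$. For the second term the base case gives the conditional bound $O(1/k_1)\|\bar Z\|_F^4$, and the inductive hypothesis (together with the cumulative assumption $\sum_{i\geq 2}1/k_i \leq 1$) controls the fourth moment via $\E[\|\bar Z\|_F^4] = \E[\|S'\bar Y^\top\|_F^4] \leq \var + (\E)^2 \leq 2\|Y\|_F^4$, so this term is at most $O(1/k_1)\|Y\|_F^4$. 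Summing closes the induction and gives the desired bound.

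The main obstacle will be the reshaping and concatenation bookkeeping in the inductive step: one must carefully verify that, after conditioning on $S'$, the quantity $\|S_1 \bar Z\|_F^2$ really reduces to a single count-sketch squared norm to which the base case applies, and that the horizontal concatenations preserve Frobenius norms so the inductive hypothesis feeds back in the other direction. Beyond this, the reductions from the second-moment bound to approximate matrix product (via Markov after polarization) and from AMP to the subspace embedding property (via operator norm $\leq$ Frobenius norm) are entirely standard, and no property of count-sketch beyond its matrix variance bound and unbiasedness is used.
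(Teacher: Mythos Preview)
Your proposal is correct and follows the same overall strategy as the paper: prove by induction on $q$ that $S = \otimes_i S_i$ satisfies a JL second-moment bound (equivalently, $\var[\|Sx\|_2^2] = O(\eps^2/k^2)$ for unit $x$), then derive approximate matrix product and from that the subspace embedding, exactly as you outline. The difference is only in how the inductive step is executed. The paper writes $S = Q \otimes T$ with $Q$ a single count-sketch, expands $\E[\|(Q\otimes T)x\|_2^4]$ explicitly in the hash/sign variables $\delta_Q(\cdot,\cdot),\sigma_Q(\cdot)$ of $Q$, and does a case analysis over four indices $\tau_1,\tau_2,\tau_3,\tau_4$ to obtain $\E[\|Sx\|_2^4]\leq 1+O(q)/\bar k$, which matches your $\var\leq C\sum_i 1/k_i$. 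Your route via the reshaping identity $\|(S_1\otimes S')y\|_2^2 = \|S_1 Y (S')^\top\|_F^2$ and the law of total variance is a cleaner, more modular packaging of the same computation: it uses the single count-sketch variance bound as a black box and avoids the index case analysis entirely. The closing reductions are identical in both proofs (the paper cites Theorem~21 of \cite{kn14} for the passage from JL moment to AMP, which is the polarization-and-sum argument you spell out, and then specializes to $A=B^\top=U$ and uses $\|\cdot\|_2\leq\|\cdot\|_F$ to get the subspace embedding).
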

\begin{proof}
	
	We first show that $S$ satisfies the $O(\eps/k, 1/10, 2)$-JL moment property. Here, the $(\eps,\delta,\ell)$-JL moment property means that for any fixed $x \in \R^{n}$ with $\|x\|_2 = 1$, we have $\ex{ (\|Sx\|_2^2 - 1)^2 } \leq \eps^\ell \delta$, which will imply approximate matrix product by the results of \cite{kn14}.
	
	We prove this by induction on $q$. Let $\bar{k} = k_1$. First suppose $S = (Q \otimes T)$, where $Q \in \R^{k_1 \times n_1}$ is a count-sketch, and $T \in \R^{k' \times n'}$ is any random matrix which satisfies $\ex{\|Tx\|_2^2} = \|x\|_2^2$ ($T \in \R^{k' \times n'}$ is unbiased), and $\ex{(\|Tx\|_2 - 1)^2} \leq 1 + c/\bar{k}$ for some value $c < \bar{k}$. Note that both of these properties are satisfied with $c = 4$ if $T \in \R^{k_2 \times n_2}$ is itself a count-sketch matrix \cite{cw13}. Moreover, these are the only properties we will need about $T$, so we will. We now prove that $\ex{\|(S \otimes T)x\|_2^2} = 1$ and  $\ex{\|(S \otimes T)x\|_2^4}  \leq 1 + (c+4)/\bar{k}$ for any unit vector $x$.
	
	Fix any unit $x \in \R^{n}$ now (here $n = n_1 n')$, and let $x^j \in \R^{n'}$ be the vector obtained by restricted $x$ to the coordinates $jn_1+1$ to $(j+1)n_1$. For any $i \in [k_1], j \in [k']$, let $i_j = (i-1) k' + j$. Let $h_Q(i) \in [k_1]$ denote the row where the non-zero entry in the $i$-th column is placed in $Q$. Let $\sigma_Q(i) \in \{1,-1\}$ denote the sign of the entry $Q_{h_Q(i),i}$. Let $\delta_Q(i,j)$ indicate the event that $h_Q(i) = j$. First note that 
	\begin{align*}
	\bex{ \sum_{i,j}((Q \otimes T) x)_{i_j}^2} 
	& = \bex{\sum_{i=1}^{k_1} \sum_{j=1}^{k'} \left( \sum_{\tau =1 }^{n_1} \delta_Q(\tau,i) \sigma_Q(\tau) (T x^\tau)_j \right)^2 }\\
	& = \bex{\sum_{i=1}^{k_1} \sum_{j=1}^{k'}\sum_{\tau =1 }^{n_1} \delta_Q(\tau,i) (T x^\tau)_j^2 } \\
	& = \bex{\sum_{\tau =1 }^{n_1}  \sum_{i=1}^{k_1} \sum_{j=1}^{k'}\delta_Q(\tau,i) (T x^\tau)_j^2 } \\
	& = \bex{\sum_{\tau =1 }^{n_1} \|T x^\tau\|_2^2 } \\ 
	& = \|x\|_2^2
	\end{align*}
	Where the last equality follows because count-sketch $T$ is unbiased for the base case, namely that $\ex{\|Tx\|_2^2}= \|x\|_2^2$ for any $x$ \cite{w14}, or by induction. We now compute the second moment, 
	\begin{align*}
	\bex{ \left(\sum_{i,j}((Q \otimes T)x)_{i_j}^2 \right)^2 } &= \bex{ \left(\sum_{i,j} \left( \sum_{\tau =1 }^{n_1} \delta_Q(\tau,i) \sigma_Q(\tau) (T x^\tau)_j \right)^2 \right)^2 }  \\ 
	&= \bex{ \left(\sum_{i,j}\sum_{\tau_1,\tau_2 }^{n_1} \delta_Q(\tau_1,i) \sigma_Q(\tau_1) (T x^{\tau_1})_j  \delta_Q(\tau_2,i) \sigma_Q(\tau_2) (T x^{\tau_2})_j\right)^2 }  \\ 
	&=\sum_{\tau_1,\tau_2,\tau_3,\tau_4 }^{n_1}  \mathbb{E}\left[\left(\sum_{i,j} \delta_Q(\tau_1,i) \sigma_Q(\tau_1) (T x^{\tau_1})_j  \delta_Q(\tau_2,i) \sigma_Q(\tau_2) (T x^{\tau_2})_j \right) \right.  \\
	& \left. \cdot \left(\sum_{i,j} \delta_Q(\tau_3,i) \sigma_Q(\tau_3) (T x^{\tau_3})_j  \delta_Q(\tau_4,i) \sigma_Q(\tau_4) (T x^{\tau_4})_j \right)\right]  .
	\end{align*} 
	We now analyze the above expectation. There are several cases for the expectation of each term. First, we bound the sum of the expectations when $t_1 = t_2 = t_3 = t_4$ by 
	\begin{align*}
	&\sum_{\tau =1}^{n_1}  \mathbb{E}\left[\left(\sum_{i,j} \delta_Q(\tau,i) \sigma_Q(\tau) (T x^{\tau})_j  \delta_Q(\tau,i) \sigma_Q(\tau) (T x^{\tau})_j \right) \right.  \\
	& \left. \cdot \left(\sum_{i,j} \delta_Q(\tau,i) \sigma_Q(\tau) (T x^{\tau})_j  \delta_Q(\tau,i) \sigma_Q(\tau) (T x^{\tau})_j \right)\right]  \\
	\leq & \sum_{\tau =1}^{n_1}  \mathbb{E}\left[\|T x^{\tau}\|_2^4\right]  = 1 + c/\bar{k} 
	\end{align*} 
	Where the last equation follows from the variance of count-sketch \cite{cw13} for the base case, or by induction for $q \geq 3$. 
	We now bound the sum of the expectations when $t_1 = t_2 \neq t_3 = t_4$ by 
	\begin{align*}
	&\sum_{\tau_1 \neq \tau_2}  \mathbb{E}\left[\left(\sum_{i,j} \delta_Q(\tau_1,i) \sigma_Q(\tau_1) (T x^{\tau_1})_j  \delta_Q(\tau_1,i) \sigma_Q(\tau_1) (T x^{\tau_1})_j \right) \right.  \\
	& \left. \cdot \left(\sum_{i,j} \delta_Q(\tau_2,i) \sigma_Q(\tau_2) (T x^{\tau_2})_j  \delta_Q(\tau_2,i) \sigma_Q(\tau_2) (T x^{\tau_2})_j \right)\right]  \\ 
	&\leq \sum_{\tau_1 \neq \tau_2} \ex{\|Tx^{\tau_1}\|_2^2   \|Tx^{\tau_2}\|_2^2/k_1} \\
	& \leq \ex{\|Tx\|_2^4 /k_1} \leq (1 + c/\bar{k})/k_1.
	\end{align*} 
	We can similarly bound the sum of the terms with $t_1 = t_3 \neq t_2 = t_4$ and $t_1 = t_4 \neq t_3 = t_2$ by $ (1 + c/\bar{k})/k_1$, giving a total bound on the second moment of $$ \ex{\|(Q \otimes T)x\|_2^4  }\leq 1 +c/\bar{k} +  3(1 + c/\bar{k})/k_1) \leq 1 + (4+ c)/\bar{k}$$ since any term with a $t_i \notin \{t_1 , t_2, t_3 ,t_4\} \setminus \{t_i\}$ immediately has expectation $0$. By induction, it follows that $\ex{(\otimes_{i=1}^q S_i) x\|_2^2} = 1$ for any unit $x$, and $\ex{(\otimes_{i=1}^q S_i) x\|_2^4} \leq 1 + (4q + c)/\bar{k}$, where $c$ is the constant from the original variance of count-sketch. Setting $\bar{k} = k_1 = \dots = k_q = \Theta(q k^2 / \eps^2)$ with a large enough constant, this completes the proof that $S = (\otimes_{i=1}^q S_i)$ has the  $O(\eps/k, 1/10, 2)$-JL moment property. Then by Theorem 21 of \cite{kn14}, we obtain the approximate matrix product property:
	$$ \pr{\| A^\top  S^\top S B - A^\top  B\|_F \leq O(\eps /k) \|A\|_F \|B\|_F} \geq 9/10$$ 
	for any two matrices $A,B$. Letting $A = B^\top = U$ where $U \in \R^{n \times k}$ is a orthogonal basis for any $k$-dimensional subspace $\mathcal{V} \subset \R^n$, it follows that
	\begin{align*}
	\| U^\top  S^\top S U - I_k \|_F \leq O(\eps/k) \|U\|_F^2 \leq O(\eps),
	\end{align*} 
	where the last step follows because $U$ is orthonormal, so $\|U\|_F^2 = k$. Since the Frobenius norm upper bounds the spectral norm $\|\cdot\|_2$, we have $\| U^\top S^\top S U - I_k \|_2 \leq O(\eps)$, from which it follows that 
	all the eigenvalues of  $U^\top  S^\top  S U$ are in $(1-O(\eps),1+O(\eps))$, which implies $\|S Ux\|_2 = (1 \pm O(\eps))\|x\|_2$ for all $x \in \R^n$, so for any $y \in \mathcal{V}$, let $x_y$ be such that $y = Ux_y$, and then 
	\begin{align*}
	\|Sy\|_2 = \|SUx_y\|_2=(1 \pm O(\eps))\|x_y\|_2 =(1 \pm O(\eps))\|Ux_y\|_2 =(1 \pm O(\eps))\|y\|_2,
	\end{align*}
	which proves that $S$ is a subspace embedding for $\mathcal{V}$ (not the second to last inequality holds because $U$ is orthonormal).
\end{proof}
Finally, we are ready to prove Lemma \ref{lem:pcp}.

\begin{lemma}\label{lem:pcp}
	Let $S =(\otimes_{i=1}^q S_i)$ with $S_i \in \R^{k_i \times n_i}$, $k_1 = k_2 = \dots = k_q =  \Theta(qk^2/\eps^2)$. Then with probability $9/10$ $SA$ is a \textit{Projection Cost Preserving Sketch} (\textsc{PCPSketch}) for $A$, namely for all rank $k$ orthogonal projection matrix $P \in \R^{d \times d }$, 
	\[(1-\eps)\|A - AP\|_F^2 \leq \|SA - SA P\|_F^2 + c \leq (1+\eps)\|A - AP\|_F^2	\]
	where $c \geq 0$ is some fixed constant independent of $P$ (but may depend on $A$ and $SA$).
\end{lemma}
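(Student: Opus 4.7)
I would follow the standard projection-cost-preserving sketch argument, which splits $A$ into its best rank-$k$ SVD approximation plus the residual and then bounds the sketched cost term-by-term using the two properties established by Lemma \ref{lem:subspace_amp}: an $\eps$-subspace embedding for any fixed $k$-dimensional subspace, and the $(\eps/k)$-approximate matrix product property. Write $A = A_k + A_{-k}$, where $A_k$ is the best rank-$k$ approximation, so that $A_k^\top A_{-k} = 0$ and the column spaces of $A_k$ and $A_{-k}$ are orthogonal. For every rank-$k$ orthogonal projection $P \in \R^{d \times d}$, expand
\begin{align*}
\|S(A - AP)\|_F^2 &= \|S(A_k - A_k P)\|_F^2 + \|S(A_{-k} - A_{-k}P)\|_F^2 \\
&\quad + 2\,\langle S(A_k - A_k P),\, S(A_{-k} - A_{-k}P)\rangle_F,
\end{align*}
while the unsketched cost $\|A - AP\|_F^2$ admits the same decomposition except that its cross term vanishes by orthogonality.

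I would then bound the three sketched terms separately. (i) Since the columns of $A_k(I-P)$ all lie in the $k$-dimensional column span of $A_k$, applying the subspace-embedding guarantee column-by-column yields $\|S(A_k - A_k P)\|_F^2 = (1 \pm \eps)\|A_k - A_k P\|_F^2$. (ii) For the residual term, the Pythagorean theorem (using that $P$ is an orthogonal projection) gives $\|S(A_{-k} - A_{-k}P)\|_F^2 = \|SA_{-k}\|_F^2 - \|SA_{-k}P\|_F^2$ and analogously $\|A_{-k} - A_{-k}P\|_F^2 = \|A_{-k}\|_F^2 - \|A_{-k}P\|_F^2$. Setting $c = \|A_{-k}\|_F^2 - \|SA_{-k}\|_F^2$ (a constant independent of $P$), the remaining defect is $|\|A_{-k}P\|_F^2 - \|SA_{-k}P\|_F^2| = |\tr((A_{-k}^\top A_{-k} - A_{-k}^\top S^\top S A_{-k})P)|$. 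Combining the trace-Frobenius Cauchy--Schwarz $|\tr(MP)| \leq \|M\|_F \|P\|_F = \sqrt{k}\|M\|_F$ with approximate matrix product at scale $\eps/k$ bounds this by $(\eps/\sqrt{k})\|A_{-k}\|_F^2 \leq \eps \|A - AP\|_F^2$. (iii) For the cross term, set $X = A_k(I-P)$ and $Y = A_{-k}(I-P)$; then $X^\top Y = 0$, so approximate matrix product gives $\|X^\top S^\top S Y\|_F \leq (\eps/k)\|X\|_F\|Y\|_F$. Since $X$ has rank at most $k$, so does $X^\top S^\top S Y$, and hence $|\tr(X^\top S^\top S Y)| \leq \sqrt{k}\|X^\top S^\top S Y\|_F \leq O(\eps)\|A - AP\|_F^2$ after an AM--GM step.

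Summing the three estimates and using $\|A_k - A_k P\|_F^2 \leq \|A - AP\|_F^2$ gives $\|SA - SAP\|_F^2 + c = (1 \pm O(\eps))\|A - AP\|_F^2$; a constant rescaling of $\eps$ completes the proof, and the $9/10$ failure probability follows by union bounding the subspace-embedding event and the two instances of approximate matrix product from Lemma \ref{lem:subspace_amp} (with the constants in that lemma enlarged slightly to absorb the union bound). The main obstacle is term (ii): one cannot apply a subspace embedding directly to $A_{-k}$, whose rank may be far larger than $k$. The trick is to rewrite the error as a trace against $P$, exploit $\|P\|_F = \sqrt{k}$ via Cauchy--Schwarz, and invoke approximate matrix product at the finer scale $\eps/k$; this is precisely why Lemma \ref{lem:subspace_amp} is calibrated with a matrix-product error a factor of $k$ smaller than the subspace-embedding error.
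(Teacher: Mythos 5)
Your proposal reconstructs, from scratch, the argument behind Theorem~12 of Cohen, Elder, Musco, Musco, and Persu~\cite{cemmp15}; the paper's own proof simply cites that theorem after checking via Lemma~\ref{lem:subspace_amp} that $S$ has the two required properties. So you are following the same route, just unfolding the black box.

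However, step~(iii) has a genuine gap. You set $X = A_k(I - P)$ and $Y = A_{-k}(I - P)$ and then invoke the approximate-matrix-product property directly on the pair $(X, Y)$. But AMM is a probabilistic guarantee of the random sketch for a \emph{fixed} pair of matrices: once a single realization of $S$ is drawn (and it must then work simultaneously for \emph{every} rank-$k$ projection $P$), you cannot re-invoke AMM on $P$-dependent matrices; that would require a union bound over an uncountable family. This is exactly the subtlety that makes PCP-sketch proofs nontrivial, and you handle it correctly in step~(ii), where AMM is applied once to the fixed pair $(A_{-k}, A_{-k})$ and then propagated to all $P$ via the deterministic inequality $|\tr(MP)| \le \sqrt{k}\,\|M\|_F$. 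Step~(iii) should follow the same pattern: apply AMM once to the fixed pair $(U_k, A_{-k})$, where $U_k$ is an orthonormal basis for the column span of $A_k$, to get $\|U_k^\top S^\top S A_{-k}\|_F \le (\eps/k)\sqrt{k}\,\|A_{-k}\|_F = (\eps/\sqrt{k})\|A_{-k}\|_F$; then, using $A_k^\top = A_k^\top U_k U_k^\top$ and cyclicity of the trace, write the cross term as $\tr\bigl((I-P)A_k^\top U_k \cdot U_k^\top S^\top S A_{-k}(I-P)\bigr)$ and bound it by Cauchy--Schwarz as $\|A_k(I-P)\|_F \cdot \|U_k^\top S^\top S A_{-k}(I-P)\|_F \le \|A_k(I-P)\|_F \cdot (\eps/\sqrt{k})\|A_{-k}\|_F \le (\eps/\sqrt{k})\|A-AP\|_F^2$, a bound that is now deterministic in $P$. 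Your final estimate is numerically of the right order, but the intermediate claim $\|X^\top S^\top S Y\|_F \le (\eps/k)\|X\|_F\|Y\|_F$ is not justified as written. One minor additional mismatch: the lemma asserts $c \ge 0$, whereas your $c = \|A_{-k}\|_F^2 - \|SA_{-k}\|_F^2$ can be negative; this does not affect the lemma's use in Theorem~\ref{thm:LRAmain}, but it does not match the statement exactly.
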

\begin{proof}
	To demonstrate that $SA$ is a \textsc{PCPSketch}, we show that the conditions of Lemma 10 of \cite{cemmp15} hold, which imply this result. Our result follows directly from Theorem 12 of \cite{cemmp15}. Note that all that is needed (as discussed below the theorem) for the proof is that $S$ is an $\eps$-subspace embedding for a fixed $k$-dimensional subspaces, and that $S$ satisfies the $(\eps/\sqrt{k})$ approximate matrix product property. By Lemma \ref{lem:subspace_amp}, we have both $\eps$-subspace embedding for $S$ as well as a stronger $(\eps/k)$ approximate matrix product property. Thus Theorem 12 holds for the random matrix $S$ when $k_1 = k_2 = \dots = k_q = \Theta(qk^2/\eps^2)$, which completes the proof.

\end{proof}

\section{Numerical Simulations}


In our numerical simulations, we compare our algorithms to two baselines: (1) brute force, i.e., directly solving regression without sketching, and (2) the methods based sketching developed in \cite{dssw18}. All methods were implemented in Matlab on a Linux machine. We remark that in our implementation, we simplified some of the steps of our theoretical algorithm, such as the residual sampling algorithm (Alg. \ref{alg:residualsample}). We found that in practice, even with these simplifications, our algorithms already demonstrated substantial improvements over prior work.
	

Following the experimental setup in \cite{dssw18}, we generate matrices $A_1 \in \mathbb{R}^{300\times 15}$, $A_2 \in \mathbb{R}^{300\times 15}$, and $b\in\mathbb{R}^{300^2}$, such that all entries of $A_1,A_2,b$ are sampled i.i.d. from a normal distribution. Note that $A_1\otimes A_2 \in \mathbb{R}^{90000\times 225}$.  We define $T_{\mathrm{bf}}$  to be the time of the brute force algorithm, $T_{\mathrm{old}}$ to be the time of the algorithms from \cite{dssw18}, and $T_{\mathrm{ours}}$ to be the time of our algorithms.  We are interested in the time ratio with respect to the brute force algorithm and the algorithms from \cite{dssw18}, defined as, $r_t = T_{\mathrm{ours}}/T_{\mathrm{bf}}$, and $r'_t = T_{\mathrm{ours}}/T_{\mathrm{old}}$. The goal is to show that our methods are significantly faster than both baselines, i.e., both $r_t$ and $r'_t$ are significantly less than $1$.  

We are also interested in the quality of the solutions computed from our algorithms, compared to the brute force method and the method from \cite{dssw18}. Denote the solution from our method as $x_{\mathrm{our}}$, the solution from the brute force method as $x_{\mathrm{bf}}$, and the solution from the method in \cite{dssw18} as $x_{\mathrm{old}}$. We define the relative residual percentage $r_e $and $r'_e$ to be:
\[r_e = 100 \frac{\lvert \|\mathcal{A} x_{\mathrm{ours}} - b\| - \|\mathcal{A}x_{\mathrm{bf}} - b \|\rvert}{\|\mathcal{A} x_{\mathrm{bf}} - b \|}, \; \; \; r'_e = 100 \frac{\lvert \|\mathcal{A} x_{\mathrm{old}} - b\| - \|\mathcal{A}x_{\mathrm{bf}} - b \|\rvert}{\|\mathcal{A} x_{\mathrm{bf}} - b \|}\]
	Where $\mathcal{A} = A_1 \otimes A_2$. The goal is to show that $r_e$ is close zero, i.e., our approximate solution is comparable to the optimal solution in terms of minimizing the error $\|\mathcal{A}x - b\|$. 

Throughout the simulations, we use a moderate input matrix size so that we can accommodate the brute force algorithm and to compare to the exact solution. We consider varying values of $m$, where $M$ denotes the size of the sketch (number of rows) used in either the algorithms of \cite{dssw18} or the algorithms in this paper. We also include a column $m/n$ in the table, which is the ratio between the size of the sketch and the original matrix $A_1 \otimes A_2$. Note in this case that $n = 90000$. 

\paragraph{Simulation Results for $\ell_2$}

{\small
\begin{table}
\caption{Results for $\ell_2$ and $\ell_1$-regression with respect to different sketch sizes $m$.}
\label{ta:1}
\begin{center}
	\begin{tabular}{|c|c|c|c|c|c|c|}
\hline
	 & $m$ & $m/n$ & $r_e$ &  $r'_e$ & $r_t$  & $r'_t$\\
	\hline 
 \multirow{3}{1em}{$\ell_2$} 
 & 8100 &.09 & 2.48\% & 1.51\%  & 0.05 & 0.22 \\
  & 12100 & .13 & 1.55\%  & 0.98\% &  0.06 & 0.24   \\
  & 16129 & .18 &  1.20\% & 0.71\% & 0.07 & 0.08   \\
	\hline
	\multirow{3}{1em}{$\ell_1$} & 2000 & .02 &7.72\% & 9.10\% & 0.02 & 0.59 \\
	& 4000 & .04 & 4.26\% & 4.00\% & 0.03 & 0.75 \\
	& 8000 & .09 & 1.85\% & 1.6\% & 0.07 & 0.83 \\
	& 12000 &  .13 & 1.29\%  & 0.99\% &  0.09 & 0.79  \\
	& 16000 & .18 & 1.01\% & 0.70\% & 0.14 &  0.90   \\
		\hline
\end{tabular}
\end{center}
\label{tab:l2}
\end{table}
}

We first compare our algorithm, Alg.~\ref{alg:l2reg}, to baselines under the $\ell_2$ norm.  In our implementation, $\min_{x}\|A x - b\|_2$ is solved by Matlab backslash $A\textbackslash b$.
 Table~\ref{tab:l2} summarizes the comparison between our approach and the two baselines. The numbers are averaged over $5$ random trials. First of all, we notice that our method in general provides slightly less accurate solutions than the method in \cite{dssw18}, i.e., $r_e > r'_e$ in this case. However, comparing to the brute force algorithm, our method still generates relatively accurate solutions, especially when $m$ is large, e.g., the relative residual percentage w.r.t. the optimal solution is around $1\%$ when $m \approx 16000$. On the other hand, as suggested by our theoretical improvements for $\ell_2$, our method is significantly faster than the method from \cite{dssw18}, consistently across all sketch sizes $m$. Note that when $m \approx 16000$, our method is around $10$ times faster than the method in \cite{dssw18}. For small $m$, our approach is around $5$ times faster than the method in \cite{dssw18}.

\paragraph{Simulation Results for $\ell_1$}


We compare our algorithm, Alg.~\ref{alg:l1}, to two baselines under the $\ell_1$-norm. The first is a brute-force solution, and the second is the algorithm for \cite{dssw18}. For $\min_x \|Ax-b\|_1$, the brute for solution is obtained via a Linear Programming solver in Gurobi \cite{gurobi}.   
Table~\ref{tab:l2} summarizes the comparison of our approach to the two baselines under the $\ell_1$-norm. The statistics are averaged over $5$ random trials.  Compared to the Brute Force algorithm, our method is consistently around $10$ times faster, while in general we have relative residual percentage around 1\%.  Compared to the method from \cite{dssw18}, our approach is consistently faster (around $1.3$ times faster). Note our method has slightly higher accuracy than the one from \cite{dssw18} when the sketch size is small, but slightly worse accuracy when the sketch size increases.



\section{Entry-wise Norm Low trank Approximation}\label{app:lowTrank}


We now demonstrate our results for low $\trank$ approximation of arbitrary input matrices. Specifically, 
 we study the following problem, defined in \cite{l00}: given $A \in \R^{n^2 \times n^2}$, the goal is to output a $\trank$-$k$ matrix $B \in \R^{n^2 \times n^2}$ such that
\begin{align}
\| B - A \|_{\xi} \leq \alpha \cdot \OPT .
\label{eq:trank_k}
\end{align}
for some $\alpha \geq 1$, where 
$
\OPT = \min_{\trank-k~A'} \| A' - A \|_{\xi},
$, 
where the $\trank$ of a matrix $B$ is defined as the smallest integer $k$ such that $B$ can be written as a summation of $k$ matrices, where each matrix is the Kronecker product of $q$ matrices with dimensions $n \times n$:  $B = \sum_{i=1}^k U_i \otimes V_i$,
where $U_i, V_i \in \R^{n \times n}$.  

Using Lemma~\ref{lem:reshaping}, we can rearrange the entries in $A \in \R^{n^2 \times n^2}$ to obtain $\ov{A} \in \R^{n^2 \times n^2}$, where the $(i+n(j-1))$'th row of $\bar{A}$ is equal to $vec((A_1)_{i,j}A_2)$, and also vectorize the matrix $U_i \in \R^{n \times n}$ and $V_i \in \R^{n \times n}$ to obtain vectors $u_i \in \R^{n^2}, v_i \in \R^{n^2}$. Therefore, for any entry-wise norm $\xi$ we have
\begin{align*}
\left\| \sum_{i=1}^k U_i \otimes V_i - A \right\|_{\xi} = \left\| \sum_{i=1}^k u_i v_i^\top - \ov{A} \right\|_{\xi}
\end{align*}

\begin{lemma}[Reshaping for Low Rank Approximation]\label{lem:reshaping_low_rank}
	There is a one-to-one mapping $\pi : [n] \times [n] \times [n] \times [n] \rightarrow [n^2] \times [n^2]$ such that for any pairs $(U,u) \in \R^{n \times n} \times \R^{n^2}$ and $(V,v) \in \R^{n \times n} \times \R^{n^2}$, if $U_{i_1,j_1} = u_{i_1 + n ( j_1-1)}$ and $V_{i_1,j_1} = v_{i_1 + n ( j_1-1)}$, then we have for $i_1,i_2,j_1,j_2$
	\begin{align*}
	(U \otimes V)_{ i_1 + n (i_2-1) , j_1 + n(j_2 -1) } = ( u \cdot v^\top )_{ \pi(i_1,i_2,j_1,j_2) }
	\end{align*}
	where $U \otimes V \in \R^{n^2 \times n^2}$ and $u v^\top \in \R^{n^2 \times n^2}$.
\end{lemma}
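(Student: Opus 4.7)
The plan is to make the permutation $\pi$ explicit by reading the Kronecker product entry as a single product of two scalars and then matching each scalar to the corresponding entry of $u$ or $v$. First I will apply the definition of the Kronecker product in the same convention used in Lemma~\ref{lem:reshaping}, which for $U, V \in \R^{n \times n}$ gives
\[
(U \otimes V)_{\,i_1 + n(i_2 - 1),\; j_1 + n(j_2 - 1)} \;=\; U_{i_1, j_1}\, V_{i_2, j_2}.
\]
At this point the left-hand side of the claimed identity is already a product of a single entry of $U$ and a single entry of $V$.

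Next I will translate these two scalar factors into entries of $u$ and $v$ using the column-major vectorization hypothesis $U_{a, b} = u_{a + n(b - 1)}$ and $V_{a, b} = v_{a + n(b - 1)}$. This yields $U_{i_1, j_1}\, V_{i_2, j_2} = u_{i_1 + n(j_1 - 1)} \cdot v_{i_2 + n(j_2 - 1)} = (u v^\top)_{i_1 + n(j_1 - 1),\; i_2 + n(j_2 - 1)}$, which immediately suggests defining
\[
\pi(i_1, i_2, j_1, j_2) \;:=\; \bigl(\, i_1 + n(j_1 - 1),\; i_2 + n(j_2 - 1)\,\bigr) \;\in\; [n^2] \times [n^2].
\]
With this choice the required identity $(U \otimes V)_{i_1 + n(i_2 - 1),\, j_1 + n(j_2 - 1)} = (u v^\top)_{\pi(i_1, i_2, j_1, j_2)}$ holds by construction.

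Finally, I will verify that $\pi$ is a bijection $[n]^4 \to [n^2] \times [n^2]$. Both sets have cardinality $n^4$, so it suffices to check injectivity. Given any $(r, c) \in [n^2] \times [n^2]$, the uniqueness of the base-$n$ representation of $r - 1$ and $c - 1$ recovers unique pairs $(i_1, j_1)$ and $(i_2, j_2)$ in $[n]^2$, which shows that $\pi$ is injective and hence a bijection.

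There is essentially no technical obstacle in this lemma; the only subtlety is bookkeeping, namely keeping straight which of the index pairs $(i_1, i_2)$ and $(j_1, j_2)$ plays the ``inner'' (faster-changing) role in each vectorization. The paper's Kronecker convention places the $U$-indices on the inside of the row and column of $U \otimes V$, whereas the column-major vectorizations $u$ and $v$ place the row index on the inside; once these two conventions are aligned, the identity and bijectivity both follow immediately from the computation above.
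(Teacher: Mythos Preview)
Your proof is correct and follows exactly the same computation as the paper: apply the Kronecker-product entry formula $(U\otimes V)_{i_1+n(i_2-1),\,j_1+n(j_2-1)}=U_{i_1,j_1}V_{i_2,j_2}$, then the vectorization hypothesis, then the outer-product identity to read off $\pi(i_1,i_2,j_1,j_2)=(i_1+n(j_1-1),\,i_2+n(j_2-1))$. You actually go slightly beyond the paper by explicitly verifying that $\pi$ is a bijection, which the paper asserts in the statement but does not argue in its proof.
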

\begin{proof}
	
	We have{
		\begin{align*}
		(U \otimes V)_{ i_1 + n (i_2-1) , j_1 + n( j_2 -1 ) } 
		= & ~ U_{i_1,j_1} V_{i_2,j_2} \\
		= & ~ u_{i_1 + n ( j_1-1)} \cdot v_{i_2 +  n( j_2 -1 ) }\\
		= & ~ (u v^\top)_{ i_1 +  n ( j_1 - 1 ) , i_2 + n ( j_2 - 1 ) }
		\end{align*}}
	where the first step follows from the definition of $\otimes$ product, the second step follows from the connection between $U,V$ and $u,v$, the last step follows from the outer product.
\end{proof}

Therefore, instead of using $\trank$ to define low-rank approximation of the $\otimes$ product of two matrices, we can just use the standard notion of rank to define it since both $B$ and $A'$ can be rearranged to have rank $k$. 

\begin{definition}[Based on Standard Notion of Rank]
	Given two matrices $A_1, A_2 \times \R^{n \times n}$, let $\ov{A} \in \R^{n^2 \times n^2}$ denote the re-shaping of $A_1 \otimes A_2$. The goal is to output a rank-$k$ matrix $\ov{B}$ such that
	\begin{align*}
	\| \ov{B} - \ov{A} \|_{\xi} \leq \alpha \OPT_{\xi,k}
	\end{align*}
	where $\OPT_{\xi,k} = \min_{\rank-k~\ov{A}'} \| \ov{A}' - \ov{A} \|_{\xi}$.
\end{definition}
In other words, $\ov{B}$ can be written as $\ov{B} = \sum_{i=1}^k u_i v_i^\top$ where $u_i , v_i$ are length $n^2$ vectors.

\subsection{Low-rank Approximation Results}

Combining the low-rank reshaping Lemma~\ref{lem:reshaping_low_rank} with the main input-sparsity low-rank approximation of \cite{cw13}, we obtain our Frobenius norm low rank approximation result.
\begin{theorem}[Frobenius norm low rank approximation, $p=2$]
	For any $\epsilon \in (0,1/2)$, there is an algorithm that runs in $n^2 \poly(k/\epsilon)$ and outputs a rank-$k$ matrix $\ov{B}$ such that
	\begin{align*}
	\| \ov{B} - \ov{A} \|_F \leq (1+\epsilon) \OPT_{F,k}
	\end{align*}
	holds with probability at least $9/10$, where $\OPT_p$ is cost achieved by best rank-$k$ solution under the $\ell_p$-norm.
	\label{thm:forbenius_norm}
\end{theorem}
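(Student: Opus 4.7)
The plan is to reduce low-$\trank$ approximation to ordinary low-rank approximation via the reshaping already established in Lemma~\ref{lem:reshaping_low_rank}, and then invoke an off-the-shelf input-sparsity algorithm. By Lemma~\ref{lem:reshaping_low_rank}, there is a fixed bijection $\pi : [n]\times[n]\times[n]\times[n] \to [n^2] \times [n^2]$ that carries every Kronecker product $U \otimes V$ to the outer product $u v^\top$, where $u = \vect(U)$ and $v = \vect(V)$. Extending this permutation of entries linearly to matrices, any $\trank$-$k$ matrix $B = \sum_{i=1}^k U_i \otimes V_i$ is carried to a standard rank-$k$ matrix $\bar{B} = \sum_{i=1}^k u_i v_i^\top$, and conversely every rank-$k$ $\bar{B}$ reshapes back to a $\trank$-$k$ matrix. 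Since $\pi$ merely permutes the $n^4$ entries, the Frobenius norm is preserved entry-by-entry: $\|B - A\|_F = \|\bar{B} - \bar{A}\|_F$, and so $\OPT_{F,k}$ coincides for the two formulations.

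Given this isometric reduction, I would apply the input-sparsity Frobenius low rank approximation algorithm of \cite{cw13} directly to $\bar{A}$. That routine returns, in factored form, $\bar{B} = CD$ with $C \in \R^{n^2 \times k}$ and $D \in \R^{k \times n^2}$ satisfying
\[ \|\bar{B} - \bar{A}\|_F \leq (1+\eps)\min_{\text{rank-}k\, X}\|X - \bar{A}\|_F = (1+\eps)\OPT_{F,k} \]
with probability at least $9/10$, in time $O(\nnz(\bar{A}) + n^2 \poly(k/\eps))$. Crucially, there is no need to materialize $\bar{A}$ explicitly: the index map $\pi$ is computable in $O(1)$ time, so whenever the sketching step of \cite{cw13} requests the $(i,j)$ entry of $\bar{A}$, we return the corresponding entry of $A$. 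This also gives $\nnz(\bar{A}) = \nnz(A)$ for free, so the additive overhead beyond reading the input is the claimed $n^2\poly(k/\eps)$.

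To finish, I reshape $\bar{B}$ back into a $\trank$-$k$ matrix: interpret the $i$-th column of $C$ as $\vect(U_i)$ and the $i$-th row of $D$ as $\vect(V_i)^\top$, giving $B = \sum_{i=1}^k U_i \otimes V_i$, stored implicitly by keeping the $2k$ factor matrices. By the reshape isometry, $\|B - A\|_F = \|\bar{B} - \bar{A}\|_F \leq (1+\eps)\OPT_{F,k}$, which establishes the theorem. The entire argument is essentially a bookkeeping reduction; the only substantive content beyond Lemma~\ref{lem:reshaping_low_rank} is the black-box guarantee of \cite{cw13}, and no real technical obstacle arises. The only mild care needed is the on-the-fly index remapping so that we never pay to write down $\bar{A}$, which is the ``main obstacle'' only in the very weak sense that overlooking it would inflate the runtime by a constant factor from the remap bookkeeping.
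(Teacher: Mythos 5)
Your proposal matches the paper's approach exactly: the paper's entire argument is the one-sentence remark that one reshapes $A$ into $\bar{A}$ via Lemma~\ref{lem:reshaping_low_rank} (an entrywise permutation, hence a Frobenius isometry that carries $\trank$-$k$ matrices to rank-$k$ matrices and back) and then runs the input-sparsity low-rank approximation of \cite{cw13} on $\bar{A}$. Your elaboration — on-the-fly index remapping so $\bar{A}$ is never materialized, and reading off the $U_i,V_i$ factors from the columns/rows of the CW13 output — is the intended bookkeeping; if anything you state the runtime more carefully than the theorem does, since the $\nnz(\bar{A})=\nnz(A)$ term should appear alongside the $n^2\poly(k/\eps)$ overhead.
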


Similarly, using the main $\ell_p$ low rank approximation algorithm of \cite{swz17}, we have
\begin{theorem}[Entry-wise $\ell_p$-norm low rank approximation, $1 \leq p \leq 2$]
	There is an algorithm that runs in $n^2 \poly(k)$ and outputs a rank-$k$ matrix $\ov{B}$ such that
	\begin{align*}
	\| \ov{B} - \ov{A} \|_p \leq \poly(k\log n) \OPT_{p,k}
	\end{align*}
	holds with probability at least $9/10$, where $\OPT_p$ is cost achieved by best rank-$k$ solution under the $\ell_p$-norm.
	\label{thm:lp}
\end{theorem}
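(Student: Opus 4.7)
}
The plan is to reduce entry-wise $\ell_p$ low-$\trank$ approximation of $A$ to standard entry-wise $\ell_p$ rank-$k$ approximation of $\bar{A}$ via Lemma \ref{lem:reshaping_low_rank}, and then invoke the $\ell_p$ low rank approximation algorithm of \cite{swz17} as a black box. The key observation is that the reshaping $\pi : [n]^4 \to [n^2]^2$ is a bijection on entries, so any entry-wise norm (in particular $\|\cdot\|_p$ for $1\le p \le 2$) is preserved exactly under $\pi$.

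The first step is to set up the correspondence between $\trank$-$k$ and rank-$k$ matrices. For any pair $(U,V) \in \R^{n\times n} \times \R^{n \times n}$, Lemma \ref{lem:reshaping_low_rank} says that the matrix $U \otimes V \in \R^{n^2 \times n^2}$, after applying $\pi^{-1}$, equals the rank-$1$ outer product $u v^\top \in \R^{n^2 \times n^2}$, where $u=\vect(U)$ and $v=\vect(V)$. By linearity, any $\trank$-$k$ matrix $B = \sum_{i=1}^k U_i \otimes V_i$ corresponds under $\pi^{-1}$ to a rank-$k$ matrix $\bar{B}= \sum_{i=1}^k u_i v_i^\top$, and vice versa. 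Since $\pi$ is a bijection on entries,
\begin{align*}
\| B - A \|_p = \| \bar{B} - \bar{A} \|_p,
\end{align*}
and consequently the low-$\trank$ optimum $\OPT_{p,k}$ on $A$ coincides with the standard rank-$k$ $\ell_p$ optimum on $\bar{A}$.

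The second step is to apply the algorithm from \cite{swz17}, which takes an arbitrary matrix $M \in \R^{N \times N}$ and produces a rank-$k$ matrix $\bar{B}$ in factored form satisfying $\|\bar{B} - M\|_p \leq \poly(k \log N) \cdot \min_{\rank-k~M'} \|M' - M\|_p$ with probability $9/10$. Applying it to $M = \bar{A}$ (with $N = n^2$) yields $\bar{B} = \sum_{i=1}^k u_i v_i^\top$ satisfying $\|\bar{B} - \bar{A}\|_p \leq \poly(k \log n) \OPT_{p,k}$. We then reshape each pair $(u_i,v_i)$ back to $(U_i, V_i)$ with $u_i = \vect(U_i)$ and $v_i = \vect(V_i)$, and output $B = \sum_{i=1}^k U_i \otimes V_i$ in factored form. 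By the norm-preserving correspondence above, $\|B - A\|_p = \|\bar{B} - \bar{A}\|_p \leq \poly(k \log n) \OPT_{p,k}$, as required.

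The remaining checks are essentially bookkeeping rather than technical obstacles. One must verify that the factored output of \cite{swz17} can be converted to a $\trank$-$k$ representation in time linear in the output size, which follows immediately since $\pi$ and $\pi^{-1}$ are index-level bijections computable in $O(1)$ per entry, and the output has $O(k n^2)$ entries. The stated $n^2 \poly(k)$ running time is then inherited from the input-sparsity guarantee of \cite{swz17} applied to $\bar{A}$, together with the $O(k n^2)$ post-processing to emit the $\trank$-$k$ factors. The mildest subtlety is that the distortion factor $\poly(k \log n)$ comes from the bi-criteria $\ell_p$ LRA of \cite{swz17}, and we simply inherit whatever polynomial they obtain; since the problem after reshaping is just standard $\ell_p$ LRA, no new bound needs to be proved.
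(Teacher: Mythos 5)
Your proposal is correct and takes essentially the same route as the paper's (very terse) proof: reshape via Lemma \ref{lem:reshaping_low_rank} to turn trank-$k$ into rank-$k$, note that the entry-level bijection $\pi$ preserves any entry-wise norm exactly, and invoke the $\ell_p$ low rank approximation of \cite{swz17} on $\bar{A}$ as a black box. One minor label slip: you describe the \cite{swz17} routine as bi-criteria, but Theorem \ref{thm:lp} outputs a genuine rank-$k$ matrix, and the bi-criteria variant (via \cite{cgklpw17}) is only the subsequent theorem; this does not affect the argument.
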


Applying the bi-criteria algorithm of \cite{cgklpw17} gives us:
\begin{theorem}[General $p > 1$, bicriteria algorithm]
	There is an algorithm that runs in $\poly(n,k)$ and outputs a rank-$\poly(k\log n)$ matrix $\ov{B}$ such that
	\begin{align*}
	\| \ov{B} - \ov{A} \|_p \leq \poly(k\log n) \OPT_p
	\end{align*}
	holds with probability at least $9/10$, where $\OPT_{p,k}$ is cost achieved by best rank-$k$ solution under the $\ell_p$-norm.
\end{theorem}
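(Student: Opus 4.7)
The proof will follow exactly the same template used to establish Theorems \ref{thm:forbenius_norm} and \ref{thm:lp}, namely reducing low-$\trank$ approximation to standard low-rank approximation via the reshaping identity and then invoking a known entry-wise $\ell_p$ low-rank approximation result as a black box. Specifically, by Lemma \ref{lem:reshaping_low_rank}, any $\trank$-$k$ matrix $B = \sum_{i=1}^k U_i \otimes V_i \in \R^{n^2\times n^2}$ corresponds under the permutation $\pi$ to a rank-$k$ matrix $\ov{B} = \sum_{i=1}^k u_i v_i^\top \in \R^{n^2\times n^2}$, and since $\xi = \ell_p$ is an entry-wise norm that is invariant under the permutation $\pi$, we have $\|B - A\|_p = \|\ov{B} - \ov{A}\|_p$. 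Hence any $\alpha$-approximate bicriteria algorithm for standard rank-$k$ $\ell_p$ low-rank approximation on $\ov{A}$ yields an $\alpha$-approximate bicriteria algorithm for low-$\trank$ approximation of $A$.

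The plan is to first form $\ov{A} \in \R^{n^2 \times n^2}$ from $A$ in $O(n^4)$ time (just a permutation of entries). Then I would invoke the bicriteria algorithm of \cite{cgklpw17} for entry-wise $\ell_p$ low-rank approximation of an arbitrary matrix for any $p > 1$; that algorithm runs in $\poly(N,k)$ time on an $N \times N$ input and outputs, with constant probability, a matrix $\ov{B}$ of rank $\poly(k \log N)$ satisfying $\|\ov{B} - \ov{A}\|_p \leq \poly(k \log N) \cdot \OPT_{p,k}$. Applying this with $N = n^2$ gives rank $\poly(k \log n)$, approximation factor $\poly(k \log n)$, and runtime $\poly(n,k)$, exactly matching the target bound.

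The final step is to translate $\ov{B}$ back: write $\ov{B} = \sum_{i=1}^r u_i v_i^\top$ with $r = \poly(k \log n)$, then for each $i$ apply the inverse of the reshaping $\pi$ to the pair $(u_i,v_i)$ to obtain matrices $U_i, V_i \in \R^{n \times n}$ such that $B := \sum_{i=1}^r U_i \otimes V_i$ satisfies $\|B - A\|_p = \|\ov{B} - \ov{A}\|_p \leq \poly(k \log n) \cdot \OPT_{p,k}$, where the equality uses that the reshaping is an entry-wise bijection. Note also that $\OPT_{p,k}$ on $\ov{A}$ is at most the low-$\trank$ optimum on $A$, since the reshaping maps any $\trank$-$k$ solution to a rank-$k$ solution of the same cost, so the comparison is to the correct $\OPT$.

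There is no real obstacle here: the reshaping lemma is already established, the norm equivalence is immediate because $\|\cdot\|_p$ is entry-wise, and the bicriteria guarantee of \cite{cgklpw17} is used in a purely black-box fashion. The only minor subtlety to state carefully in the write-up is that $\OPT_{p,k}(\ov{A}) \leq \OPT$ (where $\OPT$ refers to the low-$\trank$-$k$ optimum on $A$), which follows directly from Lemma \ref{lem:reshaping_low_rank} applied in the forward direction to any optimal low-$\trank$ solution.
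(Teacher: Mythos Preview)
Your proposal is correct and follows exactly the paper's approach: the paper simply states ``Applying the bi-criteria algorithm of \cite{cgklpw17} gives us'' this theorem, relying on the same reshaping reduction (Lemma~\ref{lem:reshaping_low_rank}) and black-box invocation that you spell out. Your write-up is in fact more detailed than the paper's own treatment, which omits an explicit proof.
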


Finally using the low-rank approximation algorithm for general loss functions given in \cite{swz18}, we obtain a very general result. The parameters for the loss function described in the following theorem are discussed in Section \ref{app:proplowTrank}.
\begin{theorem}[General loss function $g$]
	For any function $g$ that satisfies Definition~\ref{def:intro_approximate_triangle_inequality}, \ref{def:intro_monotone_property}, \ref{def:intro_regression_property}, there is an algorithm that runs in $O(n^2 \cdot T_{\reg,g,n^2,k,n^2})$ time and outputs a $\rank$-$O(k \log n)$ matrix $\ov{B} \in \R^{n^2 \times n^2}$ such that
	\begin{align*}
	\| \ov{B} - \ov{A} \|_g \leq \ati_{g,k} \cdot \sym_{g} \cdot \reg_{g,k} \cdot O(k\log k) \cdot \OPT_{g,k},
	\end{align*}
	holds with probability $1-1/\poly(n)$.
	\label{thm:general_p}
\end{theorem}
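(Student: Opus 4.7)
The plan is to reduce low-$\trank$ approximation under the general loss $g$ to a standard low-rank approximation problem under the same loss, then invoke the generic low-rank approximation algorithm of \cite{swz18} as a black box. The reduction is exactly the reshaping device already set up in Lemma~\ref{lem:reshaping_low_rank}: reshape the input $A \in \R^{n^2 \times n^2}$ into $\ov{A} \in \R^{n^2 \times n^2}$ by sending the entry $A_{i_1 + n(i_2-1),\, j_1 + n(j_2-1)}$ to $\ov{A}_{\pi(i_1,i_2,j_1,j_2)}$, and inversely reshape any output rank-$k'$ matrix $\ov{B}$ back into a $\trank$-$k'$ matrix $B$. Since $\pi$ is a bijection and $g$ is an entry-wise norm (equivalently, its value depends only on the multiset of entry magnitudes), the reshaping exactly preserves $g$-cost, i.e., $\|B-A\|_g = \|\ov{B}-\ov{A}\|_g$, and likewise preserves $\OPT_{g,k}$.

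First I would apply this reshaping in $O(n^2)$ time to obtain $\ov{A}$; this is a trivial index permutation and does not require forming any Kronecker product. Next, I would run the general-$g$ low-rank approximation algorithm from \cite{swz18} on $\ov{A}$ with rank parameter $k$. That algorithm, under the three structural assumptions on $g$ (Definitions~\ref{def:intro_approximate_triangle_inequality}, \ref{def:intro_monotone_property}, and \ref{def:intro_regression_property}), outputs a rank-$O(k\log n)$ matrix $\ov{B}$ satisfying
\[
\| \ov{B} - \ov{A} \|_g \;\leq\; \ati_{g,k}\cdot \sym_{g} \cdot \reg_{g,k} \cdot O(k\log k) \cdot \OPT'_{g,k},
\]
where $\OPT'_{g,k}$ is the optimal rank-$k$ cost for $\ov{A}$. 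Its running time is $O(n^2 \cdot T_{\reg,g,n^2,k,n^2})$, matching the statement. Finally, I would undo the reshape, using Lemma~\ref{lem:reshaping_low_rank} applied column-wise to each of the $O(k\log n)$ rank-one factors $u_i v_i^\top$ of $\ov{B}$, producing the $\trank$-$O(k\log n)$ output $B = \sum_i U_i \otimes V_i$ in factored form in $O(n^2 k \log n)$ additional time.

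The correctness of the error bound then follows from the two identities produced by reshaping: $\|\ov{B}-\ov{A}\|_g = \|B - A\|_g$ and $\OPT'_{g,k} = \OPT_{g,k}$. The first holds because reshaping is an entry-preserving bijection and $g$ is an entry-wise norm. The second holds because $B \mapsto \ov{B}$ is a bijection between $\trank$-$k$ matrices in $\R^{n^2\times n^2}$ and rank-$k$ matrices in $\R^{n^2 \times n^2}$ (the forward direction is the content of Lemma~\ref{lem:reshaping_low_rank}; the reverse direction follows because every rank-$1$ matrix $u v^\top$ reshapes back to a single $U \otimes V$). Combining these with the guarantee of \cite{swz18} yields the claimed approximation factor.

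The only potential obstacle is verifying that the reshaping preserves not just the cost of the recovered solution but also the value of $\OPT_{g,k}$ itself; this is where the bijection between rank-$k$ and $\trank$-$k$ matrices of the appropriate shape is essential, and it is precisely what Lemma~\ref{lem:reshaping_low_rank} (together with its straightforward inverse) supplies. Beyond this, all of the heavy lifting---the approximate triangle inequality, monotonicity, and regression oracles required by $g$---is inherited directly from the hypotheses on $g$ and passed through to the \cite{swz18} algorithm unchanged.
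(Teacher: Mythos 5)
Your proposal is correct and matches the paper's (tacit) argument: Theorem~\ref{thm:general_p} is obtained simply by running the general-loss low-rank approximation algorithm of \cite{swz18} on the reshaped matrix $\ov{A}$, and the concluding paragraph of the section supplies the same reshaping-back step you describe via Lemma~\ref{lem:reshaping_low_rank}. One small slip: reshaping $A\in\R^{n^2\times n^2}$ costs $O(n^4)$ rather than $O(n^2)$, since there are $n^4$ entries to permute, but this is absorbed by the $O(n^2\cdot T_{\reg,g,n^2,k,n^2})$ term so the stated running time is unaffected.
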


Hence, overall, the strategy is to first reshape $A = A_1\otimes A_2$ into $\bar{A}$, then compute $\bar{B} = \sum_{i=1}^k u_i v_i^{\top}$ using any of the above three theorems depending on the desired norm, and finally reshape $u_i$ and $v_i$ back to $U_i\in\mathbb{R}^{n\times n}$ and $V_i\in\mathbb{R}^{n\times n}$. It is easy to verify that the guarantees from Theorems~\ref{thm:general_p},~\ref{thm:lp},~\ref{thm:forbenius_norm} are directly transferable to the guarantee of the $\trank-k$ approximation shown in Eq~\ref{eq:trank_k}.

\subsection{Properties for General Loss Functions}\label{app:proplowTrank} 

We restate three general properties (defined in \cite{swz18}), the first two of which are structural properties and are necessary and sufficient for obtaining a good approximation from a small subset of columns. The third property is needed for efficient running time. 

\begin{definition}[Approximate triangle inequality]\label{def:intro_approximate_triangle_inequality}
For any positive integer $n$, we say a function $g(x) : \R \rightarrow \R_{\geq 0}$ satisfies the ${\ati}_{g,n}$-approximate triangle inequality if for any $x_1, x_2, \cdots, x_n \in \R$ we have
\begin{align*}
g \left( \sum_{i=1}^n x_i \right) \leq {\ati}_{g,n} \cdot \sum_{i=1}^n g( x_i ).
\end{align*}
\end{definition}

\begin{definition}[Monotone property]\label{def:intro_monotone_property}
For any parameter $\sym_g \geq 1$, we say function $g(x) : \R \rightarrow \R_{\geq 0}$ is $\sym_g$-monotone if
for any $x,y\in\mathbb{R}$ with $0\leq |x|\leq |y|,$ we have $g(x)\leq \sym_{g}\cdot g(y).$
\end{definition}

\begin{definition}[Regression property]\label{def:intro_regression_property}
We say function $g(x) : \R \rightarrow \R_{\geq 0}$ has the $(\reg_{g,d},T_{\reg,g,n,d,m})$-regression property if the following holds: given two matrices $A \in \R^{n \times d}$ and $B \in \R^{n \times m}$, for each $i\in [m]$, let $\OPT_i$ denote $\min_{x \in \R^{ d } } \| A x - B_i \|_g $. There is an algorithm that runs in $T_{\reg,g,n,d,m}$ time and outputs a matrix $X'\in \R^{d \times m}$ such that
\begin{align*}
 \| A X'_i - B \|_g \leq \reg_{g,d} \cdot \OPT_i, \forall i \in [m] 
\end{align*}
and outputs a vector $v\in \R^d$ such that
\begin{align*}
\OPT_i \leq v_i \leq \reg_{g,d} \cdot \OPT_i, \forall i \in [m].
\end{align*}
The success probability is at least $1-1/\poly(nm)$.
\end{definition}


\section*{Acknowledgments}
The authors would like to thank Lan Wang and Ruosong Wang for a helpful discussion. The authors would like to thank Lan Wang for introducing All-Pairs Regression problem to us.


\newpage
\bibliographystyle{alpha}
\bibliography{ref}




\end{document}